\newcommand{\todoi}[1]{{}}
\newcommand{\todox}[1]{{}}
\newtheorem{theorem}{Theorem} 
\renewenvironment{proof}{{\noindent\underline{Proof}.}}{ {\ \hfill $\Box$}}
\theoremstyle{definition}
\newtheorem{problem}{Problem} 
\newtheorem{example}{Example} 
\newtheorem{observation*}{Observation} 
\newtheorem{definition}{Definition} 
\theoremstyle{plain}
\newtheorem{assumption}{Assumption} 
\newtheorem{property}{Property} 
\newtheorem{conjecture}{Conjecture}
\newcommand{\Mcomma}{\text{,}}
\newcommand{\Mdot}{\text{.}}
\newcommand{\close}{\ \hfill $\Box$}
\newcommand{\closex}{$\:\Box$}
\newcommand{\addr}[1]{{\footnotesize 
  $\tt [$\url{#1}$\tt ]$}}
\newcommand{\fnsz}{\footnotesize}
\newcommand{\ene}{\mathbb{N}}
\newcommand{\defined}{\stackrel{\text{\tiny\rm def}}{=}}
\newcommand{\assign}{\leftarrow}
\newcommand{\pair}[1]{\langle #1 \rangle}
\newcommand{\cod}[1]{\text{\rm cod}(#1)}
\newcommand{\pc}[1]{\text{\rm cplt}(#1)}
\newcommand{\CFG}{\text{\rm CFG}}
\newcommand{\fxg}{{\sf g}}
\newcommand{\fxh}{{\sf h}}
\newcommand{\codg}{{\rm cod({\sf g})}}
\newcommand{\fINP}{{\rm INP}-$f$}
\newcommand{\fOUT}{{\rm OUT}-$f$}
\newcommand{\NEITHER}{{\rm NEITHER}}
\newcommand{\tfxg}{$\sf g$}
\newcommand{\tfxgi}[1]{${\sf g}_{#1}$}
\newcommand{\tfxh}{$\sf h$}
\newcommand{\lamb}[2]{\lambda #1\,[#2]}
\newcommand{\spac}{\hspace*{5mm}}
\newcommand{\spacc}{\hspace*{16mm}}
\newcommand{\inst}{\spac{\rm Instance:}\ }
\newcommand{\quest}{\spac{\rm Question:}\ }
\newcommand{\class}{\spac{\rm Class:}\ }
\newcommand{\lra}{\Leftrightarrow}
\newcommand{\Rarr}{\Rightarrow}
\newcommand{\halts}{\!\!\downarrow}
\newcommand{\eqnref}[1]{(\ref{#1})}
\newcommand{\und}[1]{\underline{#1}}
\newcommand{\ov}[1]{{\overline{#1}}}
\newcommand{\Smn}{{${\rm S}_{\rm mn}$}}
\newcommand{\sigZero}{\Sigma_0}
\newcommand{\piZero}{\Pi_0}
\newcommand{\deltaZero}{\Delta_0}
\newcommand{\deltaOne}{\Delta_1}
\newcommand{\deltaTwo}{\Delta_2}
\newcommand{\deltaThree}{\Delta_3}
\newcommand{\deltaN}{\Delta_n}
\newcommand{\sigOne}{\Sigma_1}
\newcommand{\sigTwo}{\Sigma_2}
\newcommand{\sigThree}{\Sigma_3}
\newcommand{\sigN}{\Sigma_n}
\newcommand{\sigM}{\Sigma_m}
\newcommand{\sigNO}{\Sigma_{n-1}}
\newcommand{\piOne}{\Pi_1}
\newcommand{\piTwo}{\Pi_2}
\newcommand{\piN}{\Pi_n}
\newcommand{\piM}{\Pi_m}
\newcommand{\piNO}{\Pi_{n-1}}
\newcommand{\deltaM}{\Delta_m}
\newcommand{\prinj}{\text{\normalfont INJECTIVE}^{\text{PR}}}
\newcommand{\prinjN}{$\text{\normalfont INJECTIVE}^{\text{PR}}$}
\newcommand{\Gprinj}{\text{\normalfont INJECTIVE}}
\newcommand{\GprinjN}{$\text{\normalfont INJECTIVE}$}
\newcommand{\prsurj}{\text{\normalfont ONTO}^{\text{PR}}}
\newcommand{\prsurjN}{$\text{\normalfont ONTO}^{\text{PR}}$}
\newcommand{\Gprsurj}{\text{\normalfont ONTO}}
\newcommand{\GprsurjN}{$\text{\normalfont ONTO}$}
\newcommand{\prbij}{\text{\normalfont BIJECTIVE}^{\text{PR}}}
\newcommand{\prbijN}{$\text{\normalfont BIJECTIVE}^{\text{PR}}$}
\newcommand{\Gprbij}{\text{\normalfont BIJECTIVE}}
\newcommand{\GprbijN}{$\text{\normalfont BIJECTIVE}$}
\newcommand{\ang}[1]{{\langle #1 \rangle}}
\newcommand{\imply}{\Rightarrow}
\newcommand{\vph}{\varphi}
\newcommand{\stat}{{\sf St}}
\newcommand{\PRversion}[1]{{#1}^{\rm PR}}
\newcommand{\abs}[1]{{\lvert#1\rvert}}
\newcommand{\reduces}{{\,\leq_m\,}}
\newcommand{\HP}{\text{\rm HP}}
\newcommand{\SHP}{\text{\rm SHP}}
\newcommand{\total}{{\rm TOTAL}}
\newcommand{\totalN}{\text{\rm TOTAL}}
\newcommand{\fdom}{\text{\normalfont FINITE-DOMAIN}}
\newcommand{\fdomN}{$\text{\normalfont FINITE-DOMAIN}$}
\newcommand{\cfdomN}{$\text{\normalfont COFINITE-DOMAIN}$}
\newcommand{\recur}{\text{\normalfont RECURSIVE}}
\newcommand{\recurN}{$\text{\normalfont RECURSIVE}$}
\newcommand{\recurPR}{\text{\normalfont RECURSIVE}^{\text{\rm PR}}}
\newcommand{\recurPRN}{$\text{\normalfont RECURSIVE}^{\text{\rm PR}}$}
\newcommand{\PRpow}[1]{{#1}^{\text{\rm PR}}}
\newcommand{\GozN}{$\text{\normalfont HAS-ZEROS}$}
\newcommand{\oz}{\text{\normalfont HAS-ZEROS}^{\text{\rm PR}}}
\newcommand{\ozN}{$\text{\normalfont HAS-ZEROS}^{\text{\rm PR}}$}
\newcommand{\eozN}{$\text{\normalfont EXACTLY-ONE-ZERO}^{\text{\rm PR}}$}
\newcommand{\Geoz}{\text{\normalfont EXACTLY-ONE-ZERO}}
\newcommand{\GeozN}{$\text{\normalfont EXACTLY-ONE-ZERO}$}
\newcommand{\ozFG}{\text{\rm HAS-ZEROS-{$f$}$\cdot {\sf g}^{\text{\rm PR}}$}}
\newcommand{\ozHF}{\text{\rm HAS-ZEROS-${\sf h}\cdot${$f^{\text{\rm PR}}$}}}
\newcommand{\ozHFG}{\text{\rm HAS-ZEROS-${\sf h}\cdot${$f$}$\cdot {\sf g}^{
                    \text{\rm PR}}$}}
\newcommand{\GnoZerosN}{$\text{\normalfont NO-ZEROS}$}
\newcommand{\gtkzN}{$\text{\normalfont AT-LEAST-$k$-ZEROS}^{\text{\rm PR}}$}
\newcommand{\Ggtkz}{\text{\normalfont AT-LEAST-$k$-ZEROS}}
\newcommand{\GgtkzN}{$\text{\normalfont AT-LEAST-$k$-ZEROS}$}
\newcommand{\ekzN}{$\text{\normalfont EXACTLY-$k$-ZEROS}^{\text{\rm PR}}$}
\newcommand{\GekzN}{$\text{\normalfont EXACTLY-$k$-ZEROS}$}
\newcommand{\az}{\text{\normalfont ZERO-FUNCTION}^{\text{\rm PR}}}
\newcommand{\azN}{$\text{\normalfont ZERO-FUNCTION}^{\text{\rm PR}}$}
\newcommand{\GazN}{$\text{\normalfont ZERO-FUNCTION}$}
\newcommand{\eqNXT}{\text{\normalfont EQUAL-NEXT}^{\text{\rm PR}}}
\newcommand{\eqNXTN}{$\text{\normalfont EQUAL-NEXT}^{\text{\rm PR}}$}
\newcommand{\infzN}{$\text{\normalfont $\infty$-ZEROS}^{\text{\rm PR}}$}
\newcommand{\GinfzN}{$\text{\normalfont $\infty$-ZEROS}$}
\newcommand{\aazN}{$\text{\normalfont ALMOST-ALL-ZEROS}^{\text{\rm PR}}$}
\newcommand{\equiN}{$\text{\normalfont EQUIVALENCE}^{\text{\rm PR}}$}
\newcommand{\GequiN}{$\text{\normalfont EQUIVALENCE}$}
\newcommand{\kCodN}{$\text{$\lvert$\normalfont CODOMAIN$\rvert\!=\!k$}^{
       \text{\rm PR}}$}
\newcommand{\kCodON}{$\text{$\lvert$\normalfont CODOMAIN$\rvert\!=\!1$}^{
       \text{\rm PR}}$}
\newcommand{\GkCodN}{$\text{$\lvert$\normalfont CODOMAIN$\rvert\!=\!k$}$}
\newcommand{\GkCodON}{$\text{$\lvert$\normalfont CODOMAIN$\rvert\!=\!1$}$}
\newcommand{\GkCodTN}{$\text{$\lvert$\normalfont CODOMAIN$\rvert\!=\!2$}$}
\newcommand{\GkCodL}{$\text{$\lvert$CODOMAIN$\rvert$}$}
\newcommand{\fCodN}{$\text{\normalfont FINITE-CODOMAIN}^{\text{\rm PR}}$}
\newcommand{\infCodN}{$\text{\normalfont INFINITE-CODOMAIN}^{\text{\rm PR}}$}
\newcommand{\zeroMoreN}{$\text{\normalfont HAS-ZEROS-AND-NONZEROS}^{
            \text{\rm PR}}$}
\newcommand{\eOneN}{$\text{\rm EQUAL-AT-ONE-POINT}^{\text{\rm PR}}$}
\newcommand{\FFN}{$\text{{$ff$}}^{\text{\rm PR}}$}
\newcommand{\FFtZN}{$\text{$ff${\sc Z}{\normalfont 2}}^{\text{\rm PR}}$}
\newcommand{\FFnN}{$\text{$(f^{(n)})$}^{\text{\rm PR}}$}
\newcommand{\FFRN}[1]{$(\text{${f}^{(#1)}$})^{\text{\rm PR}}$}
\newcommand*{\dminus}{%
\mathrel{\vcenter{\offinterlineskip
\hbox{$\hspace*{1.15mm}\cdot$}\vskip-.8ex\hbox{$\hspace*{0.24mm}-$}}}}
\definecolor{dgreen}{rgb}{0.0,0.6,0.0}
\definecolor{dbrown}{rgb}{0.3,0.1,0.0}
\definecolor{dred}{rgb}{0.6,0.2,0.0}
\definecolor{ddred}{rgb}{0.6,0.4,0.2}
\definecolor{dgreen}{rgb}{0.1,0.6,0.2}
\newcommand{\eiroi}[1]{${\rm #1}^{\text{\underline{\rm o}}}$}
\newcommand{\IPR}{{\rm I}_{\rm PR}}
\newcommand{\vertSymb}[2]{%
  \mathrel{\raisebox{#2}{#1}}
}
\begin{document}

\begin{center}
{\Large \bf Primitive recursive functions versus\\
        partial recursive functions:\\
   \vspace*{2.5mm}
        comparing the degree of undecidability}

\bigskip
{\large Armando B. Matos\footnote{Address:
    Rua da Venezuela 146, \eiroi{1} Dto, 4150-743, Porto, Portugal.\\
    Affiliation: LIACC, Artificial Intelligence and Computer Science Laboratory,
    Universidade do Porto,
    Rua do Campo Alegre, 4169-007, Porto, Portugal.\\
    email: {\tt armandobcm@yahoo.com}\\
    LIACC homepage: {\tt www.liacc.up.pt}}}

\bigskip

{\large July 2006}

\end{center}

\vfill

\begin{abstract}
  Consider a decision problem whose instance is a function.
  Its degree of undecidability, measured by 
  the corresponding class of the arithmetic (or Kleene-Mostowski) hierarchy
  hierarchy, may depend on whether the instance is a partial recursive
  or a primitive recursive function. 
  A similar situation happens for results
  like Rice Theorem (which is false for primitive recursive functions).
  Classical Recursion Theory deals mainly with the properties
  of partial recursive functions.

  We study several natural decision problems related 
  to {\em primitive recursive functions} and characterise their degree of 
  undecidability.
  As an example, we show that, for primitive recursive functions,
  the injectivity problem is~$\Pi^0_1$-complete while the surjectivity 
  problem is $\Pi^0_2$-complete.

  We compare the degree of undecidability (measured by the level in the
  arithmetic hierarchy) of several primitive recursive decision problems 
  with the corresponding problems of classical Recursion Theory.
  For instance, the problem ``does the codomain of a function
  have exactly one element?'' is $\piOne$-complete
  for primitive recursive functions and belongs
  to the class $\deltaTwo\setminus(\sigOne\cup\piOne)$
  for partial recursive functions.

  An important decision problem, ``does a
  given primitive recursive function have at least one zero?'' 
  is studied in detail; the input and output restrictions that are
  necessary and sufficient for the decidability this problem~-- its 
  ``frontiers of decidability''~-- are established.
  We also study a more general
  situation in which a primitive recursive function (the instance of the
  problem) is a part of an arbitrary ``acyclic primitive
  recursive function graph''. This setting may be
  useful to evaluate the relevance of a given primitive recursive
  function as a part of a larger primitive recursive structure.
\end{abstract}

\noindent{\bf Keywords}: 
primitive recursion; undecidability; Recursion Theory.

\vfill

\newpage
\ 

\vfill

\tableofcontents

\vfill

\newpage

\section{Introduction}
The primitive recursive (PR) functions form a large and important
enumerable subclass of the recursive (total computable)
functions.  Its vastness is obvious from the fact that every total
recursive function whose time complexity is bounded by a primitive
recursive function is itself primitive recursive, while its importance
is well expressed by the Kleene Normal Form Theorem (see
\cite{kleene52,BBJ,odi}) and by the fact that every recursively
enumerable set can be enumerated by a primitive recursive
function\footnote{In the sense that, if~$L$ is recursively enumerable,
  there is a primitive recursive function~$f$ such that $L\cup
  \{0\}=\{f(i):i\in\ene\}$.}.

The meta-mathematical and algebraic properties of PR functions have
been widely studied, see for instance
\cite{kleene52,Robinson00,JRob2,ISzal1,RRob2}.  Special sub-classes (such
as unary PR functions, see for instance \cite{Robinson68,Sever1}) and
hierarchies (see for instance
\cite{grz,axt59,MeyerRitchie67a,moll,BN,GoeNeh2}
and~\cite[Ch.~VI]{yasuhara}) of PR functions were also analysed.

Efficiency lower bounds of primitive recursive
algorithms have also been studied,
see~\cite{colson,CF98,CLV,david01,moll,moschovakis,fredholm,dries03,
  valarcher,matosPRF}.

In this work we concentrate on decidability questions related to PR
functions.

Many decision problems related to PR functions are
undecidable. For instance, the question ``given a PR
function~$f(x)$, is there some~$x$ such
that~$f(x)=0$?'' is undecidable~(\cite{BBJ}). In this paper we show
that many other questions related to PR functions are undecidable, and
classify their degree of undecidability according to the arithmetic
hierarchy.  Let us mention two examples of the problems studied in
this work: ``does a given PR function have exactly one zero?''
(class $\deltaTwo\setminus (\sigOne\cup\piOne)$), and
``is a given PR function surjective (onto)?'' ($\piTwo$-complete).

The undecidability of some of these problems has undesirable
consequences. For instance, when studying reversible computations, it
would be useful to have a method for enumerating the bijective PR
functions. However, although the set of
PR functions is recursively enumerable, the set of bijective PR functions is not:
it belongs the $\piTwo$-complete class.

We also study classes of PR problems in which a PR function~$f$ is a
part of a larger PR system~$S$. We obtain a simple decidability
condition for the case in which 
\begin{equation}
  \label{fghFORM}
  S(f,\ov{x})=\fxh(f(\fxg(\ov{x})))\Mcomma
\end{equation}
where~$\fxg$ and~$\fxh$ are fixed (total) recursive functions 
($\fxg$ may have multiple
outputs), but not for the case illustrated in Figure~\ref{inside} 
(page~\pageref{inside}).
It is interesting to compare~\eqnref{fghFORM} with the result of applying
the Normal Form Theorem to this more general case  (see page~\pageref{nft})
\begin{equation}
  \label{acyclicFORM}
  S(f,\ov{x})=\fxh(\ov{x},f(\fxg(\ov{x})))\Mdot
\end{equation}

\subsection*{Organisation and contents of the paper.} 
The main concepts and results needed for the rest of the paper are
presented in Section~\ref{prel} (page~\pageref{prel}).  
The rest of the paper can be divided in two parts.
\begin{figure}[t]
  \begin{center}
  \begin{tikzpicture}[xshift=2.5cm, yshift=2.0cm, scale=1.8, auto,
    main node/.style={circle,fill=blue!20,draw}]
    \draw [red, very thick, fill=yellow!05] (0.5,1) 
           rectangle (3.5,3);
%   inputs
    \draw [->,thick]     (0.0,2.10) -- (0.5,2.10);
    \draw [thick,dotted] (0.0,2.00) -- (0.5,2.00);
    \draw [->,thick]     (0.0,1.90) -- (0.5,1.90);
    \node at (-0.2,2)     {$\ov{x}$};
    \draw [->,thick]     (3.5,2) -- (4.0,2);
    \node at (4.2,2)     {$y$};
    \node at (3.2,1.25)   {$\LARGE S$};
    \draw [thin, fill=blue!20] (2,2) circle (0.2);
    \node at (2,2) {$f$};
    \draw [->,thick]     (1.7,2.10) -- (1.83,2.10);
    \draw [thick,dotted] (1.2,2.00) -- (1.80,2.00);
    \draw [->,thick]     (1.7,1.90) -- (1.83,1.90);
    \draw [thick]        (1.7,2.10) -- (1.40,2.5);
    \draw [thick]        (1.7,1.90) -- (1.40,1.5);
    \draw [thick] (2.2,2) -- (2.3,2);
    \draw [thick] (2.3,2) circle (0.01);
    \draw [->,thick] (2.3,2) -- (2.9,2);
    \draw [->,thick] (2.3,2) -- (2.6,2.5);
    \draw [->,thick] (2.3,2) -- (2.6,1.5);
    \draw [->,thick,dotted] (1.40,2.50) -- (2.10,2.50);
    \draw [thick,dotted]    (2.10,2.50) -- (2.30,2.50);
    \draw [->,thick,dotted] (1.40,1.50) -- (2.10,1.50);
    \draw [thick,dotted]    (2.10,1.50) -- (2.30,1.50);
  \end{tikzpicture}
  \end{center} 
  \caption{A PR function~$f$ may be considered an important
    part of a complex structure~$S$ if the following problem is
    undecidable: \und{instance:}~$f$, \und{\smash{question:}} 
    ``$\exists\ov{x}:S(f,\ov{x})=0$?''. 
    We will show that an arbitrary acyclic
    system~$S$ containing exactly one occurrence of~$f$ can always be 
    reduced to the normal form illustrated in Figure~\ref{nf}
    (page~\pageref{nf}).}
\label{inside}
\end{figure}
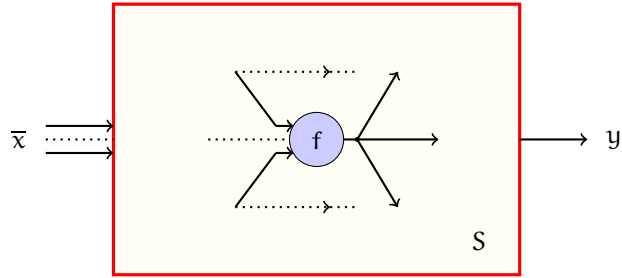
% PAGE 05

\subsubsection*{Part I, specific problems}
In Section~\ref{problems} (page~\pageref{problems}) several natural decision
problems associated with PR functions are studied and compared
(in terms of the arithmetic hierarchy level) with the corresponding partial 
recursive problems.
The list of problem includes: 
the existence of zeros, the equivalence of functions, ``domain
problems'', and problems related with the injectivity, surjectivity,
and bijectivity of PR functions. For each of these problems the degree
of undecidability is established in terms of their exact location in
the in the arithmetic hierarchy; see Figures~\ref{summary}
(page~\pageref{summary}) and~\ref{classes} (page~\pageref{classes}).
The problems we study are either recursive,
complete\footnote{Many natural decision problems seem to be complete in~$\sigN$ 
or in~$\piN$, see for instance the examples discussed in ``Specific index sets'',
Section~X.9 of~\cite{odiTwo} and also the problems studied in~\cite{simonsen}.}
in~$\sigN$ (for some $n\geq 1$), complete in~$\piN$ (for some $n\geq 1$)
or belong to the class $\deltaTwo\setminus(\sigOne\cup\piOne)$.

% PAGE 06

\subsubsection*{Part II, classes of problems}
\begin{enumerate}
\item [--] Section~\ref{composition} (page~\pageref{composition}):
  study of the ``frontiers of decidability'' of the problem
  ``does a given PR function has at least one zero?''. More precisely,
  we find the input and output restrictions that are necessary and
  sufficient for decidability; these restrictions are obtained by pre-
  or post-applying fixed PR functions to the PR function which is the
  instance of the problem.  
\item [--] Section~\ref{geni} (page~\pageref{geni}):
  two more general settings are studied:
  \begin{enumerate}
  \item [--] Section~\ref{plug} (page~\pageref{plug}):
    the PR function~$f$, instance of a decision
    problem, is located somewhere inside a larger acyclic PR
    system~$S$ (also denoted by $S(f,\ov{x})$), see
    Figure~\ref{inside} (page~\pageref{inside}).  In algebraic terms
    this means that~$S$ is obtained by {\em composing} fixed PR
    functions (or, more generally, fixed recursive functions) 
    with the instance~$f$. An example of such a system~$S$
    can be seen in see Figure~\ref{ag} (page~\pageref{ag}).
    The class of problems, parameterised by~$S$ is:
    given a PR function~$f$, is there some~$\ov{x}$ such that 
    $S(f,\ov{x})=0$?
    When this problem is undecidable the PR function~$f$ can be considered 
    an ``important part'' of~$S$.

  \item[--]  Section~\ref{itself} (page~\pageref{itself}):
    The problem of composing the PR function~$f$ (the instance of the
    problem) with itself is briefly studied.
  \end{enumerate}
\item [--] In Section~\ref{PR-ParRec} (page~\pageref{PR-ParRec}) we
  compare the degree of undecidability of the partial recursive and primitive
  recursive versions of the same decision problem,
  see Figure~\ref{four}, page~\pageref{four}.
  For $n\geq 2$ assign the level~$n-\sfrac{1}{2}$ to the class
  $\deltaN\setminus(\sigNO\cup\piNO)$.
  We conjecture that
  there exist problems whose difference of levels in the arithmetic hierarchy 
  is~0, 1, $\sfrac{3}{2}$, 2, $\sfrac{5}{2}$, 3, $\sfrac{7}{2}$\ldots\ 
  (See Figure~\ref{discreps}, page~\pageref{discreps}). 
  In the general case it may be difficult or impossible to determine that 
  difference of levels because the PR level is influenced by the 
  particularities of the PF functions.
\end{enumerate}

\todoi{Removed the ``motivation'' related to a possible PR Rice Theorem.}

\section{Preliminaries}
\label{prel}
The notation used in this paper is straightforward.  ``Iff'' denotes
``if and only if'' and~``$\defined$'' denotes ``is defined as''.  The
logical negation is denoted by~``$\neg$''. The complement of the
set~$A$ is denoted by~$\ov{A}$, while the difference between the
sets~$A$ and~$B$ is denoted by~$A\setminus B$.  
A total function $f:A\to B$ is {\em injective} (or {\em one-to-one})
if~$x\neq y$ implies~$f(x)\neq
f(y)$; it is {\em surjective} (or {\em onto}) if for every~$y\in B$
there is at least one~$x\in A$ such that~$f(x)=y$; and it is {\em
  bijective} if it is both injective and surjective.
The set $\{x:f(x)=y\}$ is denoted by $f^{-1}(y)$.
For~$n\geq 0$, the sequence of arguments
``$x_1,\,x_2,\,\ldots,\,x_n$'' is denoted by~$\ov{x}$. 
Using a standard primitive recursive bijection $\ang{x_1,x_2}$ 
of~$\ene^2$ in~$\ene$ we can represent a sequence~$\ov{x}$ of integers
$x_1,\,x_2,\,\ldots,\,x_n$ by a single integer
$x=\ang{x_1,\,x_2,\,\ldots,\,x_n}
\defined \ang{x_1,\ang{x_2,\ang{\ldots\ang{x_{n-1},x_n}\ldots}}}$;
the sequence~$\ov{x}$ will often be denoted by~$x$.

\begin{definition}
  \label{multif}
  Let $\ov{x}=\langle x_1,\ldots,x_n\rangle$ and 
  $\ov{y}=\langle y_1,\ldots,y_m\rangle$; 
  by $\ov{y}=f(\ov{x})$ we mean that the tuple~$\ov{y}$
  is a function of the tuple~$\ov{x}$.
  We say that~$f$ is a {\em multiple function} or ``multifunction''.
\close
\end{definition}
The \Smn\ Theorem~\cite{kleene52,davis,Rogers} will often be used without
mention.

%------
\subsection{Languages, reductions and the arithmetic hierarchy}
\label{langs}

\bigskip

\todoi{Time-limited Turing machines.}
\vspace*{-3mm}
We say that the functions~$f$ and~$g$ are {\em equal} and write $f=g$
if for every integer~$x$ either~$f(x)$ and~$g(x)$ are both undefined,
or they are both defined and have the same value.

\todox{Relate with Definition \ref{interval}.}
The terms ``recursive function'', ``total computable function'',
and ``total function'' will be used interchangeably.
PR denotes ``primitive recursive'', ParRec denotes ``partial recursive''. 
The ParRec function corresponding to the index~$e$ is~$\vph_e$.
Its domain is denoted by~$W_e$. ``Recursively enumerable'' is abbreviated 
by r.e.

\subsubsection*{Injectivity, surjectivity, bijectivity} 
\label{ParRecInj}
PR functions are total (computable) so that the definition at the beginning
of Section~\ref{prel} apply. We use the following definition for ParRec
functions.

\begin{definition}
\label{Ginjsurbij}
Let~$\vph_e$ be a partial recursive function. \\
\spac $\vph_e$ is injective if its restriction to~$W_e$ (a total function)
  is injective. \\
\spac $\vph_e$ is surjective if its restriction to~$W_e$ (a total
  function) is surjective.\\
\spac $\vph_e$ is bijective if it is total, injective and surjective.\\
Thus a ParRec functions can be injective and surjective without
being bijective.
\close
\end{definition}

There are other reasonable definitions of bijectivity for ParRec 
functions. For instance, $\vph_e$ is bijective if its restriction to~$W_e$
is bijective, as a total function.
The result stated in item~\ref{GbijP}, page~\pageref{GbijP}, is true for
both definitions.

\bigskip

\subsubsection*{Kleene Normal Form Theorem}
The Kleene Normal Form Theorem \cite{Rogers,kleene52,odi}
states that there is a PR predicate~$T$ and a PR function~$U$
such that for every ParRec function~$f(\ov{x})$ there is an index~$e$
satisfying 
$\vph_e(\ov{x})=f(\ov{x})=U(\mu_h:T(e,\ov{x},h))$, where
(i)~the predicate $T(e,\ov{x},h)$ checks if the integer~$h$ codes
  a (halting) computation history of~$f(\ov{x})$ and
(ii)~the function~$U(h)$ extracts the final result from the computation 
  history~$h$.

\begin{theorem}[A consequence of the Kleene Normal Form Theorem]
  \label{Knft}
  Let $U(e,\ov{x},t)$ be a Turing  machine that executes~$t$ steps 
  of the computation $\varphi_e(\ov{x})$, printing $\varphi_e(\ov{x})$ 
  whenever~$\varphi_e(\ov{x})$ is defined and~$t$ is sufficient large.
  Let $T(e,\ov{x},t)$ be a Turing  machine
  that executes~$t$ steps of the computation $\varphi_e(\ov{x})$, 
  returning~0 if the computation halted in $\leq t$ steps, and~1 
  otherwise (both~$U$ and $T$ always halt). 
  The number of steps needed for the computation of a halting computation 
  $\varphi_e(\ov{x})$ is thus $\mu_t:T(e,\ov{x},t)$.

  There are recursive functions~$T$ and~$U$
  such that for every ParRec function~$f$ there 
  is a integer~$e$ satisfying
  $\forall \ov{x}:[f(\ov{x}) = U(e,\ov{x},\mu_t:T(e,\ov{x},t))]$.
\end{theorem}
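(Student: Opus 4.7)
The plan is to instantiate $T$ and $U$ by a step-bounded universal simulator, observe directly that they are total (indeed primitive recursive), and then verify the displayed equation by splitting on whether $\varphi_e(\ov{x})$ is defined. This is essentially a repackaging of the standard Kleene Normal Form Theorem in which the opaque ``computation-history'' number $h$ of the classical statement is replaced by the more concrete ``number of simulation steps'' $t$.

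First I would fix a standard G\"odel numbering of Turing machines and a universal step-counting simulator. Using this, define $T(e,\ov{x},t)$ to return $0$ if machine $e$ halts on $\ov{x}$ within $t$ steps and $1$ otherwise, and define $U(e,\ov{x},t)$ to return the output tape contents of machine $e$ on $\ov{x}$ after $t$ simulated steps when that machine has already halted, and (say) $0$ otherwise. Both functions only perform bounded simulation, so both are total; in fact both are primitive recursive, which is the content of the classical Normal Form Theorem.

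Next I would verify the equation. Given a ParRec function $f$, pick any index $e$ with $\vph_e = f$, and fix $\ov{x}$. If $f(\ov{x})$ is defined, let $t_0$ be the exact number of steps the $e$-th machine takes on input $\ov{x}$. Then $T(e,\ov{x},t)=1$ for $t<t_0$ and $T(e,\ov{x},t)=0$ for $t\ge t_0$, so $\mu_t\!:\!T(e,\ov{x},t)=t_0$, and by construction $U(e,\ov{x},t_0)=f(\ov{x})$. If instead $f(\ov{x})$ is undefined, then $T(e,\ov{x},t)=1$ for every $t$, the unbounded $\mu$-operator on the right-hand side does not converge, and so the right-hand side is also undefined; by the convention of equality of partial functions stated in Section~\ref{prel}, the two sides agree.

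There is no real obstacle: the only point that requires a small amount of care is the bookkeeping about which value of $T$ signals ``halted'' (here $0$, so that $\mu_t$ picks out the minimum halting time) and the matching convention for $U$ on non-halting simulations, both of which are pinned down by the wording of the statement. Everything else is immediate from the existence of a primitive recursive step-bounded universal simulator.
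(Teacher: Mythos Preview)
Your argument is correct. The paper does not actually give a proof of this theorem: it states the classical Kleene Normal Form Theorem (with computation histories $h$) just before, and then states this step-counting variant as a consequence, without a proof environment. Your proposal supplies exactly the standard justification the paper omits --- defining $T$ and $U$ via a bounded universal simulator, noting they are total (indeed primitive recursive), and verifying the equation by the defined/undefined case split --- so there is nothing to compare against and nothing to correct.
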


The function $T(e,\ov{x},t)$ mentioned in Theorem~\ref{Knft} will
often be used in this work:
\begin{definition}
  \label{interval}
  Let~$\vph_e(x)$ be the partial recursive function with index~$e$.
  Let $T(e,x,t)$ be a Turing machine corresponding to the function
  $$
  T(e,x,t) =
  \left\{
    \begin{array}{ll}
      0 & \text{if the computation of $\vph_e(x)$ converges 
                exactly at step $t$}  \\
      1 & \text{otherwise.}
    \end{array}
  \right.
  $$ 
  In most cases it does not matter if we replace ``converges exactly at step $t$''
  by ``converges at some step $\leq t$''.
  For every integer~$e$, the function $T_e(x,t)=T(e,x,t)$ is assumed to be
  primitive recursive.
  When we do not need to mention the index~$e$, we use
  the function $F(x,t)\defined T_e(x,t)$.
  Sometimes we will view the functions~$F$ and~$T$ as predicates, 
  interpreting the value~1 as ``false'' (does not halt) and the value~0 
  as ``true'' (halts).
\close
\end{definition}

\bigskip

\subsubsection*{Arithmetic hierarchy}
The arithmetic hierarchy (see for instance
\cite{odi,Rogers,kleene52,kleene55,kleene43,KP54,hinman})
is used in this paper with
the purpose of measuring the degree of undecidability of PR
decision problems. This is usual in Recursion Theory, see for instance
its use for rewriting problems studied in~\cite{enderullisETAL,simonsen}.
This is similar to the classification of the difficulty of a
(decidable) problem in the polynomial hierarchy (PH). 
As we will not mention other hierarchies, the superscript~0 in the classes 
of the AH will be dropped. For instance, we will write~$\piTwo$ instead 
of~$\Pi^0_2$.
A suitable background on the AH can be found,
for instance, in references~\cite[Chapter~7 (\S29)]{hermes},
\cite[Chapter~14]{Rogers}
\cite[Chapter~10]{cooper} , and 
\cite[Chapter~IV]{soare}. In particular, for $n\geq 1$
we have the proper inclusions $\deltaN\subset\sigN$ and
$\deltaN\subset\piN$ (Hierarchy Theorem~\cite{soare}).

We denote by~$\Delta$ the set $\sigZero=\piZero=\deltaZero=\deltaOne$. 
Thus the class of recursively enumerable, but not recursive sets
is~$\sigOne\setminus \Delta$.  

Some classes of the AH are briefly described below.
\begin{center}
  \begin{tabular}{p{30mm} p{85mm}}   
  AH class   &    Decision problem   \\ \hline
  $\Delta$           & Decidable  \\
  $\sigOne$          & Semi-decidable  \\
  $\sigOne\setminus 
           \Delta$   & Undecidable, semi-decidable  \\
  $\piOne$           & Complement of a semi-decidable problem  \\
  $\piOne\setminus 
          \Delta$    & Undecidable, complement of a semi-decidable 
                       problem     \\
  $\deltaTwo=\sigTwo\cap\piTwo$
                     & Decision problems that can be described both as
      ``$\exists x\,\forall y: P(x,y)$'' and as
      ``$\forall x\,\exists y: Q(x,y)$'' 
      where~$P$ and~$Q$ are recursive predicates ($x$ and~$y$
      represent tuples of variables).
  \end{tabular}
\end{center}
The class~$\deltaTwo$ has been studied in detail, see for instance
\cite[Section IV.1]{odi},
\cite[Chapter~XI]{odiTwo}, and~\cite{putmanH};
see also \cite{shoenfield,soskovaWu,CSY,lewisAEM}.
Another characterisation of~$\deltaTwo$ is Shoenfield's Limit 
Lemma~\cite{shoenfield}; from~\cite[page 373]{odi}:
  a set~$A$ is in~$\deltaTwo$ if and only if its characteristic function
  is the limit of a recursive function~$g$, i.e.
  $c_A(x)=\lim_{s\to\infty}g(x,s)$.

\todoi{``primitive recursive'' $\to$ ``recursive'' below.}
\begin{definition} [Many-one reduction]
  \label{many-one}
  The language~$L$
  {\em reduces} (many-one) to~$M$, $L \reduces M$, if there is a recursive
  function $f:\ene\to\ene$ such that~$x\in L$ if and only if $f(x)\in M$.
\close
\end{definition}

\begin{property}
  \label{reds}
  If $L\reduces M$ then for every integer~$n\in\ene$:
  (i)~$\ov{L}\reduces \ov{M}$,
  (ii)~$M\in\sigN$ implies $L\in\sigN$,
  (iii)~$M\in\piN$ implies $L\in\piN$. 
  (For $n=0$ we get: if~$M$ is recursive, then~$L$ is recursive.)
\end{property}

A set~$P$ is {\em m-complete} in the class~${\cal C}$ 
if~$P\in {\cal C}$ and $Q\reduces P$ for every $Q\in{\cal C}$.
We will often say ``complete'' instead of ``m-complete''.
A set~$P$ is {\em m-hard} in the class~${\cal C}$ 
if $P\in {\cal C}$ and $Q\reduces P$ for every $Q\in{\cal C}$.

{\em See also the definitions and results at the beginning of 
Section~\ref{PR-ParRec}, page~\pageref{PR-ParRec}.}
The structure of the arithmetical hierarchy classes has been widely
studied, see for instance 
\cite{post44,KP54,friedberg,shoenfield,yates1,yates2,lachlan,Rogers,
odi,odiTwo,soare}.
% PAGE 09

\subsection{Decision problems}
\label{decision}
The decision problems related to partial recursive functions will be
used for two purposes:
to prove the completeness of PR problems (using Theorem~\ref{THREEcomp} below),
and to compare the degree of undecidability of PR decision problem with 
the corresponding problem of classical Recursion Theory, see
Section~\ref{parRec}, page~\pageref{parRec}.

\todoi{Decision problem denotation}
\und{Notation.} The names of the decision problems whose 
instance is a PR function (the main subject of this paper) have 
the superscript ``PR''. 
If the instance is a ParRec function,
there is no such superscript. Examples of the former and the latter
are \eozN\ and \fdomN, respectively.\close

Definition~\ref{correspondence} (page~\pageref{correspondence}) characterises
the PR decision problem that {\em corresponds} to a given ParRec problem.

We list some important ParRec decision problems,
see~\cite{BBJ,hermes,odi,davis}.
\begin{problem}
  {\rm HP}, halting problem.\\
  \inst a Turing machine~$T$ and a word~$x$.\\
  \quest does the computation $T(x)$ halt?\closex
\end{problem}

\begin{problem}
  {\rm SHP}, self halting problem.\\
  \inst a Turing machine~$T$.\\
  \quest does the computation $T(i)$ halt, where~$i$ is a
  G\"odel index of~$T$?\closex
\end{problem}

\begin{problem}
  {\rm PCP}, Post correspondence problem.\\
  \inst two finite sets of non-null words, $\{x_1,\ldots,x_n\}$ and
  $\{y_1,\ldots,y_n\}$, defined over the same
  finite alphabet with at least two letters.\\
  \quest is there a finite non-null sequence
  $i_1,\ldots,\, i_k$ such that
  $x_{i_1}x_{i_2}\ldots x_{i_k} = y_{i_1}y_{i_2}\ldots y_{i_k}$?\closex
\end{problem}

\todox{Proof deleted, see Theorem~\ref{THREEcomp},
  page~\pageref{THREEcomp}.}
\begin{problem}
  {\totalN}.\\
  \inst a ParRec function~$f(x)$.\\
  \quest Is~$f(x)$ defined for every~$x$?\closex
\end{problem}

\todoi{New partial recursive problems included\ldots}
\begin{problem}
  {\fdomN}.\\
  \inst a ParRec function~$f(x)$.\\
  \quest is the domain of~$f(x)$ finite?\closex
\end{problem}
 
\begin{problem}
  {\cfdomN}.\\
  \inst a ParRec function~$f(x)$.\\
  \quest is the domain of~$f(x)$ cofinite?\closex
\end{problem}
 
\begin{problem}
  {\GequiN}.\\
  \inst ParRec functions~$f(x)$ and~$g(x)$.\\
  \quest Is $f=g$ (same function)?\closex
\end{problem}
 
\begin{theorem}
  \label{THREEcomp} {\ } \\
  \fdomN\  is $\sigTwo$-complete.
  \totalN\ is $\piTwo$-complete.
  \cfdomN\ is $\sigThree$-complete.\\
  \GequiN\ is $\piTwo$-complete.
\end{theorem}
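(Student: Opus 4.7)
The plan is to treat each of the four claims separately, in each case deriving the upper bound by translating the defining condition into a predicate of the right logical form using the Kleene step-predicate $T$ of Definition~\ref{interval}, and establishing the lower bound by an explicit many-one reduction. The common scaffolding is the $\piTwo$-hardness of \totalN, which I would prove first by reducing from an arbitrary $\piTwo$-set: given $A=\{x:\forall y\,\exists z\,R(x,y,z)\}$ with $R$ recursive, the \Smn\ Theorem yields an index $e(x)$ with $\varphi_{e(x)}(y)=\mu z\,R(x,y,z)$, and then $x\in A$ iff $\varphi_{e(x)}$ is total.

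The upper bounds are one-line translations. \fdomN\ reads ``$\exists n\,\forall x\,\forall t\,(x>n\to T(e,x,t)\neq 0)$'' and so lies in $\sigTwo$; \totalN\ reads ``$\forall x\,\exists t\,T(e,x,t)=0$'' and so lies in $\piTwo$; \cfdomN\ reads ``$\exists n\,\forall x\,(x>n\to\exists t\,T(e,x,t)=0)$'' and so lies in $\sigThree$. For \GequiN\ the defining condition $\varphi_e=\varphi_f$ can be written as ``for every $x$ and every $t$: if $T(e,x,t)=0$, then $\exists s\,[T(f,x,s)=0\wedge U(e,x,t)=U(f,x,s)]$, and symmetrically with $e,f$ interchanged''; this is $\forall\exists$ over a recursive kernel, hence $\piTwo$.

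For the lower bounds, \GequiN\ is $\piTwo$-hard by the reduction $e\mapsto(a,b_e)$ in which $\varphi_a$ is the constant-$0$ function and $\varphi_{b_e}(x)$ simulates $\varphi_e(x)$ and outputs $0$ on convergence: then $\varphi_a=\varphi_{b_e}$ iff $\varphi_e$ is total. For \fdomN\ I would reduce \totalN\ to the complementary infinite-domain problem: send $e$ to an index $h(e)$ of the function which, on input $n$, simulates $\varphi_e(0),\ldots,\varphi_e(n)$ in sequence and outputs $0$ iff every one of those computations halts, so that $W_{h(e)}$ is infinite iff $\varphi_e$ is total; taking complements gives $\sigTwo$-hardness of \fdomN.

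The main obstacle is \cfdomN, since a one-step reduction from \totalN\ is unavailable and naive block-indexed encodings fail: if $\psi_{x,y}$ is not total but its first point of divergence $d_y$ grows sufficiently fast in $y$, cofiniteness of the enumerated set can still occur accidentally. The plan here is the classical construction (see, e.g., Soare, Chapter~IV, or Rogers, Chapter~14): start from $A=\{x:\exists y\,\forall z\,\exists w\,R(x,y,z,w)\}\in\sigThree$, use the $\piTwo$-completeness of \totalN\ to fix a recursive $g$ with $\varphi_{g(x,y)}$ total iff $\forall z\,\exists w\,R(x,y,z,w)$, and then construct $e(x)$ by a marker-style enumeration in which numbers are committed to $W_{e(x)}$ only as stronger evidence of totality of $\varphi_{g(x,y_0)}$ accumulates for the current least apparent witness $y_0$. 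The bookkeeping is arranged so that $W_{e(x)}$ is cofinite precisely when some $y$ succeeds, i.e.\ exactly when $x\in A$; verifying both directions is the delicate technical core I expect to occupy most of the proof.
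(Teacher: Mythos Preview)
Your proposal is correct. The paper's own proof is far less detailed: it simply cites Soare for \fdomN, \totalN, and \cfdomN, and only proves \GequiN\ directly. For \GequiN\ your argument is essentially identical to the paper's --- both give the $\piTwo$ upper bound by writing out the ``both undefined or both defined with equal value'' clause as a $\forall\exists$ formula over the Kleene predicates, and both reduce \totalN\ to \GequiN\ via the map $e\mapsto(0,b_e)$ where $\varphi_{b_e}(x)$ simulates $\varphi_e(x)$ and returns~$0$ on convergence.

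Where you go further is in actually sketching the arguments the paper outsources. Your reduction for \fdomN\ (via the initial-segment simulation $\varphi_{h(e)}(n)\simeq 0$ iff $\varphi_e(0),\ldots,\varphi_e(n)$ all converge) is clean and correct. Your treatment of \cfdomN\ is appropriately cautious: you correctly identify it as the genuinely nontrivial claim, point to the standard movable-marker construction in Soare/Rogers, and flag the failure mode of naive block encodings. That is exactly the right diagnosis, and since the paper itself only cites the literature here, your sketch is if anything more informative than the paper's proof.
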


\begin{proof}
  For the problems \fdomN, \totalN, and \cfdomN\ see for 
  instance~\cite{soare}.\\
  Let~$e$ and~$e'$ be the instance of \GequiN. The equality of the 
  functions~$\vph_e$ and~$\vph_{e'}$ can be expressed as follows
  $$
  \begin{array}{ll}
    \forall x:& 
    \overbrace{(\forall t_1:T(e,x,t_1)=T(e,x,t_1)=1)}^{
      \text{both functions undefined}}  \;\;\vee  \\ \\
    &
    \overbrace{\exists t,t': (T(e,x,t)=T(e,x,t')=0) \wedge
               U(e,x,t)=U(e',x,t')}^{
      \text{both functions defined and with the same value}}
  \end{array}
  $$
  (see Definition~\ref{Knft}, page~\pageref{Knft}).

  This can be expressed in the form 
  $\forall x,t_1\:\exists t,t':\ldots$, so that the problem
  \GequiN\ is in $\piTwo$.

  To prove completeness
  we reduce the $\piTwo$-complete problem \totalN\ 
  to \GequiN.
  Given the ParRec function~$f(x)$, define the ParRec 
  function~$g(x)$ as:\\
  \spac 1) compute $f(x)$\\
  \spac 2) if the computation halts, output 0 
            (otherwise the result is undefined).

  Clearly~$g(x)$ is equivalent to the zero function~$0(x)$
  iff~$f(x)$ is total.
\end{proof}

\todox{Removed a note about assumptions in~\cite{soare} (problem \fdomN.)}

\subsection{Models of computation}
\label{models}
Deterministic finite automata (DFA), Turing machines (TM), and context
free grammars (\CFG) and examples of ``models of computation''.
A fundamental assumption associated with the
intuitive idea of ``model of computation'' is the following.
\begin{assumption}
\label{model}
The instances of a ``model of computation'' are
recursively enumerable.
\end{assumption}
For instance, we may  conclude that no model of computation
characterises the set of recursive (total) functions because that set is
not enumerable (see for instance~\cite{Phillips}).

\subsubsection*{The ``dovetailing'' technique}
As an illustration of this technique consider a recursively enumerable 
set of Turing machines~$M_1$,
$M_2$\ldots\ and suppose that we want to list the outputs of all the
computations~$M_i(x)$ for some fixed~$x$. That is, for every~$i$, if
the computation~$M_i(x)$ halts, its output is listed. This can be effectively
done by the following method:

-- Successively simulate one step of computation for each machine, in
the following order
{\footnotesize
$$
    \begin{array}{rccccccccccccccccc}
      \text{M:}&&
        M_1 && M_1&M_2 && M_1&M_2&M_3 && M_1&M_2&M_3&M_4 && M_1&\ldots\\
      \text{MS:}&&
        1   && 2  & 1  && 3  & 2 & 1  &&  4 & 3 & 2 & 1  &&  5\\
      \text{GS:}&&
        1   && 2  & 3  && 4  & 5 & 6  &&  7 & 8 & 9 & 10 && 11
    \end{array}
$$
}
  where~M, MS, and GS denote respectively ``machine'', 
  ``machine step number'' and ``global step number''.
  During this simulation, whenever some machine~$M_i$ halts, 
  print its output.  
  Note that every every computation step of every machine is simulated: 
  either forever or until it halts. 

  More generally we can for instance use the dovetailing technique to
  enumerate any r.e. collection of r.e. sets.

% PAGE 11
\bigskip

\subsection{Primitive recursive functions}
\label{pr-funs}
It is assumed that the reader is familiar with:
\begin{enumerate}
\item [--] the classical definition of primitive recursive,
  functions, see~\cite{BBJ,odi};
\item [--] the use of register languages to characterise those 
  functions (particularly the Loop language, 
  see~\cite{MeyerRitchie67a,MeyerRitchie67b});
\item [--] the PR binary function~``$\dminus$''
  $$
   x\dminus y =
   \left\{
     \begin{array}{ll}
       x-y & \text{\spac if $x\geq y$}  \\
       0   & \text{\spac otherwise.}
     \end{array}
   \right.
  $$
\end{enumerate}

The following property will be useful.
\begin{property}
  \label{eq-funs}
  Let~$a$ and~$b$ be arbitrary integers. Then
  $a = b$ if and only if
  $(a\dminus b) + (b\dminus a)=0$.
\end{property}

\todox{Check. The reviewer found \Smn\ Theorem ``non standard''.}
\todox{CFG's not needed now. Deleted.}
%_____________________________________________________________________
\section{Decision problems associated with PR functions}
\label{problems}
In this section we study the decidability of some 
problems\footnote{Many other natural decision problems could have been studied.
For space reasons we did not include (i)~several simple generalisations and 
(ii)~similar problems (in terms of the degree of undecidability).
An example of~(i) is ``does~$f(x)$ have a number
of zeros between~$m$ and~$n$?'' (for fixed~$m$ and~$n$ with $0<m<n$); this problem
generalises problems~\ref{eoz} and~\ref{ekz}.
An example of~(ii) is the following problem which is similar to \recurN\ 
(Problem~\ref{Precur}, page~\pageref{Precur}):
``given~$e$, $\exists e':\text{$e'\in W_e$ and $W_{e'}$ is infinite}$?''
(see \cite[\S 14.8, Theorem~XV (page 326)]{Rogers}).
} for which the instance is one PR function (or a pair of PR functions).
First, in Section~\ref{various}, we study several problems related to
the existence of zeros of the PR function~$f$ and to the equivalence of
the PR functions~$f$ and~$g$, then, in Section~\ref{codomain}, some
properties related to the codomain of~$f$ are studied, and finally, in
Section~\ref{injbij} we prove results associated to the injectivity,
surjectivity, and bijectivity of~$f$. In every case it is assumed that
the PR function is given by an index (or equivalently by a Loop program).
The reader can find a summary of some of these results in Figure~\ref{summary}
(page~\pageref{summary}). 

In Section~\ref{compare} (page~\pageref{compare}) we also consider the
case in which the instance is a ParRec function and compare
the two cases. Figures~\ref{c-summary} (page~\pageref{c-summary})
and~\ref{changes} (page~\pageref{changes}) contain a summary of this
comparison.

\subsection{Some undecidable problems}
\label{various}
\subsubsection{Problems}
For each of the decision problems listed below
\label{variousP}
the instance consists of one or two primitive recursive functions.
When the instance has the form $f(\ov{x},y)$, where~$\ov{x}$ is given,
we can use the \Smn\ Theorem 
to convert it to the form~$g(y)$, where~$g$ depends on~$\ov{x}$. 
Both these forms will be used in the sequel.
All the problems are all proved to be either decidable or
$\sigN$-complete (for some~$n\geq 1$) or
$\piN$-complete (for some~$n\geq 1$) or members of the class
$\deltaTwo\setminus(\sigOne\cup \piOne)$.

\todoi{All the problems in the list: proved complete OR in 
    $\deltaTwo\setminus(\sigOne\cup \piOne)$.}

\begin{problem}
  {\ozN}.
  \label{oz}\\
    \inst $\ov{x}$ and the primitive recursive function $f(\ov{x},y)$.\\
    \quest Is there some~$y$ such that $f(\ov{x},y)=0$? \\
    \class $\sigOne$-complete, Theorem~\ref{t-oz} (page~\pageref{t-oz}).\closex
\end{problem}

\begin{problem}
  {\eozN}.
  \label{eoz}  \\
    \inst $\ov{x}$ and the primitive recursive function $f(\ov{x},y)$.\\
    \quest Does $f(\ov{x},y)$ have exactly one zero? \\
    \class $\deltaTwo\setminus  (\sigOne\cup \piOne)$, 
    Theorem~\ref{t-eoz} (page~\pageref{t-eoz}).\closex
\end{problem}

\begin{problem}
  {\gtkzN}. 
  \label{gtkz} \\
    \inst $\ov{x}$ and the primitive recursive function $f(\ov{x},y)$.\\
    \quest Does $f(\ov{x},y)$ have at least $k$ zeros? 
               ($k\geq 1$ fixed). \\
    \class $\sigOne$-complete, 
    Theorem~\ref{t-gtkz} (page~\pageref{t-gtkz}).\closex
\end{problem}

\begin{problem}
  {\ekzN}.
  \label{ekz} \\
    \inst $\ov{x}$ and the primitive recursive function $f(\ov{x},y)$.\\
    \quest Does $f(\ov{x},y)$ have exactly $k$ zeros?
               ($k\geq 2$ fixed). \\
    \class $\deltaTwo\setminus (\sigOne\cup\piOne)$, 
    Theorem~\ref{t-ekz} (page~\pageref{t-ekz}).\closex
\end{problem}

\begin{problem}
  {\eqNXTN}.
    \label{eqNXT} \\
    \inst $\ov{x}$ and the primitive recursive function $f(\ov{x},y)$.\\
    \quest Is there some~$y$ such that $f(\ov{x},y)=f(\ov{x},y+1)$?\\
     \class $\sigOne$-complete, Theorem~\ref{t-eqNXT}
    (page~\pageref{t-eqNXT}).\closex
\end{problem}

\begin{problem}
  {\azN}.
  \label{az} \\
    \inst $\ov{x}$ and the primitive recursive function $f(\ov{x},y)$.\\
    \quest Is $f(\ov{x},y)$ the zero function~$0(y)$?\\
    \class $\piOne$-complete, Theorem~\ref{t-az} (page~\pageref{t-az}).\closex
\end{problem}

\begin{problem}
  {\infzN}.
  \label{infz} \\
    \inst $\ov{x}$ and the primitive recursive function $f(\ov{x},y)$.\\
    \quest Does $f(\ov{x},y)$ have infinitely many zeros? \\
    \class $\piTwo$-complete.
    Theorem~\ref{t-infzCOMP} (page~\pageref{t-infzCOMP}).\closex
\end{problem}

\todoi{New PR problem.}
\begin{problem}
    {\aazN}.
    \label{aaz} \\
    \inst $\ov{x}$ and the primitive recursive function $f(\ov{x},y)$.\\
    \quest Is $f(\ov{x},y)=0$ for almost all values of~$y$? \\
    \class $\sigTwo$-complete,
    Theorem~\ref{t-aaz} (page~\pageref{t-aaz}).\closex
\end{problem}

\begin{problem}
  \label{DeOne}
  \eOneN, ``functions equal at one point''.         \\
  \inst $\ov{x}$ and the primitive recursive functions $f(\ov{x},y)$, 
        $g(\ov{x},y)$. \\
  \quest Is $f(\ov{x},y)=g(\ov{x},y)$ for at least one~$y$?\\
  \class $\sigOne$-complete, Theorem~\ref{eq-one} 
  (page~\pageref{eq-one}).\closex
\end{problem}

\begin{problem}
\label{equiP}
  {\equiN}.    \\
  \inst $\ov{x}$ and the primitive recursive functions $f(\ov{x},y)$, 
        $g(\ov{x},y)$. \\
  \quest Is $f(\ov{x},y)=g(\ov{x},y)$ for every~$y$? \\
  \class $\piOne$-complete, Theorem~\ref{t-equi} (page~\pageref{t-equi}).\closex
\end{problem}

\subsubsection*{A note on polynomials and the \ozN\ problems}
For each instance~$f(x)$ of the \ozN\ problem there is
a polynomial with integer coefficients $p(x_1,\ldots, x_n)$,
such that~$p$ has a zero iff~$f(x)$ has a zero.
Let us begin by quoting~\cite{feferman}.
\begin{quote} 
\noindent {In an unpublished and undated manuscript from the 1930s found in 
  G\"odel’s Nachlass and reproduced in Vol. III of the Collected Works,
  he showed that every statement of the form~``$\forall x:R(x)$'' 
  with~$R$ primitive recursive [relation] is equivalent to one in 
  the form
  $$
     \forall x_1,\ldots, x_n \; \exists y_1,\ldots, y_m :
       [p(x_1,\ldots x_n,y_1\ldots y_m)=0]
  $$
  in which the variables range over natural numbers and~$p$ is a
  polynomial with integer coefficients; it is such problems that 
  G\"odel referred to as Diophantine in the Gibbs lecture. It follows
  from the later work on Hilbert’s 10th problem by Martin Davis,
  Hilary Putnam, Julia Robinson and --- in the end --- 
  Yuri Matiyasevich that, even better, one can take~$m=0$ in such a
  representation, when the~``$=$'' relation is replaced by~``$\neq$''.}
\end{quote}
In terms of PR functions, a PR relation~$R(x)$ may be represented by
a PR function~$f(x)$ such that~$R(x)$ holds iff~$f(x)\neq 0$.
We have (with~$m=0$) the equivalences
$$
\begin{array}{lcl}
  \forall x:R(x)      &\lra& 
        \forall x_1,\ldots, x_n : [p(x_1,\ldots x_n)\neq 0]\\
  \forall x:f(x)\neq 0&\lra& 
        \forall x_1,\ldots, x_n : [p(x_1,\ldots x_n)\neq 0]\\
  \exists x:f(x)=0    &\lra& 
        \exists x_1,\ldots, x_n : [p(x_1,\ldots x_n)=0]\Mdot
\end{array}
$$
Thus, given that there is an algorithm that computes~$p$ from~$f$,
and relatively to the decidability of the \ozN\ problem, an
arbitrary PR function is no more general than a (multiple variable) integer
polynomial.

\subsubsection{Results}
%However, in some cases we prove both properties, for instance,
We use the following easy to prove fact: for
every~$n\geq 1$ if the language~$P$ is complete in the class~$\sigN$,
then the language~$\neg P$ is complete in the class~$\piN$; and if the
language~$P$ is complete in the class~$\piN$, the language~$\neg P$ is
complete in the class~$\sigN$.

\todoi{Problem complete in $\sigOne$.}
\begin{theorem}
  \label{t-oz}
  The \ozN\ problem is $\sigOne$-complete.
\end{theorem}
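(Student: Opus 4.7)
The plan is to establish membership and hardness separately, using the Kleene $T$-predicate machinery already set up in Definition~\ref{interval}.

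For membership in $\sigOne$, I would observe that since $f$ is primitive recursive, it is total computable, so the predicate ``$f(\ov{x},y)=0$'' is decidable. The problem is then defined by the formula $\exists y:[f(\ov{x},y)=0]$, which is manifestly in $\sigOne$ (a single existential quantifier over a recursive predicate).

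For $\sigOne$-hardness, I would reduce the halting problem $\HP$ (which is $\sigOne$-complete) to \ozN. Given an instance $(e,x)$ of $\HP$, I use the predicate $T(e,x,t)$ from Definition~\ref{interval}, which is primitive recursive in $t$ when $e$ and $x$ are fixed, and which takes the value~$0$ exactly when the computation of $\vph_e(x)$ converges at step~$t$. Define
\begin{equation*}
  g(t) \defined T(e,x,t)\Mdot
\end{equation*}
Then $\vph_e(x)\halts$ if and only if there exists some $t$ with $g(t)=0$, that is, if and only if $g$ has a zero. By the \Smn\ Theorem an index (or Loop program) for $g$ can be computed primitively recursively from $(e,x)$, so the map $(e,x)\mapsto g$ is a recursive many-one reduction $\HP \reduces \ozN$.

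Combining the two directions gives $\sigOne$-completeness. There is no real obstacle here: the whole argument rests on the fact that the Kleene $T$-predicate, having been set up in Definition~\ref{interval} to be primitive recursive in its input argument for each fixed index, essentially encodes the halting problem as a zero-search problem for a PR function. The same template (reducing $\HP$ or $\SHP$ via the $T$-predicate) will serve as the backbone of several subsequent hardness proofs in this section.
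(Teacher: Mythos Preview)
Your proof is correct and follows essentially the same route as the paper: membership is immediate, and hardness is obtained by reducing a halting-type problem to \ozN\ via the primitive recursive Kleene predicate $T(e,x,t)$. The only cosmetic difference is that the paper reduces from $\SHP$ (using $T(e,e,t)$) whereas you reduce from $\HP$ (using $T(e,x,t)$ with $e,x$ given); both are $\sigOne$-complete, so this changes nothing.
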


\begin{proof}
  The semi-decidability is obvious. 
%  undecidability of the halting problem.
  To prove the completeness, we reduce the $\sigOne$-complete problem SHP
  to \ozN.
  Let~$e$ be an instance of SHP. The computation $\varphi_e(e)$ halts
  iff there is some~$t$ such that $T(e,e,t)=0$ (see Definition~\ref{interval},
  page~\pageref{interval}).
  That is, $\varphi_e(e)\halts\:\lra\:(\exists t:T_e(t)=0)$,
  where~$T_e$ is an unary Turing machine.
  Moreover, $T_e(t)$ is primitive recursive.  
  Thus, $\SHP\reduces\oz$ and we conclude that \ozN\ is 
  $\sigOne$-complete.
\end{proof}

\todoi{Problem in $\deltaTwo \setminus (\sigOne\cup\piOne)$.}
\begin{theorem}
  \label{t-eoz}
  The language associated with the \eozN\ problem is in
  $\deltaTwo \setminus (\sigOne\cup\piOne)$.
\end{theorem}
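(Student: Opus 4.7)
The plan is to establish three facts: membership in $\deltaTwo$, non-membership in $\piOne$, and non-membership in $\sigOne$.

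For membership in $\deltaTwo$, I would observe that ``$f$ has exactly one zero'' is the conjunction of the $\sigOne$ statement ``$\exists y : f(y)=0$'' and the $\piOne$ statement ``$\forall y_1 \forall y_2 : (f(y_1)=0 \wedge f(y_2)=0) \Rightarrow y_1 = y_2$''. Padding the $\sigOne$ part with a vacuous universal and the $\piOne$ part with a vacuous existential, the conjunction fits both in $\sigTwo$ and in $\piTwo$, so it lies in $\deltaTwo$.

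For non-membership in $\piOne$, I would reduce the $\sigOne$-complete problem \ozN\ (Theorem~\ref{t-oz}) to \eozN. Given an instance~$f$, define $g$ so that $g(y)=0$ iff $y$ is the \emph{first} zero of $f$: explicitly, $g(y)=0$ when $f(y)=0 \wedge \forall y' < y\; f(y') \neq 0$, and $g(y)=1$ otherwise. The bounded universal keeps $g$ primitive recursive, and $g$ has exactly one zero (the least zero of $f$) iff $f$ has at least one zero. If \eozN\ were in $\piOne$, then by Property~\ref{reds} so would \ozN, contradicting its $\sigOne$-completeness.

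For non-membership in $\sigOne$, I would reduce the $\piOne$-complete complement of \ozN\ (i.e.\ ``$f$ has no zeros'', which is $\piOne$-complete by the general fact noted just before Theorem~\ref{t-oz}) to \eozN. Given~$f$, put $g(0)=0$ and $g(y)=f(y-1)+1$ for $y\geq 1$, which is primitive recursive and never zero off $0$. Then $g$ has exactly one zero iff $f$ has none, so a containment $\eozN \in \sigOne$ would place the complement of \ozN\ in $\sigOne$, contradiction. (If one prefers $g(y)=f(y-1)$ for $y\geq 1$, the zeros of $g$ are $\{0\} \cup \{y+1: f(y)=0\}$, which again has size exactly one iff $f$ has no zeros.)

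The main effort is in designing the two reductions so that (i) $g$ remains primitive recursive and (ii) the number of zeros is \emph{exactly} one in the appropriate case. The ``first zero'' bounded-search trick is the key device on the $\piOne$ side, while a single artificially inserted zero at $y=0$ handles the $\sigOne$ side; both fit within Property~\ref{reds}, yielding the desired $\deltaTwo \setminus (\sigOne \cup \piOne)$ classification.
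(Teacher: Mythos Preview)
Your approach is essentially identical to the paper's: the same $\deltaTwo$ argument via a $\sigOne\wedge\piOne$ conjunction with commutable quantifiers, the same ``first zero'' reduction $\oz\reduces\eoz$ for non-membership in $\piOne$, and the same ``insert a zero at $0$'' reduction $\neg(\oz)\reduces\eoz$ for non-membership in $\sigOne$.

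There is, however, a slip in your primary construction for part~(iii). With $g(0)=0$ and $g(y)=f(y-1)+1$ for $y\geq 1$, the function $g$ is never zero off~$0$ \emph{regardless of $f$}, so $g$ always has exactly one zero and the map $f\mapsto g$ is not a reduction at all. Your parenthetical alternative, $g(y)=f(y-1)$ for $y\geq 1$, is the correct one (and is exactly what the paper uses): then the zeros of $g$ are $\{0\}\cup\{y+1:f(y)=0\}$, which has size exactly one iff $f$ has no zeros. Drop the ``$+1$'' version and keep only the parenthetical.
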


\begin{proof}
  (i)~In~$\deltaTwo$.
  Given the PR function~$f(x)$, the
  question associated with the problem can be expressed as
  \begin{equation}
    \label{eozEXP}
     (\exists x:\overbrace{f(x)=0}^{A}) \:\wedge\:
     (\forall x_1,x_2:
        \overbrace{(x_1=x_2)\vee f(x_1)\neq 0\vee f(x_2)\neq 0}^{B})\Mdot
  \end{equation}
  As the variable~$x$ does not occur in~$B$ and neither~$x_1$ nor~$x_2$ 
  occur in~$A$, we can write~\eqnref{eozEXP} in the following two forms
  \begin{eqnarray}
    &&
    \label{formA}
     \exists x\:
     \forall x_1,x_2:
        (f(x)=0) \:\wedge\:
        ({(x_1=x_2)\vee f(x_1)\neq 0\vee f(x_2)\neq 0})  \\
    &&
    \label{formB}
     \forall x_1,x_2\:
     \exists x:
        (f(x)=0) \:\wedge\:
        ((x_1=x_2)\vee f(x_1)\neq 0\vee f(x_2)\neq 0)\Mdot
  \end{eqnarray}
  The representations~\eqnref{formA} and~\eqnref{formB} show that~\eozN\ 
  belongs both to~$\sigTwo\cap\piTwo=\deltaTwo$.

  \noindent(ii)~Not in~$\piOne$.
  Reduce \ozN\ to \eozN.
  Let~$f$ be the instance of \ozN. Define the PR function~$g$ as
  $$
  g(x) =
  \left\{
    \begin{array}{ll}
      0  & \text{\spac if $f(x)=0$ and $[x'<x \Rightarrow f(x')\neq 0]$}\\
      1  & \text{\spac otherwise}
    \end{array}
  \right.
  $$
  Clearly~$f$ has at least one zero iff the PR function~$g$ has exactly
  one zero. Thus the problem \eozN\ is not in~$\piOne$.

  \noindent(iii)~Not in~$\sigOne$.
  Reduce $\neg$(\ozN) to \eozN.
  Let~$f$ be the instance of $\neg$(\ozN). Define~$g$ as
    $$
    g(x) =
    \left\{
    \begin{array}{ll}
      0                & \text{\spac if $\;x=0$}                      \\
      1                & \text{\spac if $x\geq 1$ and $\;f(x-1)\neq 0$}             \\
      0                & \text{\spac if $x\geq 1$ and $\;f(x-1)= 0$}
    \end{array}
    \right.
    $$
  The function~$g$ has exactly one zero iff the function~$f$ has no
  zeros. It follows that \eozN\ is not in~$\sigOne$.
\end{proof}

\todoi{Problem complete in $\sigOne$.}
\begin{theorem}
  \label{t-gtkz}
  The \gtkzN\ problem is $\sigOne$-complete.
\end{theorem}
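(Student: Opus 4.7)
The plan is to establish membership in $\sigOne$ first, and then reduce a known $\sigOne$-complete problem to \gtkzN\ to obtain hardness.

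For membership, the question ``does $f(\ov{x},y)$ have at least $k$ zeros?'' can be expressed as the existence of $k$ distinct witnesses, namely
$$
  \exists y_1,\ldots,y_k :
    \Bigl[\bigwedge_{i<j}(y_i\neq y_j) \:\wedge\: \bigwedge_{i=1}^{k}\bigl(f(\ov{x},y_i)=0\bigr)\Bigr]\Mdot
$$
Since $k$ is a fixed constant and $f$ is primitive recursive, the matrix is a PR predicate; coding the tuple $(y_1,\ldots,y_k)$ into a single integer via the standard PR pairing yields a $\sigOne$ formula. This places \gtkzN\ in $\sigOne$.

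For $\sigOne$-hardness, I would reduce \ozN\ (which is $\sigOne$-complete by Theorem~\ref{t-oz}) to \gtkzN. Given a PR function $f(y)$ (the instance of \ozN), define the PR function
$$
   g(y) \defined f\bigl(y\,\text{div}\,k\bigr)\Mcomma
$$
where integer division is primitive recursive. If $f$ has a zero at $y_0$, then $g$ vanishes at each of the $k$ distinct points $k y_0, k y_0+1,\ldots, k y_0+k-1$, so $g$ has at least $k$ zeros. Conversely, any zero of $g$ at $y$ yields a zero of $f$ at $y\,\text{div}\,k$, so if $g$ has at least $k$ (in particular, at least one) zeros, then $f$ has at least one zero. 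Thus $f\in\oz$ iff $g\in\gtkz$, completing the reduction.

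There is no real obstacle here: the membership argument is a routine coding, and the only mild subtlety in the hardness argument is ensuring that a single zero of $f$ is \emph{amplified} into $k$ distinct zeros of $g$, which the block construction $g(y)=f(y\,\text{div}\,k)$ achieves in the simplest possible way. The same proof structure applies uniformly for every fixed $k\geq 1$.
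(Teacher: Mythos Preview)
Your proof is correct and follows essentially the same approach as the paper: the paper also reduces \ozN\ to \gtkzN\ via the block construction $g(kx)=g(kx+1)=\cdots=g(kx+k-1)=f(x)$, which is exactly your $g(y)=f(y\,\text{div}\,k)$. Your membership argument is spelled out more explicitly than the paper's (which simply asserts semi-decidability), but both arguments are the same in substance.
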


\begin{proof}
  The \gtkzN\ problem is clearly semi-decidable and \ozN\
  (instance~$f(x)$) reduces easily to \gtkzN\ (instance~$g(x)$) if we
  define for each~$x\geq 0$
  $$
    g(kx) = g(kx+1) = \ldots = g(kx+(k-1)) = f(x)
  $$
  so that each zero of~$f$ corresponds to~$k$ zeros of~$g$.
  As the problem \ozN\ is complete in~$\sigOne$
  (Theorem~\ref{t-oz}, page~\pageref{t-oz}),
  the reduction from \ozN\ to \gtkzN\ proves the $\sigOne$-completeness 
  of \gtkzN.
\end{proof}

\todoi{Problem  in $\deltaTwo \setminus (\sigOne\cup\piOne)$.}
\begin{theorem}
  For every~$k\geq 1$ the language associated with the \ekzN\ problem
  is in $\deltaTwo \setminus (\sigOne\cup\piOne)$.
  \label{t-ekz}
\end{theorem}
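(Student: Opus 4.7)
The plan is to follow the three-step pattern of Theorem~\ref{t-eoz}: establish a $\deltaTwo$ upper bound by exhibiting both $\sigTwo$ and $\piTwo$ normal forms for the property, then separately exclude $\sigOne$ and $\piOne$ by reductions from \ozN\ and its complement, respectively. The padding trick used for $k=1$ generalises cleanly: by prefixing $k-1$ or $k$ ``free'' zeros to a PR function built from~$f$, one extra zero of the construction toggles the total count across the threshold~$k$.

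For membership in $\deltaTwo$, I would write ``$f$ has exactly $k$ zeros'' as the conjunction of ``$\exists y_1<y_2<\cdots<y_k$ with $f(y_i)=0$ for all $i$'' and ``$\forall y:\ y\in\{y_1,\ldots,y_k\}\vee f(y)\neq 0$''. Since the $\exists$-block over $\overline{y}$ and the single universal variable~$y$ are disjoint, the formula admits both an $\exists\overline{y}\,\forall y$ and a $\forall y\,\exists\overline{y}$ prenex with PR matrix, placing \ekzN\ in $\sigTwo\cap\piTwo=\deltaTwo$, exactly as in~\eqnref{formA} and~\eqnref{formB}.

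For \ekzN\ not in~$\piOne$, I would reduce \ozN\ to \ekzN. Given a PR instance~$f$, define the PR function
\[
g(x)=\begin{cases}
0 & \text{if } x<k-1,\\
0 & \text{if } x\geq k-1,\ f(x-k+1)=0,\ \text{and } f(x')\neq 0 \text{ for all } x'<x-k+1,\\
1 & \text{otherwise.}
\end{cases}
\]
The first clause contributes $k-1$ zeros; the second contributes one additional zero (at the shift of the minimum zero of~$f$) iff~$f$ has any zero. Hence $g$ has exactly $k$ zeros iff~$f$ has at least one. Since \ozN\ is $\sigOne$-complete (Theorem~\ref{t-oz}) it is not in~$\piOne$, so by Property~\ref{reds} neither is \ekzN.

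For \ekzN\ not in~$\sigOne$, a simpler padding suffices: given~$f$, let $g(x)=0$ for $x<k$ and $g(x)=f(x-k)$ for $x\geq k$. Then $g$ always has at least $k$ zeros, and exactly~$k$ iff~$f$ has none, yielding $\neg\,\oz\reduces\ekz$; since $\neg\,\oz$ is $\piOne$-complete and therefore not in~$\sigOne$, neither is \ekzN. The only point needing a sanity check is primitive recursiveness of the $\piOne$-exclusion function, but the critical clause uses only bounded quantification over the PR predicate ``$f(\cdot)=0$'', which is routine. For $k=1$ the construction reproduces the proof of Theorem~\ref{t-eoz}, so the argument is uniform in~$k$.
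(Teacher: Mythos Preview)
Your overall strategy matches the paper's (which simply says ``similar to the proof of Theorem~\ref{t-eoz}''), and both reductions you give for excluding $\piOne$ and $\sigOne$ are correct and are the natural generalisations of the $k=1$ constructions.

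There is, however, a genuine gap in your $\deltaTwo$ argument. In the paper's proof for $k=1$, the two conjuncts are \emph{independent}: the $\exists$-clause $A(x)$ and the $\forall$-clause $B(x_1,x_2)$ share no variables, which is precisely what licenses the swap from~\eqnref{formA} to~\eqnref{formB}. In your formulation the universal clause ``$\forall y:\,y\in\{y_1,\ldots,y_k\}\vee f(y)\neq 0$'' refers to the existentially bound $y_1,\ldots,y_k$, so the matrix genuinely depends on all the quantified variables and the swap is not a logical equivalence. Concretely, the form
\[
\forall y\,\exists y_1<\cdots<y_k:\Bigl[\textstyle\bigwedge_i f(y_i)=0\Bigr]\wedge\bigl[y\in\{y_1,\ldots,y_k\}\vee f(y)\neq 0\bigr]
\]
is satisfied by \emph{any} $f$ with at least $k$ zeros: given $y$, choose $k$ distinct zeros of $f$, taking care to include $y$ among them whenever $f(y)=0$. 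So your proposed $\piTwo$ form does not express ``exactly $k$ zeros''.

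The fix is immediate and is what the paper's $k=1$ argument actually does: express ``at most $k$ zeros'' with fresh universal variables, independent of the witnesses for ``at least $k$'':
\[
\bigl[\exists y_1<\cdots<y_k:\textstyle\bigwedge_i f(y_i)=0\bigr]\;\wedge\;\bigl[\forall z_1,\ldots,z_{k+1}:\textstyle\bigvee_{i<j}(z_i=z_j)\vee\bigvee_i f(z_i)\neq 0\bigr].
\]
Now the two conjuncts share no variables, and the quantifier blocks commute exactly as in~\eqnref{formA} and~\eqnref{formB}, yielding the $\sigTwo$ and $\piTwo$ forms you need.
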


\begin{proof}
  Similar to the proof of Theorem~\ref{t-eoz}.
\end{proof}

\todoi{Problem complete in $\sigOne$.}
\begin{theorem}
  \label{t-eqNXT}
  The \eqNXTN\ problem is $\sigOne$-complete.
\end{theorem}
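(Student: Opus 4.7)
The statement is that \eqNXTN\ is $\sigOne$-complete, so I must argue both membership in $\sigOne$ and $\sigOne$-hardness.

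\emph{Membership in $\sigOne$.} The question ``$\exists y:f(\ov{x},y)=f(\ov{x},y+1)$'' has a single unbounded existential quantifier over~$y$ in front of a decidable matrix: since~$f$ is primitive recursive it is total and computable, so given any candidate~$y$ we can evaluate both~$f(\ov{x},y)$ and~$f(\ov{x},y+1)$ and compare them. Thus the language is trivially in~$\sigOne$ (semi-decide by dovetailing/linear search on~$y$).

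\emph{Hardness.} I would reduce from the already-known $\sigOne$-complete problem \ozN\ (Theorem~\ref{t-oz}). Given the PR instance~$f(y)$ of \ozN, define
\[
 g(0)=0\Mcomma\qquad g(y+1)=g(y)+\mathrm{sg}(f(y))\Mcomma
\]
where $\mathrm{sg}(n)$ is the PR sign function (equal to~$0$ if $n=0$ and~$1$ otherwise). Then~$g$ is primitive recursive (it is defined by primitive recursion from a PR base and a PR step). By construction $g(y+1)-g(y)=\mathrm{sg}(f(y))$, so $g(y)=g(y+1)$ if and only if $f(y)=0$. Consequently~$f$ has some zero iff~$g$ has two consecutive equal values; this gives the many-one reduction $\ozN\reduces\eqNXTN$, and together with the $\sigOne$-completeness of~\ozN\ we conclude that \eqNXTN\ is $\sigOne$-hard, hence $\sigOne$-complete.

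\emph{Obstacles.} There is really no substantial obstacle here: the construction of~$g$ is standard ``counting'' via bounded sums, which is a routine PR operation, and the equivalence is immediate from the recurrence. The only small care one must take is verifying that $\mathrm{sg}$ and the cumulative-sum schema are genuinely primitive recursive (they are), and that an index/Loop program for~$g$ can be obtained effectively from one for~$f$, which is automatic from the closure of PR under primitive recursion and composition. Given an instance $f(\ov{x},y)$ with parameters~$\ov{x}$, the \Smn-style absorption of~$\ov{x}$ into~$g$ also goes through without difficulty.
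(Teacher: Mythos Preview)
Your proof is correct and follows essentially the same route as the paper: membership in~$\sigOne$ is immediate, and hardness is obtained by reducing \ozN\ via a cumulative-sum function whose consecutive values coincide exactly at the zeros of~$f$. The paper uses $g(x)=\sum_{i<x}f(i)$ directly rather than summing $\mathrm{sg}(f(i))$, but this is an inessential difference since $g(y+1)=g(y)$ iff the increment vanishes in either formulation.
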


\begin{proof}
  It is obvious that $\eqNXT\in \sigOne$.\\
  The following observation suggests a reduction of \ozN\ (instance~$f$)
  to \eqNXTN\ (instance~$g$): $f(x)$ has at least one zero
  iff $g(x)\defined \sum_{i=0}^{i<x} f(i)$ has at least a value
  equal to the next one (note that $g(0)=0$).
  As \ozN\ was proved to be $\sigOne$-complete (Theorem~\ref{t-oz},
  page~\pageref{t-oz}), this reduction
  shows that \eqNXTN\ is also $\sigOne$-complete.
\end{proof}

\todoi{Problem complete in $\piOne$.}
\begin{theorem}
  \label{t-az}
  The \azN\ problem is $\piOne$-complete.
\end{theorem}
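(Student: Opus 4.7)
The plan is to show membership in $\piOne$ first and then hardness via reduction from the complement of \ozN.

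For membership, after fixing $\ov{x}$ and invoking the \Smn\ Theorem to view the instance as a unary PR function $h(y)$, the defining question is simply $\forall y:h(y)=0$. Since $h$ is PR, the inner predicate is recursive, so the problem belongs to~$\piOne$.

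For hardness, I would reduce the $\piOne$-complete problem $\neg(\ozN)$ (complete in~$\piOne$ by Theorem~\ref{t-oz} and the elementary fact, already noted in the text, that complements swap $\sigN$- and $\piN$-completeness) to \azN. Given a PR instance $f(y)$ of $\neg(\ozN)$, define the PR function
$$
   g(y) \defined 1 \dminus f(y)\Mdot
$$
Then $g(y)=0$ iff $f(y)\geq 1$, so $g$ is the zero function iff $f$ has no zero. This is a bona-fide PR reduction and yields $\neg(\ozN)\reduces\azN$, establishing $\piOne$-hardness.

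There is no real obstacle here: the only thing to check is that the candidate reduction indeed converts ``has no zeros'' into ``is the zero function'' and that the map $f\mapsto g$ is itself PR (which is immediate because $\dminus$ is PR and composition of PR functions is PR, so an index for $g$ can be obtained effectively from an index for $f$). Combining membership and hardness gives $\piOne$-completeness.
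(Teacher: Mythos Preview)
Your proposal is correct and follows essentially the same approach as the paper: membership via the $\forall y:h(y)=0$ formulation, and hardness via the reduction that swaps zero/nonzero values of $f$. The paper phrases it as $\oz\reduces\neg(\az)$ using a case definition of~$g$, while you phrase it as $\neg(\oz)\reduces\az$ using the closed form $g(y)=1\dminus f(y)$; these are the same reduction function, just presented dually.
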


\begin{proof}
  The problem can be expressed as $\forall x:f(x)=0$.
  Thus it belongs to~$\piOne$.   \\
  Reduce $\oz$ to $\neg(\az)$.  Given an instance~$f$ of \ozN, 
  define the function~$g$ as
  $$
  g(x)=
  \left\{
    \begin{array}{ll}
      0  &  \text{\spac if $f(x)\neq 0$} \\
      1  &  \text{\spac if $f(x)=    0$}
    \end{array}
  \right.
  $$
  The function~$g$, which is clearly PR, has at least one zero 
  iff~$f$ is {\em not} the zero function.
  As \ozN\ is $\sigOne$-complete (Theorem~\ref{t-oz} page~\pageref{t-oz}),
  this reduction proves the theorem.
\end{proof}

\todoi{Problem complete in $\piTwo$.}
\begin{theorem}
\label{t-infzCOMP}
  The \infzN\ problem is is $\piTwo$-complete.
\end{theorem}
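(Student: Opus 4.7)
The plan has two parts: show membership in $\piTwo$, then establish $\piTwo$-hardness by reducing from the $\piTwo$-complete problem \totalN\ (Theorem~\ref{THREEcomp}).

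\textbf{Membership.} Given a PR function $f(\ov{x}, y)$ with $\ov{x}$ fixed, the condition ``$f$ has infinitely many zeros in $y$'' is naturally expressed as
$$
\forall n\ \exists y\colon (y \geq n) \wedge (f(\ov{x}, y) = 0)\Mdot
$$
Since the matrix is a primitive recursive predicate in $n$ and $y$, this is a $\piTwo$ formula, so \infzN\ is in $\piTwo$.

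\textbf{Hardness.} The idea is: given a ParRec function $\varphi_e$, I construct a PR function $g$ whose zeros encode the ``all-halt-together'' condition used to certify totality. Concretely, using the PR predicate $T$ from Definition~\ref{interval} (so $T(e, x, t) = 0$ iff $\varphi_e(x)$ has halted in $\leq t$ steps), define
$$
g(y) \;\defined\; \sum_{x=0}^{y} T(e, x, y)\Mdot
$$
This is PR (bounded sum of a PR predicate), and $g(y) = 0$ iff $\varphi_e(x)$ halts in $\leq y$ steps for every $x \leq y$. Now I would verify both directions of the reduction. If $\varphi_e$ is total, then for every $N$ there is some $y \geq N$ large enough that the first $y+1$ computations all finish in $\leq y$ steps, hence $g(y) = 0$ for infinitely many $y$. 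Conversely, if some $\varphi_e(x_0)$ diverges, then for every $y \geq x_0$ the term $T(e, x_0, y) = 1$ forces $g(y) \geq 1$, so $g$ has only finitely many zeros (at most those $y < x_0$). Thus $\varphi_e \in \totalN$ iff $g \in \infzN$, and since the construction of (an index for) $g$ from $e$ is primitive recursive via the \Smn\ Theorem, this is a valid many-one reduction $\totalN \reduces \infzN$.

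\textbf{Main obstacle.} There is no deep obstacle — the ingredients (the Kleene predicate $T$ and bounded summation) are exactly what is needed, and the only thing to be careful about is choosing a definition of $g$ that simultaneously (i) stays primitive recursive, (ii) forces $g$ to be nonzero from some point onward whenever a single input diverges, and (iii) is guaranteed to be zero arbitrarily often when $\varphi_e$ is total. The ``diagonal bound'' $t = y$ used in the definition of $g$ is what ties these three requirements together, and this is the only subtle point of the argument.
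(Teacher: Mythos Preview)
Your membership argument is fine and matches the paper. The hardness argument, however, has a genuine gap in the forward direction of the reduction.

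You claim: ``If $\varphi_e$ is total, then for every $N$ there is some $y \geq N$ large enough that the first $y+1$ computations all finish in $\leq y$ steps.'' This is false. Take any total $\varphi_e$ whose running time satisfies $t_x > x$ for every~$x$ (for instance a machine that on input~$x$ deliberately runs for $x+1$ steps before halting). Then for \emph{every} $y$ the term $T(e,y,y)=1$ appears in your sum, so $g(y)\geq 1$ for all~$y$ and $g$ has no zeros at all --- yet $\varphi_e$ is total. The problem is that by tying the time bound to the number of inputs (both equal to~$y$) you force the bound to chase a moving target; there is no reason the diagonal $t_y\leq y$ should ever hold.

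The paper avoids this by reducing from the complement of \fdomN\ rather than from \totalN. With the ``halts at exactly step~$t$'' convention, set $g(\langle x,t\rangle)\defined T(e,x,t)$; then the zeros of~$g$ are in bijection with~$W_e$, so $g$ has infinitely many zeros iff $W_e$ is infinite. This decouples the input index~$x$ from the time bound~$t$, which is precisely what your construction fails to do. If you prefer to keep \totalN\ as the source problem, you would need a similar decoupling --- e.g.\ a counter that advances to input $x+1$ only after $\varphi_e(x)$ has been seen to halt --- rather than the fixed diagonal $t=y$.
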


\begin{proof}
  In~$\piTwo$:
  the statement associated with the problem can be expressed as
  $$
    \forall m\,\exists\,x: (x\geq m) \wedge (f(x)=0)\Mdot
  $$

%
%  $$
%      0  &  \text{\spac if $x$, written in binary notation} \\
%  \right.
%  many zeros.
%  let~$f$ be an instance of $\neg(\oz)$. Define~$g$ as
%    \begin{array}{ll}
%  \right.
%
  To prove completeness,
  consider~$P$, the complement of the problem \infzN\ (``finite number of 
  zeros'').
  We prove the completeness of~$P$ in~$\sigTwo$ using a
  reduction of the $\sigTwo$-complete problem \fdomN\ 
  (Theorem~\ref{THREEcomp}, page~\pageref{THREEcomp}) to~$P$.
  Consider the function $T_e(\ang{x,t})\defined T(e,x,t)$ 
  (Definition~\ref{interval}, page~\pageref{interval}).
  We assume that for each~$x$ there is at most one~$t$ such that $T(e,x,t)=0$;
  thus, the number of zeros of~$T_e$ equals the size of the domain of~$\vph_e$.
  The instance of the class~$P$ that corresponds to the instance~$\vph_e$
  of \fdomN\ is defined as~$T_e$.
  Clearly $T_e$ has a finite number of zeros iff~$\vph_e$ is in \fdomN.
\end{proof}

\todoi{New problem complete in $\sigTwo$.}
\begin{theorem}
  \label{t-aaz}
  The \aazN\ problem is is $\sigTwo$-complete.
\end{theorem}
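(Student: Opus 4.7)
The plan is to show both containment in $\sigTwo$ and $\sigTwo$-hardness, the latter by reducing the $\sigTwo$-complete problem \fdomN\ (Theorem~\ref{THREEcomp}) to \aazN, using the Kleene $T$-predicate of Definition~\ref{interval}.

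For the upper bound, ``$f(\ov{x},y)=0$ for almost all $y$'' unfolds as $\exists N\,\forall y:(y<N)\vee(f(\ov{x},y)=0)$. Since $f$ is PR, the matrix is a PR (hence recursive) predicate, so the statement has the form $\exists\forall$ over a recursive matrix and therefore lies in $\sigTwo$.

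For $\sigTwo$-hardness, given an index~$e$ (the instance of \fdomN), I would define a PR function whose nonzeros are in bijection with the halting inputs of~$\vph_e$. Using the pairing $y=\ang{x,t}$ and the predicate $T(e,x,t)$ from Definition~\ref{interval} (which returns~$0$ iff the computation of $\vph_e(x)$ converges \emph{exactly} at step~$t$, so that for each $x$ at most one $t$ satisfies $T(e,x,t)=0$), put
$$
   g(y)\;=\;g(\ang{x,t})\;\defined\;1\dminus T(e,x,t)\Mdot
$$
This $g$ is PR, and $g(\ang{x,t})=1$ precisely when $x\in W_e$ and the halting occurs at step $t$, while $g(\ang{x,t})=0$ in every other case. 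Consequently, the set of nonzeros of~$g$ is in bijection with~$W_e$, via $\ang{x,t}\mapsto x$. Hence~$g$ is zero almost everywhere iff $W_e$ is finite, i.e.\ iff $\vph_e\in\fdom$. This is a PR (in fact \Smn-style) reduction $\fdom\reduces\aaz$, yielding $\sigTwo$-hardness.

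No real obstacle is expected; the only subtle point is using the ``exactly at step~$t$'' variant of~$T$ so that distinct halting inputs correspond to distinct pairs $\ang{x,t}$ contributing a single nonzero each. With the variant that accepts any $t$ at or after halting, the reduction would fail because each halting input would produce infinitely many nonzeros, making $g$ zero almost everywhere iff $W_e=\eset$, which is not $\sigTwo$-complete.
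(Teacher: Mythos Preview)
Your proof is correct and essentially identical to the paper's: both show membership via the $\exists\forall$ form and prove hardness by the same reduction from \fdomN, defining the PR function $\ang{x,t}\mapsto 1\dminus T(e,x,t)$ with the ``exactly at step~$t$'' convention for~$T$. Your closing remark about why the other convention would break the reduction is also spot on and matches the care the paper takes on this point.
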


\begin{proof}
  First notice that \aazN\ is in~$\sigTwo$, because a function~$f$ 
  is in \\
  \aazN\ iff
  $$
  \exists x_0\forall x: x\geq x_0 \imply f(x)=0\Mdot
  $$
  To prove completeness,
  use Theorem~\ref{THREEcomp} (page~\pageref{THREEcomp}) and the 
  following characterisation of \\
  \fdomN
  $$
   \exists a\forall x\forall t: 
   (x\geq a \wedge t\geq a) \imply 
   \text{(the computation $T(e,x,t)$ did not halt in time~$\leq t$)}
  $$
  to define a reduction of \fdomN\ to the PR problem \aazN: 
  given a ParRec function~$\vph_e$ (instance of \fdomN), consider the PR
  function $T'_e(\ang{x,t})\defined 1-T(e,x,t)$ 
  (see Definition~\ref{interval}, page~\pageref{interval};
  assume that for each~$x$ there is at most one~$t$ such that $T(e,x,t)=0$).
  The function~$T'_e$ is in the class \aazN\ iff~$\vph_e$ is in \fdomN.
\end{proof}

\todoi{Problem complete in $\sigOne$.}
\begin{theorem}
  \label{eq-one}
  The \eOneN\ problem is $\sigOne$-complete.
\end{theorem}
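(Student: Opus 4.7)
The plan is standard: show \eOneN\ is in $\sigOne$ by writing it as a single existential over a PR predicate, then reduce the already $\sigOne$-complete problem \ozN\ to it.

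For membership in $\sigOne$, observe that for fixed $\ov{x}$ the predicate
$$
P(\ov{x},y) \:\lra\: \bigl[(f(\ov{x},y)\dminus g(\ov{x},y)) + (g(\ov{x},y)\dminus f(\ov{x},y))=0\bigr]
$$
is primitive recursive by Property~\ref{eq-funs}, and the question is exactly $\exists y:P(\ov{x},y)$, which places the problem in $\sigOne$.

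For hardness, I would reduce \ozN\ to \eOneN\ by taking $g$ to be the constant zero function. Given an instance $\ov{x},\,f(\ov{x},y)$ of \ozN, the pair $(\ov{x},\,f(\ov{x},y),\,0(y))$ is an instance of \eOneN, and clearly $f$ has a zero iff $f$ and $0$ agree at some point. Since \ozN\ is $\sigOne$-complete (Theorem~\ref{t-oz}), this yields $\sigOne$-completeness of \eOneN.

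There is no real obstacle here: both the $\sigOne$ upper bound and the reduction from \ozN\ are immediate, and the only small technical point is expressing equality of PR values by a PR predicate, which is handled by Property~\ref{eq-funs}.
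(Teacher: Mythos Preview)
Your proof is correct and matches the paper's approach exactly: both show $\sigOne$-membership directly and obtain hardness by reducing \ozN\ via the choice $g(\ov{x},y)=0$. The only difference is that you spell out the $\sigOne$ upper bound using Property~\ref{eq-funs}, whereas the paper simply declares it ``clear''.
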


\begin{proof}
  Recall Definition~\ref{DeOne}, page~\pageref{DeOne}.
  The problem \eOneN\ is clearly in~$\sigOne$.
  The \ozN\ easily reduces to this problem if we fix $g(\ov{x},y)=0$
  (the zero function).
  The completeness follows from this reduction and the fact that 
  \ozN\ is $\sigOne$-complete,  Theorem~\ref{t-oz} (page~\pageref{t-oz}).
\end{proof}

\todoi{Problem complete in $\piOne$.}
\begin{theorem}
  \label{t-equi}
  The \equiN\ problem is $\piOne$-complete.
\end{theorem}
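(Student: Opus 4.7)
My plan is to handle membership and hardness separately, both by direct and very short arguments.

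For membership in $\piOne$, I would simply observe that the question ``$f(\ov{x},y)=g(\ov{x},y)$ for every $y$'' can be written as $\forall y:\, h(\ov{x},y)=0$, where $h(\ov{x},y) \defined (f(\ov{x},y) \dminus g(\ov{x},y)) + (g(\ov{x},y) \dminus f(\ov{x},y))$ is primitive recursive by Property~\ref{eq-funs}. Since the inner predicate is PR (hence recursive), this is a $\piOne$ statement, so \equiN\ lies in $\piOne$.

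For $\piOne$-hardness, the most economical route is to reduce the already-known $\piOne$-complete problem \azN\ (Theorem~\ref{t-az}) to \equiN. Given an instance $f(\ov{x},y)$ of \azN, I would set $g(\ov{x},y)\defined 0$, the (constant) zero function, and return the pair $(f,g)$ as the instance of \equiN. This map is clearly computable, and by definition $f$ is the zero function precisely when $f(\ov{x},y)=g(\ov{x},y)$ for every $y$. Hence $\az \reduces \equi$, which together with the membership in $\piOne$ yields $\piOne$-completeness.

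I don't anticipate any real obstacle here: the only subtlety worth flagging is that the reduction must output actual PR indices (e.g.\ Loop programs), but producing an index for the constant zero function from an index of $f$ is effective, so the \Smn-style construction goes through with no difficulty.
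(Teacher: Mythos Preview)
Your proof is correct. For membership you spell out the $\piOne$ form explicitly via Property~\ref{eq-funs}, while the paper simply remarks that the complement is semi-decidable; these are equivalent observations.

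For hardness, the routes differ slightly. You reduce \azN\ (already shown $\piOne$-complete in Theorem~\ref{t-az}) to \equiN\ by pairing the given~$f$ with the constant zero function. The paper instead reduces $\neg\HP$ directly: from an instance~$\langle e,x\rangle$ it builds the PR function $h(t)$ that flips the value of $T_{e,x}(t)$, and then asks whether~$h$ is equivalent to the zero function~$0(t)$. So both proofs ultimately land on the same instance shape---comparing a constructed PR function to the zero function---but the paper starts one link earlier in the reduction chain, redoing in effect the work of Theorem~\ref{t-az}. Your approach is more modular and slightly shorter; the paper's is self-contained but redundant given that \azN\ was already handled.
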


\begin{proof}
  The complement of the \equiN\ problem is clearly semi-decidable. \\
%  the words of the alphabet and an output which is either~0 (reject)
%  the \equiN\ problem is also undecidable.
  Consider the instance $\pair{e,x}$ of HP. 
  Using the Kleene Normal Form, we see that the computation~$\vph_e(x)$
  halts iff the corresponding PR function $T_{e,x}(t)\defined T(e,x,t)$ 
  (see Theorem~\ref{Knft}, page~\pageref{Knft})
  has one zero. 
  The PR function
  $$
  h(t) =
  \left\{
  \begin{array}{ll}
    0 & \text{\spac if $T_{e,x}(t)\neq 0$}  \\
    1 & \text{\spac if $T_{e,x}(t)= 0$}
  \end{array}
  \right.
  $$
  {\em is not} equivalent to the zero function~$0(t)$ iff
  $T_{e,x}(t)$ has at least one zero. This defines
  a reduction $\neg\HP$ to \equiN\ which proves the completeness.
\end{proof}

\subsection{Size of the codomain}
\label{codomain}
We now study decision problems related to the size of the codomain 
of a PR function. The \infCodN\ problem
will be used in Sections~\ref{Pfh} (page~\pageref{Pfh}) 
and~\ref{Pgfh} (page~\pageref{Pgfh}).
In the following problems, the instance is the PR function~$f$.

\begin{problem}
\kCodN, the codomain is finite with cardinality~$k$.\\
\quest Does the codomain of~$f$ have size~$k$? \\
\class $\piOne$-complete for~$k=1$ (Theorem~\ref{kCodT},
page~\pageref{kCodT}). \\
\class $\deltaTwo\setminus (\sigOne\cup\piOne)$
for~$k\geq 2$ (Theorem~\ref{kCodTwo}, page~\pageref{kCodTwo}).\closex
\end{problem}

\todoi{Problem complete in $\piOne$.}
\begin{theorem}
  \label{kCodT}
  The language associated with the problem \kCodON\ is $\piOne$-complete.
\end{theorem}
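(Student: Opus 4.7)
My plan is to establish membership in $\piOne$ first, then prove hardness by reducing the $\piOne$-complete problem \azN\ (Theorem~\ref{t-az}) to \kCodON.

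For membership, note that a PR function $f$ has a codomain of size~1 precisely when all its output values coincide, i.e.\ $\forall x\,\forall y: f(x)=f(y)$. Since $f$ is PR, the equality predicate inside the quantifier is recursive (in fact, PR), so this is a $\piOne$-statement; by Property~\ref{eq-funs} one can even write it as $\forall x,y: (f(x)\dminus f(y)) + (f(y)\dminus f(x)) = 0$.

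For $\piOne$-hardness, the key obstacle is that being the zero function is strictly stronger than having a singleton codomain (one demands the value~0, the other allows any constant). I would bridge this gap by forcing $0$ into the codomain of the reduced function. Given an instance $f$ of \azN, define a PR function $g$ by
\[
   g(x) \;=\; \begin{cases} 0 & \text{if $x$ is even,} \\ f((x-1)/2) & \text{if $x$ is odd.} \end{cases}
\]
This $g$ is clearly PR, and its codomain equals $\{0\}\cup\mathrm{codom}(f)$. Hence if $f$ is the zero function, then $\mathrm{codom}(g)=\{0\}$ has size~1; conversely, if $f$ takes some nonzero value~$v$, then $\{0,v\}\subseteq\mathrm{codom}(g)$ and so $|\mathrm{codom}(g)|\geq 2$. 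Therefore $f\in\Gaz\lra g\in\GkCodON$, giving the desired reduction $\az\reduces \kCodO$, and completeness follows from Theorem~\ref{t-az}.

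I do not expect either step to be technically difficult; the only subtle point is the one just addressed, namely turning a ``constant value'' question into a ``constant value equal to~$0$'' question, which is handled by padding the domain of $g$ with a hard-coded output of~$0$.
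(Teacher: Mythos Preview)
Your proof is correct and follows essentially the same approach as the paper: membership via a universal statement that all outputs agree (the paper writes $\forall x:f(x)=f(0)$, you write $\forall x,y:f(x)=f(y)$), and hardness via the same reduction from \azN\ by padding with a forced~$0$ output (the paper shifts by one, setting $g(0)=0$ and $g(x)=f(x-1)$ for $x\geq 1$, while you interleave~$0$ at the even arguments). These are cosmetic variants of the same idea.
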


\begin{proof}
  A positive answer to (\kCodON) can be expressed as
  $\forall x: (f(x)=f(0))$, so that (\kCodON) is in~$\piOne$.
  Reduce (\azN) to (\kCodON), as follows. Given an instance~$f$ of
  \azN\ define the instance~$g$ of (\kCodON) as
  $$
  \left\{
    \begin{array}{ll}
      g(0)=0  &  \\
      g(x)=f(x-1)  &  \text{\spac for $x\geq 1$}\Mdot
    \end{array}
  \right.
$$
The function~$g$ has codomain with size~1 iff~$f(x)=0$ (zero function).
As \azN\ is $\piOne$-complete (Theorem~\ref{t-az}, page~\pageref{t-az}), 
\kCodON\ is also $\piOne$-complete.
\end{proof}

\todoi{Problem  in $\deltaTwo \setminus (\sigOne\cup\piOne)$.}
\begin{theorem}
  \label{kCodTwo}
  For any integer~$k\geq 2$ the language associated with the
  problem \kCodN\ belongs to the class 
  $\deltaTwo\setminus (\sigOne\cup\piOne)$.
\end{theorem}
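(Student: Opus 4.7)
The plan follows exactly the three-part schema used in the proof of Theorem~\ref{t-eoz}: membership in $\deltaTwo$ via a conjunction of a $\sigOne$ and a $\piOne$ description, separation from $\piOne$ by reducing \ozN, and separation from $\sigOne$ by reducing its complement.

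\textbf{Membership in $\deltaTwo$.} Observe that ``the codomain of $f$ has exactly $k$ elements'' is equivalent to the conjunction of ``$f$ takes at least $k$ distinct values'' and ``$f$ takes at most $k$ distinct values''. The first assertion is
\[
  \exists x_1,\ldots,x_k:\;\bigwedge_{1\leq i<j\leq k} f(x_i)\neq f(x_j),
\]
a $\sigOne$ statement, and the second is
\[
  \forall x_1,\ldots,x_{k+1}:\;\bigvee_{1\leq i<j\leq k+1} f(x_i)=f(x_j),
\]
a $\piOne$ statement (both inner matrices are PR). Renaming the bound variables so they are disjoint and pulling the quantifiers to the front in one order gives a $\sigTwo$ form; pulling them in the opposite order gives a $\piTwo$ form. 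Hence $\kCod\in\sigTwo\cap\piTwo=\deltaTwo$.

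\textbf{Not in $\piOne$.} I would reduce the $\sigOne$-complete problem \ozN\ (Theorem~\ref{t-oz}) to $\kCod$. Given an instance $f$ of \ozN, define the PR function
\[
  g(x)=\begin{cases}(x\bmod k)+1 & \text{if }\exists y\leq x:f(y)=0\Mcomma\\ 1 & \text{otherwise.}\end{cases}
\]
Since the bounded existential quantifier over a PR predicate is PR, $g$ is PR. If $f$ has a zero, say the least one at $y_0$, then $g(x)=1$ for $x<y_0$ and $g(x)=(x\bmod k)+1$ for $x\geq y_0$; as $x$ ranges over $\{y_0,y_0+1,\ldots\}$ the value $(x\bmod k)+1$ attains every element of $\{1,\ldots,k\}$, so the codomain of $g$ equals $\{1,\ldots,k\}$. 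If $f$ has no zero, then $g\equiv 1$, with codomain of size $1<k$. Using $k\geq 2$, we conclude $f\in\oz$ iff $g\in\kCod$, so $\kCod\notin\piOne$.

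\textbf{Not in $\sigOne$.} Symmetrically, reduce the $\piOne$-complete problem $\neg(\oz)$ (Theorem~\ref{t-oz}) to $\kCod$. Given $f$, define
\[
  g(x)=\begin{cases} x+1 & \text{if }x<k\Mcomma\\ 1 & \text{if }x\geq k \text{ and }\forall y\leq x-k:f(y)\neq 0\Mcomma\\ k+1 & \text{otherwise.}\end{cases}
\]
Again $g$ is PR. If $f$ has no zero, then $g(0),\ldots,g(k-1)=1,\ldots,k$ and $g(x)=1$ for all $x\geq k$, so the codomain is exactly $\{1,\ldots,k\}$, of size $k$. If $f$ has a zero at $y_0$, then for every $x\geq y_0+k$ we have $g(x)=k+1$, so the codomain is $\{1,\ldots,k,k+1\}$, of size $k+1$. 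Hence $f\in\neg(\oz)$ iff $g\in\kCod$, so $\kCod\notin\sigOne$.

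\textbf{Main obstacle.} There is no serious obstacle: both reductions are elementary ``guard the default codomain $\{1,\ldots,k\}$ and enlarge or shrink it depending on the existence of a zero'' constructions. The only point that needs a line of care is that the quantifier alternation gymnastics in the $\deltaTwo$ membership argument requires the two blocks of variables in the $\sigOne$ and $\piOne$ conjuncts to be disjoint, so that quantifiers can be pulled out independently in both orders; this is automatic after renaming.
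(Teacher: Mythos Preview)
Your proof is correct and follows essentially the same three-part schema as the paper's proof: the $\deltaTwo$ membership via a $\sigOne\wedge\piOne$ conjunction with independent quantifier blocks is identical, and the two separation arguments are minor variants --- the paper reduces \azN\ and $\neg(\az)$ (which are $\piOne$- and $\sigOne$-complete) rather than $\neg(\oz)$ and \ozN, but these are interchangeable choices of complete problems and the resulting constructions are of the same flavour.
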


\begin{proof} 
Assume $k\geq 2$.\\
\noindent(i)
In~$\deltaTwo$:
   a positive answer to \kCodN\ can be expressed as
   (illustrated for the case~$k=2$)
   $$
     [\exists x_1,x_2:A(x_1,x_2)] \:\wedge\:
     [\forall z_1,z_2,z_3:B(z_1,z_2,z_3)]
   $$
   where 
   $$
   \begin{array}{lcll}
     A(x_1,x_2)    &=& f(x_1)\neq f(x_2) &
     \spac\text{$\abs{\cod{f}}\geq 2$} \\
     B(z_1,z_2,z_3)&=& 
        (f(z_1)=f(z_2))\vee(f(z_2)=f(z_3))\vee(f(z_3)=f(z_1)) &
     \spac\text{$\abs{\cod{f}}<3$}
   \end{array}
   $$
   The question associated with the problem can thus be expressed
   in~2 forms:
  \begin{eqnarray*}
    \label{formX}
     \exists x_1,x_2\:
     \forall z_1,z_2,z_3:
        A(x_1,x_2) \wedge B(z_1,z_2,z_3) \\
    \label{formY}
     \forall z_1,z_2,z_3\:
     \exists x_1,x_2:
        A(x_1,x_2) \wedge B(z_1,z_2,z_3)
  \end{eqnarray*}
  Thus \kCodN\ belongs both to $\deltaTwo=\sigTwo\cap\piTwo$.

\noindent(ii)
   Not in~$\sigOne$:
   reduce \azN\ to \kCodN: 
   let~$f$ be an instance of
   \azN. Define~$g$ as
   $$
   g(x) = \left\{
     \begin{array}{ll}
       x        & \text{\spac for $0\leq x< k$}\\
       k\times f(x-k) & \text{\spac for $x\geq k$}
     \end{array}
   \right.
   $$
   Clearly $\abs{\cod{g}}=k$ iff~$f(x)=0$ (zero function).
   Thus \kCodN\ is not in~$\sigOne$.

\noindent(iii)
   Not in~$\piOne$:
   reduce ($\neg$\azN) to \kCodN:
   let~$f$ be an
   instance of ($\neg$\azN). Define
   $$
   g(kx+i) = \left\{
     \begin{array}{lll}
       0 & \text{\spac if $f(x)=0$}      
           &\spac\text{(for $i=0$, $1$,\ldots, $k-1$)}\\
       i & \text{\spac if $f(x)\neq 0$}  
           &\spac\text{(for $i=0$, $1$,\ldots, $k-1$)}\Mdot
     \end{array}
   \right.
   $$
   Clearly $|\cod{g}|=k$ iff~$f(x)\neq 0$ for at least a value of~$x$.
   Thus \kCodN\ is not in~$\piOne$.
 \end{proof}

\todoi{Problem complete in $\sigTwo$.}

\begin{problem}
\fCodN, finite codomain.    \\
\quest Is the codomain of~$f$ finite?       \\
\class $\sigTwo$-complete (Theorem,~\ref{fCodT}, page~\pageref{fCodT}).\closex
\end{problem}

\begin{theorem}
\label{fCodT}
  The problem \fCodN\ is $\sigTwo$-complete.
\end{theorem}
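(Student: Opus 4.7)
The plan is to show membership in $\sigTwo$ first, and then reduce the $\sigTwo$-complete problem \fdomN\ (Theorem~\ref{THREEcomp}) to \fCodN.

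\textbf{Membership in $\sigTwo$.} I would observe that the codomain of a total function $f$ is finite if and only if its values are bounded, which admits the simple arithmetic description
$$
\exists n\,\forall x:\ f(x)\leq n\Mdot
$$
Indeed, if the codomain is finite, take $n$ to be its maximum; conversely, a bound of $n$ forces the codomain into the finite set $\{0,1,\ldots,n\}$. Since $f$ is PR, the matrix $f(x)\leq n$ is a PR (hence recursive) predicate, placing \fCodN\ in $\sigTwo$.

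\textbf{Hardness.} The natural candidate is a reduction from \fdomN. Given an instance $\vph_e$ of \fdomN, I would use the PR predicate $T(e,x,t)$ from Definition~\ref{interval} to build a PR function that ``leaks'' a distinct value into its codomain precisely for each element of $\dom(\vph_e)$. Concretely, define
$$
g(\langle x,t\rangle) =
\left\{
  \begin{array}{ll}
    x+1 & \text{if $T(e,x,t)=0$} \\
    0   & \text{otherwise.}
  \end{array}
\right.
$$
This $g$ is PR (it is a case split on a PR predicate using PR ingredients). Its codomain is $\{0\}\cup\{x+1:x\in\dom(\vph_e)\}$, so $\cod{g}$ is finite iff $\dom(\vph_e)$ is finite. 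Hence $\fdom\reduces\fCod$, and since \fdomN\ is $\sigTwo$-complete, so is \fCodN.

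The only point that requires care is arranging that the witnesses of distinct $x\in\dom(\vph_e)$ really contribute distinct codomain values; tagging the output with $x+1$ (rather than, say, $t$, which could collide across different $x$) is what makes the argument go through. No substantial obstacle is expected beyond this bookkeeping, since the proof closely parallels the strategy already used in Theorem~\ref{t-infzCOMP} and Theorem~\ref{t-aaz}, where $T(e,x,t)$ is used to port a domain-based property of $\vph_e$ to a zero- or value-based property of a derived PR function.
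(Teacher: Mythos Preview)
Your proof is correct. The membership argument is identical to the paper's. For hardness, the paper takes a slightly different route: instead of reducing directly from the ParRec problem \fdomN, it reduces from the complement of the PR problem \infzN\ (already shown $\piTwo$-complete in Theorem~\ref{t-infzCOMP}), by mapping a PR function $f$ to the PR function $g(n)=\#\{i\leq n : f(i)=0\}$, whose codomain is finite iff $f$ has only finitely many zeros. Your reduction goes straight from \fdomN\ via the Kleene predicate, tagging each detected convergence with $x+1$ so that $\cod{g}=\{0\}\cup\{x+1:x\in W_e\}$; this is essentially the same idea the paper uses in the proofs of Theorems~\ref{t-infzCOMP} and~\ref{t-aaz}, bypassing the intermediate PR problem. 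The paper's route keeps the hardness argument internal to the PR problems already classified, while yours is more direct; both are equally short and valid.
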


\begin{proof}
  \indent In~$\sigTwo$: the \fCodN\ statement can be expressed as
  $\exists m \forall x: f(x)\leq m$.

%  $$
%        0  &  \text{if $f(y)=0$ for some $y\leq x$}  \\
%  $$
%
%  g(x) = \left\{
%    \end{array}
%  It follows that \fCodN\ is not in~$\sigOne$.

\noindent
  $\sigTwo$-complete:
  the problem \infzN\ is $\piTwo$-complete (Theorem~\ref{t-infzCOMP},
  page~\pageref{t-infzCOMP}). Reduce \\
  $\neg$(\infzN) to \fCodN.
  The PR function~$f$ has a finite number of zeros iff the PR function defined by
  the following program has finite codomain.

   \bigskip

  \begin{minipage}{0.75\linewidth}
  \noindent\spacc
  {\tt Function} $g(n)$: \\
  \spacc $m\assign 0$;\\
  \spacc {\tt for} $i=0$, 1,\ldots, $n$:\\
  \spacc\spac {\tt compute} $f(i)$;\\
  \spacc\spac {\tt if} $f(i)=0$:\\
  \spacc\spac\spac $m\assign m+1$;\\
  \spacc {\tt return} $m$;
  \end{minipage}\\
\ 
\end{proof}

The problem \infCodN\ is the negation of the problem 
\fCodN. It follows
from Theorem~\ref{fCodT} that \infCodN\ is $\piTwo$-complete.

\color{black}
\subsection{Injectivity, surjectivity, and bijectivity}
\label{injbij}
The problems of deciding if a given PR function is injective,
surjective, or bijective are considered in this section. 

\begin{problem}
  \und{\prinjN}, primitive recursive injectivity.\\
  \inst a PR function~$f$.\\
  \quest is~$f$ an injective function?\closex
\end{problem}

\begin{problem}
  \und{\prsurjN}, primitive recursive surjectivity.\\
  \inst a PR function~$f$.\\
  \quest is~$f$ a surjective function?\closex
\end{problem}

\begin{problem}
  \und{\prbijN}, primitive recursive bijectivity.\\
  \inst a PR function~$f$.\\
  \quest is~$f$ a bijective function?\closex
\end{problem}

\subsubsection{Injectivity}
\todox{Lemma on CFG's not needed. Simpler proof below.}
%% For every \CFG\ $G$ there is a PR function~$f_G$ whose input is an
%%   \item [--] $f_G(n)=y$ if~$n$ codifies a word~$d$ which is a leftmost
%%     diagram.
%%     d \ar@{=>}[r]^{G}  &
%%     {\rotatebox{-90}{$\in$}}                   &
%%     \hspace*{1mm}\Sigma^\star & 
%% $$

%%   Otherwise, the result is the integer~$11n$.

\todoi{Problem complete in $\piOne$.}
\begin{theorem}
  \label{Tprinj}
  The problem \prinjN\ is $\piOne$-complete.
\end{theorem}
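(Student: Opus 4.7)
The plan is straightforward: show membership in $\piOne$ by writing out the definition of injectivity as a universal statement over a decidable predicate, then establish $\piOne$-hardness by reducing the $\piOne$-complete problem $\neg$(\ozN) to \prinjN.

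For membership, I would note that a PR function $f$ is injective iff
$$\forall x_1, x_2 : (x_1 = x_2) \vee (f(x_1) \neq f(x_2))\Mcomma$$
and since the matrix is a PR predicate in $x_1,x_2$ (using Property~\ref{eq-funs}), this places the problem in $\piOne$.

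For hardness, I would reduce $\neg(\ozN)$, which is $\piOne$-complete by Theorem~\ref{t-oz}, to \prinjN. Given an instance $f$ of \ozN, define
$$g(n) = \left\{\begin{array}{ll} 0 & \text{if } n = 0\\ n & \text{if $n \geq 1$ and $f(n-1)\neq 0$}\\ 0 & \text{if $n \geq 1$ and $f(n-1)= 0$} \end{array}\right.$$
Then $g$ is PR (it is built from $f$ by PR case analysis on the decidable condition $f(n-1)=0$). If $f$ has no zero, then $g(n) = n$ for every $n$, so $g$ is injective. Conversely, if $f$ has a zero at some $n_0$, then $g(n_0+1) = 0 = g(0)$, and $g$ fails to be injective. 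Hence $f \in \neg(\ozN) \lra g \in \prinj$, giving the desired reduction.

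The proof involves no real obstacle; the only thing that needs minor care is checking that $g$ is primitive recursive (rather than merely recursive), but this follows because the definition is a fixed PR case split on the PR predicate ``$f(n-1)=0$''. Combining the two parts yields $\piOne$-completeness.
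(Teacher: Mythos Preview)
Your proof is correct and essentially identical to the paper's. The paper phrases the hardness reduction as $\oz \reduces \neg\prinj$ rather than $\neg(\oz) \reduces \prinj$, but this is the same reduction, and the auxiliary function $g$ you define is exactly the one used in the paper.
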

As a consequence of this result the injective PR functions can not be
effectively enumerated and can not be characterised by a ``model of
computation'', see Assumption~\ref{model} (page~\pageref{model}).

\begin{proof}
  The \prinjN\ statement can be expressed as
  $$
  \forall m, n : (m\neq n) \Rarr (f(m)\neq f(n))
  $$
  Thus \prinjN\ belongs to the class~$\piOne$.\\
  % The {\rm \ECFG} ambiguity problem, {\rm \ECFG-amb}
  % (page~\pageref{ecfg}) is undecidable. 
  % We reduce {\rm \ECFG-amb} to
  % $\neg$\prinjN.  Let~$G$ be an \ECFG.  Using Lemma~\ref{LMDtoX},
  % define a PR function~$f_G$ as described in the Lemma. It is clear
  % that~$G$ is ambiguous iff the function~$f_G$ is not injective. Thus
  % \prinjN\
  % is not decidable.
  % Thus \prinjN\ belongs to the class~$\piOne$.\\
  The \ozN\ is $\sigOne$-complete (Theorem~\ref{t-oz}, 
  page~\pageref{t-oz}). We reduce \ozN\ to \\
  $\neg$\prinjN.  
  Let~$f$ be the instance of \ozN. Define the function~$g$ as
  $$
  \left\{
  \begin{array}{ll}
    g(0)=0           \\
    g(n)=n   & \text{\spac if $n\geq 1$ and $f(n-1)\neq 0$}\\
    g(n)=0   & \text{\spac if $n\geq 1$ and $f(n-1)= 0$}
  \end{array}
  \right.
  $$
  Clearly, $g$ is injective iff~$f$ {\em has no} zeros.
\end{proof}

\color{black}
\subsubsection{Surjectivity and bijectivity}
\label{PRsur}
%% \subsubsection*{Neither problem is semi-decidable.}
%%   the PR function of the Kleene normal form, namely to the function
%%    \left\{
%%      \end{array}
%% also PR.
%%      \begin{array}{lll}
%%              \text{($M(d(M))$ still running)}
%% Notice that $M(d(M))$ diverges iff the PR function~$g$ is surjective;
%%   \neg{\rm SHP}\reduces \text \prsurjN
%% semi-decidable~-- otherwise~$\neg{\rm SHP}$ would be semi-decidable,
%%   \prsurj \not\in \sigOne
%% \begin{equation}
%% % PAGE 22
%% $$

%% Consider now the PR function~$h(t)$
%%      \begin{array}{ll}
%%    \right.
%% \begin{equation}
%% As the self-halting problem is undecidable but semi-decidable, we have

%%    \left\{
%%        t+1& \text{not halted yet}                         \\
%% $$
%% %               0123456789
%%   t:    & 0 & 1 & 2 & 3 & 4 & 5 & 6 & 7 & 8 & 9 & \ldots  \\
%% and then when it halts at time~5:
%% \begin{array}{rccccccccccc}
%% $$
%%   \label{SHP-BI}
%% This reduction can also have been used to prove that

%%   The \prsurjN\ and the \prbijN\ problems belongs to the class 

%%   It can also be written
%%   \label{bijLA}
%%   \end{equation}

\todox{This result supersedes previous results (deleted).}
\todoi{Problem complete in $\piTwo$.}
\begin{theorem}
  \label{tt-surj}
  The problem \prsurjN\ is $\piTwo$-complete.
\end{theorem}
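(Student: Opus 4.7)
The statement ``$f$ is surjective'' is just $\forall y\,\exists x:f(x)=y$. Since $f$ is primitive recursive, the predicate $f(x)=y$ is PR and hence recursive, so this formula puts \prsurjN\ in $\piTwo$. This is the easy half and should only take a couple of lines.

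\textbf{Lower bound.} I plan to reduce the $\piTwo$-complete problem \totalN\ (Theorem~\ref{THREEcomp}) to \prsurjN. Given an index $e$ of a partial recursive function $\vph_e$, I use the PR predicate $T(e,x,t)$ from Definition~\ref{interval}: $T(e,x,t)=0$ iff the computation $\vph_e(x)$ converges at step $t$. Define the PR function
$$
g(\langle x,t\rangle) \;=\;
\begin{cases}
x+1 & \text{if } T(e,x,t)=0,\\
0   & \text{otherwise.}
\end{cases}
$$
This $g$ is total and primitive recursive by construction (the pairing $\langle\cdot,\cdot\rangle$ is PR, and the case distinction is controlled by the PR predicate $T$). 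The value $0$ always lies in $\cod{g}$, and for each $x\geq 0$ the value $x+1$ lies in $\cod{g}$ iff there is some $t$ with $T(e,x,t)=0$, i.e.\ iff $\vph_e(x)\halts$. Hence $g$ is surjective iff $\vph_e$ is total, and the map $e\mapsto\text{(index of }g\text{)}$ is primitive recursive (by the \Smn\ Theorem applied to the uniform definition of $g$ in~$e$). This gives $\total\reduces\prsurj$, completing the $\piTwo$-hardness.

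\textbf{What I expect to be the main obstacle.} There is no serious obstacle here: both directions are routine once one remembers the $T$-trick used repeatedly in this paper (e.g.\ in Theorems~\ref{t-infzCOMP} and~\ref{t-aaz}). The only point worth care is ensuring that the ``default'' value $0$ is always in the codomain so that surjectivity collapses exactly to the condition ``every $x+1$ is hit'', which is what matches totality of $\vph_e$; the $+1$ shift is exactly what enforces this. No appeal to the more delicate $\deltaTwo$-type arguments from earlier in the section is needed.
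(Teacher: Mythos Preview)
Your proof is correct. The upper bound is the same as the paper's. For the lower bound, however, you and the paper take different routes: the paper reduces $\neg\fdom$ (infinite domain) to \prsurjN\ by defining $f(n)$ as the \emph{count} of pairs $m=\ang{x,t}<n$ with $T(e,x,t)=0$, so that $f$ hits every natural number iff there are infinitely many halting pairs. Your reduction from \totalN\ is more direct: the ``indicator plus shift'' function $g(\ang{x,t})=x+1$ or $0$ makes surjectivity equivalent to every $x$ having a halting witness $t$. Both reductions produce a PR function uniformly in $e$, so both work equally well (and both transfer unchanged to the ParRec case via Theorem~\ref{alsoPART}). Your construction is arguably simpler; the paper's counting construction has the mild advantage that monotonicity of the count makes the ``iff'' trivially visible without needing to argue separately that $0\in\cod{g}$.

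On that last point: your claim that $0$ is always in $\cod{g}$ is correct under the paper's convention (Definition~\ref{interval}) that $T(e,x,t)=0$ only when the computation converges \emph{exactly} at step~$t$, since then for each $x$ at most one $t$ gives $T=0$ and all other pairs map to $0$. It would be worth saying this explicitly, as under the alternative ``$\leq t$'' convention a machine halting at step~$0$ on every input would make $0\notin\cod{g}$ and break the equivalence.
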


\begin{proof}
  An instance~$f$ of \prsurjN\ can be expressed as 
  $\forall y\exists x:f(x)=y$.
  It follows that \prsurjN\ is in the class~$\piTwo$.

  To prove completeness, we reduce the $\piTwo$-complete problem
  $\neg\fdom$ (Theorem~\ref{THREEcomp}, page~\pageref{THREEcomp})
  to \prsurjN. Let~$\vph_e$ be an instance of~$\neg\fdom$.
  Consider the Turing machine~$T$ in Definition~\ref{interval} 
  (page~\pageref{interval}). 
  Define the PR function~$f$ as follows:
  $f(n)$ is number of integers $m<n$ with\footnote{By ``$m=\ang{x,t}$''
  we mean: use $\ang{\cdot,\cdot}$, the standard bijection 
  $\ene^2\to\ene$, to extract~$x$ and~$t$ from~$m$.} 
  $m=\ang{x,t}$ for which $T(e,x,t)=0$, that is, for which the computation 
  $\vph_e(x)$ halts at exactly the step~$t$. 
  Clearly the codomain of the PR function~$f$ is~$\ene$ iff 
  the ParRec function~$\vph_e$ {\em does not} have finite 
  domain.
\end{proof}

\todox{This result supersedes previous results (deleted).}
\todoi{Problem complete in $\piTwo$.}
\begin{theorem}
  \label{tt-bij}
  The problem \prbijN\ is $\piTwo$-complete.
\end{theorem}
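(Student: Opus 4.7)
The proof splits into containment and hardness, the latter building on the surjectivity reduction of Theorem~\ref{tt-surj}.

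For containment in $\piTwo$, bijectivity of a total PR function is the conjunction of injectivity, expressible as the $\piOne$ statement $\forall m,n:(m\neq n)\Rarr(f(m)\neq f(n))$, and surjectivity, expressible as the $\piTwo$ statement $\forall y\,\exists x: f(x)=y$; their conjunction remains in $\piTwo$.

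For hardness, I reduce $\neg\fdom$ (which is $\piTwo$-complete by Theorem~\ref{THREEcomp}) to \prbijN. Given an index $e$, decode each $n$ as $\ang{x_n,t_n}$ and call $n$ a \emph{halting witness} if $T(e,x_n,t_n)=0$ (Definition~\ref{interval}). Let $h(n)$ and $\nu(n)$ be the PR counts of halting witnesses and of non-witnesses in $\{0,\ldots,n\}$, and define
$$
  f(n)=\left\{
    \begin{array}{ll}
      2(h(n)-1) & \text{if $T(e,x_n,t_n)=0$,}\\
      2\nu(n)-1 & \text{otherwise.}
    \end{array}
  \right.
$$
Then $f$ is PR, and it routes halting witnesses (in order of appearance) onto the even naturals and non-witnesses onto the odd naturals.

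If $\vph_e$ has infinite domain, then the halting witnesses are infinite (one $\ang{x,t_x}$ per $x\in\text{dom}(\vph_e)$, recalling that $T(e,x,t)=0$ for at most one $t$ per $x$) and the non-witnesses are likewise infinite, so $f$ enumerates each parity class exactly once and is a bijection $\ene\to\ene$. If the domain is finite of size $d$, only $d$ even naturals lie in the image, so $2d$ is missed and $f$ fails to be surjective. Hence $f$ is bijective iff $\vph_e\in\neg\fdom$, completing the reduction. The main obstacle beyond Theorem~\ref{tt-surj} is enforcing injectivity \emph{simultaneously} with surjectivity — the naive counting function used there is wildly non-injective; the parity-interleaving trick, with each class enumerated by a strictly increasing PR count, secures both properties at once.
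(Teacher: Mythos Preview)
Your proof is correct. The containment argument matches the paper's, and your hardness reduction is sound: the counts $h(n)$ and $\nu(n)$ are PR, the parity split guarantees injectivity always, and surjectivity onto the evens holds precisely when the halting witnesses are infinite, i.e.\ when $\vph_e\in\neg\fdom$. (You implicitly use that the non-witnesses are always infinite, which is clear since for any fixed $x$ all but at most one $t$ give $T(e,x,t)=1$.)

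The route differs slightly from the paper's. The paper factors the reduction as $\neg\fdom\reduces\prsurj\reduces\prbij$: starting from an \emph{arbitrary} instance $f$ of \prsurjN, it doubles $f$ to $g(n)=f(\lfloor n/2\rfloor)$ and then sends new values of $g$ to $2g(n)$ and repeats to the next unused odd integer. You instead reduce $\neg\fdom\reduces\prbij$ in one step, splitting on ``halting witness vs.\ non-witness'' rather than ``new value vs.\ repeat''. Both hinge on the same parity-interleaving idea to force injectivity while preserving the surjectivity criterion. The paper's factored version buys the extra modular statement $\prsurj\reduces\prbij$ (useful later for the ParRec bijectivity result), while your direct version is a bit leaner and avoids the intermediate doubling step.
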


\begin{proof}
  An instance~$f$ of \prbijN\ can be expressed as 
  $$
  \forall y, x_1, x_2\,\exists x: [f(x)=y]
          \wedge
          [x_1\neq x_2 \;\Rightarrow\; f(x_1)\neq f(x_2)]\Mdot
  $$
  Thus \prbijN\ belongs to the class~$\piTwo$.

  We reduce the $\piTwo$-complete problem \prsurjN\ (Theorem~\ref{tt-surj} 
  above) to \prbijN.
  Let~$f$ be the instance of \prsurjN. For convenience we will use
  the function $g(n)\defined f(\lfloor n/2\rfloor)$ so that there 
  are infinitely many pairs $(m,n)$ with $g(m)=g(n)$.

  The function~$h$, instance of \prbijN, is defined as 

  \bigskip
  \noindent\spacc
  \begin{minipage}{0.92\linewidth}
  {\tt Function} $h(n)$: \\
  {Compute the values} $g(0)$,\ldots, $g(n)$;\\
  {if} $g(n)\not\in\{g(0),\ldots,g(n-1)\}$ {\tt then}: \\
    \spac $h(n)=2\times g(n)$;  
       \spac\hspace*{1mm}{\tt // (if $g(n)$ is a new value)}\\
  {\tt else}: \\
    \spac $h(n)=${\tt first unused odd integer}  \\
        \spac{\tt // (if $g(n)=g(i)$ for some $i<n$)}
  \end{minipage}
  \bigskip

  The codomain of~$h$ consists of two parts: (i)~the set of even integers,
  having the form $2\times a$ where~$a$ belongs to 
  the codomain of~$f$,
  and (ii)~the set of all the odd integers, $\{1,3,5,\ldots\}$.
  Clearly~$h$ is always injective. 
  If~$f\in\prsurj$, all the even integers occur in the set~(i) 
  so that~$h\in\prbij$.
  If~$f\not\in\prsurj$, some integer~$a$ is not in the codomain of~$f(i)$.
  Thus~$2a$ does not belong to the set~(i),
  so that~$h\not\in\prbij$.
  Moreover the function~$h$ is PR (if~$f$ is PR).

  The table below illustrates the definition of the functions~$g$
  and ~$h$ from a function~$f$, given as example.
  $$
  \begin{array}{c|rrrrrr rrrrrrr}
     n &0  &1  &2  &3  &4  &5 &6  &7  &8  &9  &10 &11 &\ldots \\
  f(n) &\rule[-2mm]{0.0mm}{5mm}
        3  &2 &5  &5  &3 &40 
       &\ldots &\ldots &\ldots &\ldots &\ldots &\ldots &\ldots
  \\ \hline
  g(n) &\rule{0.0mm}{4mm}
        \ov{3}  &3  &\ov{2}  &2  &\ov{5}  
       &5         &5 &5  &3  &3  
       &\ov{40} &40&\ldots\\
  h(n) &6  &1  &4  &3 &10 &5  
       &7  &9  &11 &13&80 &15 &\ldots
  \end{array}
  $$
  Whenever the value of~$g(n)$ is new (bar over the number), the value
  of~$h(n)$ is $2\times g(n)$. If~$g(n)$ it occurred before, the successive
  odd integers~1, 3, 5\ldots\ are used as values of~$h(n)$.
  The function~$h$ will always be injective. However, it will be surjective
  only if~$f$ is also surjective.
\end{proof}

The table in Figure~\ref{summary} (page~\pageref{summary}) summarises
our undecidability results about PR functions.  

\begin{figure}
{
\begin{center}
\renewcommand{\tabcolsep}{3mm}
\renewcommand{\arraystretch}{1.5}
$$
\begin{tabular}{|cllc|}                                      \hline
  Num & PR problem    &  Class              & Proof in page  \\ \hline
  \textcolor{dred}{1} &
  \ozN          & $\sigOne$-complete
                      &  \pageref{t-oz}        \\
  \textcolor{dred}{2} &  
  \eozN         & $\deltaTwo\setminus(\sigOne\cup \piOne)$  
                      &  \pageref{t-eoz}       \\
  \textcolor{dred}{3} &  
  \gtkzN         & $\sigOne$-complete    
                      &  \pageref{t-gtkz}     \\
  \textcolor{dred}{4} &  
  \ekzN          & $\deltaTwo\setminus (\sigOne\cup \piOne)$  
                      &  \pageref{t-ekz}      \\
  \textcolor{dred}{5} &  
  \azN           & $\piOne$-complete     
                      &  \pageref{t-az}       \\
  \textcolor{dred}{6} &  
  \infzN        & $\piTwo$-complete   
                      &  \pageref{t-infzCOMP}      \\
  \textcolor{dred}{7} &  
  \kCodON             & $\piOne$-complete                 & 
             \pageref{kCodT}                                   \\
  \textcolor{dred}{8} &  
  \kCodN, $k\geq 2$ & $\deltaTwo\setminus (\sigOne\cup\piOne)$
                      &   \pageref{kCodTwo}   \\
  \textcolor{dred}{9} &  
  \fCodN              & $\sigTwo$-complete  
                      &   \pageref{fCodT}     \\
  \textcolor{dred}{10}&  
  \equiN              & $\piOne$-complete  
                      &   \pageref{t-equi}
                       \\ % \hdashline
  \textcolor{dred}{11}&  
  $\prinj$            & $\piOne$-complete
                      &   \pageref{Tprinj}                      \\
  \textcolor{dred}{12}&  
  $\prsurj$           & $\piTwo$-complete
                      &  \pageref{tt-surj}                      \\
  \textcolor{dred}{13}&  
  $\prbij$            & $\piTwo$-complete
                      &  \pageref{tt-bij}                 \\ \hline
\end{tabular}
$$
\end{center}
}
\caption{Some decision problems studied in this paper.
  The instance consists of one or two PR functions (two
  for the \equiN\ problem). 
  The problems are numbered (first column) for
  reference in Figure~\ref{classes}, page~\pageref{classes}.
  Note that the question associated
  with the problems in class~8 {\em is not}
  ``is the size of the codomain at least~2?'' but 
  ``is the size of the codomain exactly equal to~$k$?'' (for some fixed $k\geq 2$).}
%  or, equivalently, can be
%  proper sub-classes. 
\label{summary}
\end{figure}

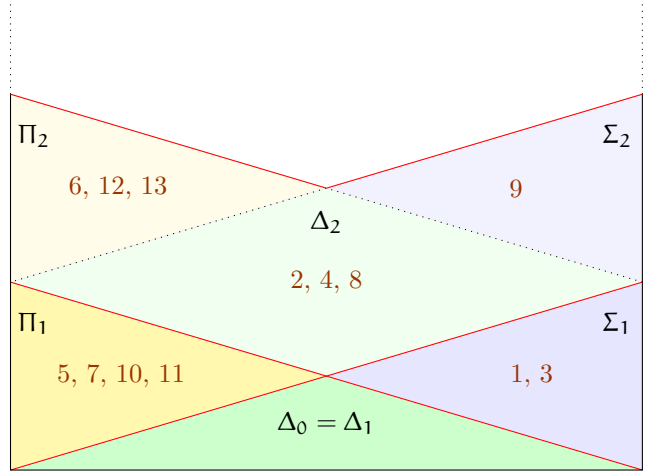
\begin{figure}
\begin{center}
  \begin{tikzpicture}[xshift=2.5cm, yshift=2.0cm, xscale=4.2, 
    yscale=2.5,auto]
  \draw[gray!10, ultra thin, fill=green!20]  
         (0,0) -- (2,0) -- (1,0.5) -- (0,0);
  \draw[gray!10, ultra thin, fill=yellow!40] 
         (0,0) -- (1,0.5) -- (0,1) -- (0,0);
  \draw[gray!10, ultra thin, fill=blue!10]   
         (2,0) -- (2,1) -- (1,0.5) -- (2,0);
  \draw[gray!10, ultra thin, fill=yellow!10]   
         (0,1) -- (1,1.5) -- (0,2) -- (0,1);
  \draw[gray!10, ultra thin, fill=blue!05]   
         (2,1) -- (2,2) -- (1,1.5) -- (2,1);
  \draw[gray!10, ultra thin, fill=green!06]   
         (0,1) -- (1,0.5) -- (2,1) -- (1,1.5) -- (0,1);
  \draw[black, dotted, thin]   
         (2,1) -- (1,1.5) -- (0,1);
  \draw[red,   thin]  (0,2) -- (1,1.5);
  \draw[red,   thin]  (2,2) -- (1,1.5);
  \draw[red,   thin]  (0,1) -- (2,0);
  \draw[red,   thin]  (2,1) -- (0,0);
  \draw[black, thin]  (0,2) -- (0,0) -- (2,0) -- (2,2);
  \draw[black, thin, dotted]    (0,2) -- (0,2.5);
  \draw[black, thin, dotted]    (2,2) -- (2,2.5);
    \node at (1.920,1.78)   {$\sigTwo$};
    \node at (0.075,1.78)   {$\piTwo$};
    \node at (1.000,1.32)   {$\deltaTwo$};
    \node at (1.920,0.78)   {$\sigOne$};
    \node at (0.075,0.78)   {$\piOne$};
    \node at (1.000,0.25)   {$\deltaZero=\deltaOne$};
    \node at (1.65,0.50) {\textcolor{dred}{{1}, {3}}};
    \node at (0.35,0.50) {\textcolor{dred}{{5}, {7}, {10}, {11}}};
    \node at (1.60,1.50) {\textcolor{dred}{{9}}};
    \node at (1.00,1.00) {\textcolor{dred}{2, 4, 8}};
    \node at (0.34,1.50) {\textcolor{dred}{6, 12, 13}};
  \end{tikzpicture}
  \end{center} 
  \caption{Location in the arithmetic hierarchy of the problems
    mentioned in Figure~\ref{summary}, page~\pageref{summary}. The
    problem numbers (1 to~12 in \textcolor{dred}{this} colour) refer to
    the numbers in the first column of Figure~\ref{summary}. 
    All the decision problems in~$\sigOne$, $\piOne$,
    $\sigTwo$, and $\piTwo$ {\em are complete in the respective class.}}
\label{classes}
\end{figure}
% {\footnotesize
%   *+[F] \txt{PCP}     \ar@{-}[d]                 & & 
%   *+[F] \txt{\ECFG-amb} \ar@{-}[d]^{\neg}               \\
%   *+[F] \txt{\prbijN}                             &
% $
% \caption{Reductions used in the proofs of Section~\ref{injbij}.}

\subsection{Primitive recursive versus partial recursive problems}
\label{compare}
It is interesting to compare the degree of undecidability of the same
problem about a function~$f$ in two cases: (i)~$f$ if primitive
recursive, (ii)~$f$ is partial recursive.

\begin{theorem}
  \label{alsoPART}
  Let~$A$ be a PR decision problem corresponding to the question
  ``does a given PR function have the property~$P$?'' and let~$X$ be
  any decision problem.  If $X\reduces A$, then $X\reduces A'$,
  where~$A'$ corresponds to the question ``does a given partial
  recursive function have the property~$P$?''.
\end{theorem}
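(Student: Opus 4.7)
The plan is to compose the given reduction $r: \ene \to \ene$ witnessing $X \reduces A$ with a recursive ``translation'' map that converts a PR index into a ParRec index for the same function, and then check that the property $P$ gives the same answer in either framework once the underlying function is total.

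First I would fix a standard indexing of the PR functions (say, via Loop programs, as assumed in Section~\ref{pr-funs}) and the usual indexing $(\vph_e)$ of the ParRec functions. There is a total recursive function $\tau: \ene \to \ene$ such that, for every PR index $i$, $\vph_{\tau(i)}$ is the same function as the PR function with index $i$. This $\tau$ is obtained by effectively turning a Loop program into a Turing machine (or a $\mu$-recursive description) computing the same total function; the construction is standard and primitive recursive in the description of the program.

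Next I would argue that the property $P$ is stable under this translation. Since the instance of $A$ is a PR function, it is total, so $W_{\tau(i)} = \ene$. For the properties considered in this paper (``having a zero'', ``codomain of size $k$'', ``equivalence to the zero function'', injectivity, surjectivity, bijectivity, etc.), the ParRec definitions reduce to the usual total definitions whenever the function is total, by Definition~\ref{Ginjsurbij} and the remarks following it. Thus $P$ holds of the PR function with index $i$ iff $P$ holds of the ParRec function $\vph_{\tau(i)}$.

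Putting this together, from $X \reduces A$ via the recursive function $r$ I define $r' \defined \tau \circ r$, which is recursive by closure of the recursive functions under composition. For every $x$,
\[
x \in X \;\lra\; r(x) \in A \;\lra\; \text{the PR function with index $r(x)$ has $P$} \;\lra\; \vph_{\tau(r(x))} \text{ has } P \;\lra\; r'(x) \in A'\Mdot
\]
Hence $X \reduces A'$. The only delicate point, and the one that requires the implicit assumption that $P$ is ``the same'' property in both settings, is the middle equivalence, i.e.\ the claim that the total-function reading of $P$ coincides with its ParRec reading on total inputs; this is the observation I would flag as the substantive content of the theorem, the rest being routine application of the $\Smn$ theorem and of the effective translation $\tau$.
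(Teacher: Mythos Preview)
Your argument is correct and the underlying idea is the same as the paper's: a PR function is in particular a partial recursive function, so if the reduction lands on a PR function with property~$P$, that same function, viewed as a ParRec function, still has property~$P$. The only difference is in bookkeeping. The paper adopts the convention that the set~$\IPR$ of PR indices is literally a subset of~$\ene$ with $\vph_p$ (ParRec) equal to the PR function coded by~$p$; under this convention the \emph{same} reduction function~$r$ already witnesses $X\reduces A'$, and no translation map is needed. Your version is more explicit: you do not assume the two indexings are compatible and instead interpose a recursive translation~$\tau$ from Loop programs to Turing-machine indices, taking $r'=\tau\circ r$. This buys you a proof that works regardless of how the two enumerations are set up, at the cost of one extra (routine) step; the paper's version is shorter but relies on its ambient convention. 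Your remark that the ``delicate point'' is the agreement of the PR and ParRec readings of~$P$ on total inputs is exactly the content the paper leaves implicit.
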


\begin{proof}
  Let~$f$ be the function corresponding to the reduction $X\reduces
  A$.  The image by~$f$ of a (positive or negative) instance~$x$
  of~$X$ is a PR function, thus it is also a partial function.
  So, we can use the same function~$f$ for the reduction
  $X\reduces A'$.
\end{proof}

% output is~0. 

\begin{figure}
{
\begin{center}
\renewcommand{\arraystretch}{1.5}
$$
\begin{tabular}{|llcl|}                                       \hline
\multirow{2}{*}{Problem}
           &  Primitive recursive   &   & Partial recursive      \\
           &  (problem $\PRpow{{\rm P}}$) &   & (problem P)          \\ \hline
$f(0)=0$   & $\Delta$ & $<$ & $\sigOne$-complete                  \\
\totalN    & $\Delta$ & $<$ & $\piTwo$-complete                  \\
\recurN    & $\Delta$ & $<$ & $\sigThree$-complete                \\
\GozN       & $\sigOne$-complete &$=$&$\sigOne$-complete         \\
\GnoZerosN  & $\piOne$-complete  & $=$ &
                $\piOne$-complete                                 \\
\GeozN      & $\deltaTwo\setminus(\sigOne\cup \piOne)$         &
       $<$    & $\piTwo$-complete                                 \\
\GgtkzN     & $\sigOne$-complete & $=$ &                        
                $\sigOne$-complete                                \\
\GekzN      & $\deltaTwo\setminus (\sigOne\cup \piOne)$        &
       $<$    & $\piTwo$-complete                                 \\
\GazN       & $\piOne$-complete  & $<$ &                          
                $\piTwo$-complete                                  \\
\GinfzN    & $\piTwo$-complete   & $=$ &                        
               $\piTwo$-complete                                  \\
\fdomN     & $\Delta$            & $<$ &                    
                    $\sigTwo$-complete                            \\
\GkCodON   & $\piOne$-complete & $<$ &                    
                    $\deltaTwo\setminus(\sigOne\cup\piOne)$       \\
\GkCodN, $k\geq 2$ & $\deltaTwo\setminus (\sigOne\cup\piOne)$  & 
       $=$      & $   \deltaTwo\setminus (\sigOne\cup\piOne)$      \\
\GequiN     & $\piOne$-complete & $<$ &$\piTwo$-complete
\\ %\hdashline
$\Gprinj$   & $\piOne$-complete & $=$                    & 
                $\piOne$-complete                                 \\
$\Gprsurj$  & $\piTwo$-complete & $=$                    &  
                $\piTwo$-complete                                 \\
$\Gprbij$   & $\piTwo$-complete & $=$                    & 
                $\piTwo$-complete                          \\ \hline
\end{tabular}
$$
\end{center}
}
\caption{The degree of undecidability of some decision problems about
  one or two functions.
  For each problem~$P$ two cases are compared: 
  the function is primitive recursive ($\PRpow{{\rm P}}$) and 
  the function is partial recursive (P). The
  symbol~``$=$'' means that the partial recursive problem and the 
  corresponding PR problem
  have the same degree of undecidability (in terms of the arithmetic
  hierarchy), while the 
  symbol~``$<$'' means that the PR problem problem is ``less undecidable'' 
  than the corresponding partial recursive problem.
  The \recurN\ is defined in Example~\ref{Precur}, page~\pageref{Precur}.
%  it may for instance be $\sigTwo$-complete or belong to
}
\label{c-summary}
\end{figure}

\subsubsection*{Partial recursive functions: some undecidability results}
\label{parRec}
We now proof the AH classes mentioned in the last column of
Figure~\ref{c-summary} (page~\pageref{c-summary}).
The instance of the decision problems studied in this section is
thus a {\em ParRec function}.

When we mention the value~$f(x)$ of a ParRec 
function~$f$ it is implicitly assumed that this value exists, 
that is, that $\exists t:F(x,t)$; {\em recall the meaning of the predicate}
$F(x,t)$, see Definition~\ref{interval}, page~\pageref{interval}.
\begin{enumerate}
\item  
  $f(0)=0$: $\sigOne$-complete, easy to reduce the SHP to this problem.
\item  
  \totalN: $\piTwo$-complete, see Theorem~\ref{THREEcomp} 
  (page~\pageref{THREEcomp}).
\item 
  \GozN: $\sigOne$-complete.
\item  \GnoZerosN, $f$ has no zeros: $\piOne$-complete.
\item  \label{eozIT}
  \GeozN: 
  $\piTwo$-complete. \\
  Proof.
  The corresponding statement is below\footnote{Lines {\fnsz\sf \fnsz{(4)}}
  and {\fnsz\sf \fnsz{(5)}}: when we speak of the value
    of~$f(x_1)$ we also say that~$f(x_1)$ is defined; and similarly 
    for~$f(x_2)$.} (lines~{\fnsz~(1)} to~{\fnsz~(5)}). 
  The $\forall\exists$ sequence originates in lines {\fnsz\sf \fnsz{(4)}} 
  and {\fnsz\sf \fnsz{(5)}}.\\
\spac
    {\fnsz (1)}\spac  $\exists x: \text{$[f(x)$ is defined and $f(x)=0]$}$ and\\
\spac
    {\fnsz (2)}\spac  $\forall x_1,\,x_2:$ either\\
\spac
    {\fnsz (3)}\spacc $[x_1=x_2]$ or $[f(x_1)$ is undefined$]$ 
           or $[f(x_2)$ is undefined$]$ or\\
\spac
    {\fnsz (4)}\spacc $\text{$[f(x_1)$ defined and $f(x_1)\neq 0]$}$ or\\
\spac
    {\fnsz (5)}\spacc $\text{$[f(x_2)$ defined and $f(x_2)\neq 0]$}$,

  To prove completeness we reduce the $\piTwo$-complete \GequiN\ 
  (Theorem~\ref{THREEcomp}, page~\pageref{THREEcomp}) to\\
  \GeozN. 
  Consider an instance $\pair{f,g}$ of \GequiN\ and define
\rule[-2mm]{0.0mm}{5mm}

  \noindent
  \begin{minipage}{0.85\linewidth}
  \noindent
  {\tt Function} $h(n)$: \\
  \spac {\tt if $n=0$: return} 0;\\
  \spac  {\tt if $n\geq 1$}:\\
  \spac\spac {\tt if} for some $x\leq n$, $f(x)$ and $g(x)$ are already 
              defined at step~$\leq n$
              with $f(x)\neq g(x)$, {\tt then} \\
  \spac\spac\spac  {\tt return 0}; \spacc {\tt // $f\neq g$}\\
  \spac\spac  {\tt else}\\
  \spac\spac\spac {\tt return 1} \spacc {\tt // $f=g$ until now.}\\
  \end{minipage}

\rule[1mm]{0.0mm}{5mm}
Clearly the function~$h(n)$ has exactly one zero iff~$f$ and~$g$
are the same (partial) function.

%%     [\exists x, t: (F(x,t) \wedge f(x)=0)] \;\wedge \\
%%      (\forall t'' : \neg F(x_2,t''))
%%        (\exists t_2:F(x_2,t_2)\wedge f(x_2)\neq 0)\rbrack
%%   $(\exists x,t:A)\wedge (\forall x_1,x_2,t',t''\exists t_1,t_2:B)$,
%%   $\forall x_1,x_2,t',t''\:\exists t_1,t_2,x,t\::\: A\wedge B$,
%%   (ii)~$\Geoz\not\in\sigOne$.      \\
%%   $$
%%       0   & \text{if $t=0$}\\
%%   \right.
%%   Reduction HP$\reduces\Geoz$: given the instance $\langle T,x\rangle$
%%   \begin{array}{lcl}
%%       0   & \text{if $T(x)$ halts}  \\
%%   \\  \\
%%   $$
\item  
  $\Ggtkz$: $\sigOne$-complete. \\
  It is easy to show that it is in~$\sigOne$
  (use the ``dovetailing'' technique); and that \GozN\ reduces easily to \GgtkzN,
  see the proof of Theorem~\ref{t-gtkz} (page~\pageref{t-gtkz}).
\item \GekzN\ ($k$ fixed): 
  $\piTwo$-complete. \\
  Proof similar to \GeozN, see item~\ref{eozIT}.
\item  
  \GazN: $\piTwo$-complete. \\
  Proof.
  The corresponding question can be expressed as
  $$
  \forall x \exists t: F(x,t) \wedge (f(x)=0)
  $$
  so that \GazN\ belongs to~$\piTwo$.\\
  Consider the $\piTwo$-complete problem \total, Theorem~\ref{THREEcomp}
  (page~\pageref{THREEcomp}). Reduce \totalN\ to \GazN\ as follows.
  Given an instance~$f$ of \totalN, define
  $$
  g(n)=
  \left\{
  \begin{array}{ll}
    0 & \text{\spac if the computation of $f(n)$ converges} \\
    {\rm undefined} 
      & \text{\spac otherwise}
  \end{array}
  \right.
  $$
  The function~$g$ is the zero function iff~$f$ is total.
\item 
  \GinfzN: $\piTwo$-complete.\\
  Proof. The question associated with the problem can be expressed as
  \begin{equation*}
    \forall m\,\exists x, t: 
        (x\geq m) \wedge F(x,t) \wedge (f(x)=0)
  \end{equation*}
  Thus \GinfzN\ belongs to~$\piTwo$. \\
  It is not difficult to reduce $\neg\fdom$ (recall the \fdomN\ is 
  $\sigTwo$-complete, Theorem~\ref{THREEcomp}, page~\pageref{THREEcomp}) 
  to \GinfzN, from which the result follows. 
\item  
  \label{proof-COD-one-ParR}
  \GkCodON: $\deltaTwo \setminus (\sigOne\cup\piOne)$.\\
  Proof. \GkCodON\ can be expressed as {\em the negation} of
  $$
     \overbrace{\neg[\exists x, t : F(x,t)]}^{(1)}
     \;\vee\;
     \overbrace{\exists x_1, x_2, t_1, t_2, :
       F(x_1,t_1) \wedge F(x_2,t_2) \wedge f(x_1)\neq f(x_2)}^{(2)}
  $$
  Part~(1) means ``$f$ it is the totally undefined function'' while
  part~(2) means ``there are at least two values of~$x$ for which~$f(x)$ is
  defined and has different values. We can rewrite this logical
  statement, which is the negation of \GkCodON, as
  $$
     \exists x_1, x_2, t_1, t_2 \; \forall x, t : 
        [\neg F(x,t)]
        \;\vee\;
        [F(x_1,t_1) \wedge F(x_2,t_2) \wedge f(x_1)\neq f(x_2)]
  $$
  In this case the order of the quantifiers can be changed, see the 
  proof of Theorem~\ref{t-eoz} (page~\pageref{t-eoz}).
  Thus \GkCodON\ (whose statement is the negation of the above) 
  belongs to $\deltaTwo=\sigTwo\cap\piTwo$.\\ 
  We now show that the problem \GkCodON\ is neither in~$\sigOne$ 
  nor in~$\piOne$.
  First reduce $\neg${\rm SHP} to (\GkCodON). Let~$n$ be the instance of
  $\neg${\rm SHP}. Define
  $$
  f(t) = 
  \left\{
    \begin{array}{ll}
      0   & \text{\spac if $t=0$}\\
      0   & \text{\spac if $\vph_n(n)$ does not converge in~$\leq t$ steps
            ($t\geq 1$)}  \\
      1   & \text{\spac otherwise ($t\geq 1$)}
    \end{array}
  \right.
  $$
  Clearly~$\vph_n(n)$ diverges iff the codomain of~$f$ has size~1.\\
  Now reduce {\rm SHP} to (\GkCodON). Let~$n$ be the instance of
  {\rm SHP}. Define
  $$
  f(t) = 
  \left\{
    \begin{array}{ll}
      \text{undefined} & 
      \text{\spac if $\vph_n(n)$ does not converge in~$\leq t$ steps}  \\
      0   & \text{\spac otherwise}
    \end{array}
  \right.
  $$
  The computation~$\vph_n(n)$ halts iff the codomain of~$f$ has size~1.
  Thus \GkCodON\ belongs to~$\deltaTwo\setminus(\sigOne\cup\piOne)$.
\todox{Review the proof.}
\item \GkCodN, $k\geq 2$: $\deltaTwo\setminus(\sigOne\cup\piOne)$.\\ 
  Proof. We exemplify for the case~$k=2$, the
  cases~$k\geq 3$ are similar. 
  The the statement \GkCodTN\ can be expressed as
  $$
  \begin{array}{ll}
    \exists x,x',t,t': F(x,t) \wedge F(x',t') \wedge
        (f(x)\neq f(x'))   \;\wedge\; 
               & \text{\spac$\abs{\cod{f}}\geq2$}\\
    \forall x_1,x_2,x_3,t_1,t_2,t_3 : Q(x_1,x_2,x_3,t_1,t_2,t_2)
               & \text{\spac$\abs{\cod{f}}<3$}\\
 \end{array}
  $$
%$$
where $Q(x_1,x_2,x_3,t_1,t_2,t_3)$ 
  denotes\footnote{When $F(x_1,t_1)$ is false (the computation of~$f(x_1)$
    has not yet halted) the value of~$f(x_1)$  is irrelevant because the 
    disjunct $\neg F(x_1,t_1)$ is true. 
    In more detail: if we write~$f_{t_1}(x_1)$ instead of~$f(x_1)$,
    the value of~$f_{t_1}(x_1)$ is arbitrary when $F(x_1,t_1)$ is false
    and $f_{t_1}(x_1)=f(x_1)$ otherwise.
    And similarly for~$f(x_2)$ and~$f(x_3)$.}
$$
  \neg F(x_1,t_1) \vee \neg F(x_2,t_2) \vee \neg F(x_3,t_3) \; \vee  
  (f(x_1)=f(x_2)) \vee (f(x_2)=f(x_3)) \vee (f(x_3)=f(x_1))\Mdot
$$
The statement \GkCodTN\ can be rewritten in 2 forms:
$$
\begin{array}{l}
  \exists x,x',t,t' \:\forall x_1,x_2,x_3,t_1,t_2,t_3 :
    [F(x,t) \wedge F(x',t') \wedge (f(x)\neq f(x'))] \wedge
    Q(x_1,x_2,x_3,t_1,t_2,t_3)
\\
  \forall x_1,x_2,x_3,t_1,t_2,t_3 \: \exists x,x',t,t':
    [F(x,t) \wedge F(x',t') \wedge (f(x)\neq f(x'))] \wedge
    Q(x_1,x_2,x_3,t_1,t_2,t_3)\Mdot
\end{array}
$$
(see the proof of Theorem~\ref{t-eoz}, page~\pageref{t-eoz}.)
Thus
\GkCodTN\ belongs to~$\sigTwo\cap\piTwo=\deltaTwo$.

To prove that \GkCodTN\ is not in~$\piOne$,
reduce {\rm SHP} (instance~$n$) to \GkCodTN.
Let~$n$ be the instance of {\rm SHP}.
$$
  f(t) = 
  \left\{
    \begin{array}{ll}
      0   & \spac\text{if $t=0$}\\
      0   & \spac\text{if $\vph_n(n)$ did not converge in~$\leq t-1$ steps
            ($t\geq 1$)}  \\
      1   & \spac\text{if $\vph_n(n)$ converges in~$\leq t-1$ steps
            ($t\geq 1$)}
    \end{array}
  \right.
$$
(the case $f(0)=0$ is considered because the possibility of
convergence for~$t=0$). The function $f$ has codomain with 
size~2 iff $\vph_n(n)\halts$. \\
To prove that \GkCodTN\ is not in~$\sigOne$,
reduce {\rm $\neg$SHP} (instance~$n$) to \GkCodTN.
Let~$n$ be the instance of $\neg${\rm SHP}.
$$
  f(t) = 
  \left\{
    \begin{array}{ll}
      0   & \text{\spac if $t=0$}\\
      1   & \text{\spac if $t=1$}\\
      0   & \text{\spac if $\vph_n(n)$ did not converge in~$\leq t-2$ steps
            ($t\geq 2$)}  \\
      2   & \text{\spac if $\vph_n(n)$ converges in~$\leq t-2$ steps
            ($t\geq 2$)}
    \end{array}
  \right.
$$
The function $f$ has codomain with size~2 iff~$\vph_n(n)$ diverges.
\end{enumerate}

The following problems use Definition~\ref{Ginjsurbij}, page~\pageref{Ginjsurbij}.
\begin{enumerate}
\setcounter{enumi}{11}
\item \GprinjN: $\piOne$-complete.\\
  Proof. For ParRec functions the \GprinjN\
  statement can be expressed as
%    similar assumption for~$f(y)$, are included in~\eqnref{EQprinj}. 
  \begin{equation}
    \label{EQprinj}
    \neg[ 
    \exists x, y, t, t': 
    F(x,t) \wedge F(y,t') \wedge (f(x)=f(y))
    ]    
  \end{equation}
  Define a reduction from \GozN\ to $\neg$\GprinjN\ as follows. Let~$f$ be
  the instance of \GozN. Define~$g$ as
  $$
    \left\{
    \begin{array}{ll}
      g(0)=0  \\
      g(n)=0 & \text{\spac if $n\geq 1$, $f(n-1)\halts$
                     and $f(n-1)=0$}  \\
      g(n)={\rm undefined} & 
                    \text{\spac otherwise}\Mdot
    \end{array}
    \right.
  $$
  The function~$f$ has at least one~0 iff the function~$g$ {\em is not}
  injective.
  As \GozN\ is $\sigOne$-complete, \GprinjN\ is $\piOne$-complete.
\item  \GprsurjN: $\piTwo$-complete.\\
  Proof. For ParRec functions the \GprsurjN\
  statement can be expressed as
  $$
  \forall y\,\exists x, t: 
  F(x,t) \wedge (f(x)=y)
  $$
  Thus \GprsurjN\ is in~$\piTwo$. \\
  The reduction  from the $\piTwo$-complete problem $\neg\fdom$
  to \prsurjN\ used in the proof of Theorem~\ref{tt-surj} 
  (page~\pageref{tt-surj}) is also a reduction from \\
  $\neg\fdom$ to 
  \GprsurjN\ (because a PR function is also a ParRec function).
\item
\label{GbijP}
  \GprbijN\ (for ParRec functions): $\piTwo$-complete.\\
  Proof. The \GprbijN\ statement can be expressed as
  $$
  \begin{array}{l}
   \forall y, x, x_1, x_2\:\exists x',t:
      \overbrace{F(x,t)}^{\text{total}}                  \;\wedge \;   
      \overbrace{f(x')=y}^{\text{surjective}}                 \;\wedge \;  
      \overbrace{(x_1\neq x_2)\;\Rarr\;f(x_1)\neq f(x_2)}^{\text{injective}}\Mdot
  \end{array}
  $$
  Note that the conjunct ``$\forall x \exists t: F(x,t)$'' (total function)
  ensures that we can talk about~$f(x')$, $f(x_1)$ and~$f(x_2)$ 
  without including any other convergence conditions. 
  Thus \GprbijN\ is in~$\piTwo$. \\
  Recall that the problem \prsurjN\ is $\piTwo$-complete\\
  (reduction $\neg\fdom\reduces\prsurj$, see the proof of
  Theorem~\ref{tt-surj}, page~\pageref{tt-surj})
  and note that the reduction $\prsurj\reduces\prbij$ (proof of 
  Theorem~\eqnref{tt-bij}, page~\pageref{tt-bij}) is also a reduction
  $\prsurj\reduces\Gprbij$. Thus, \GprbijN\ is in $\piTwo$~complete.
\end{enumerate}

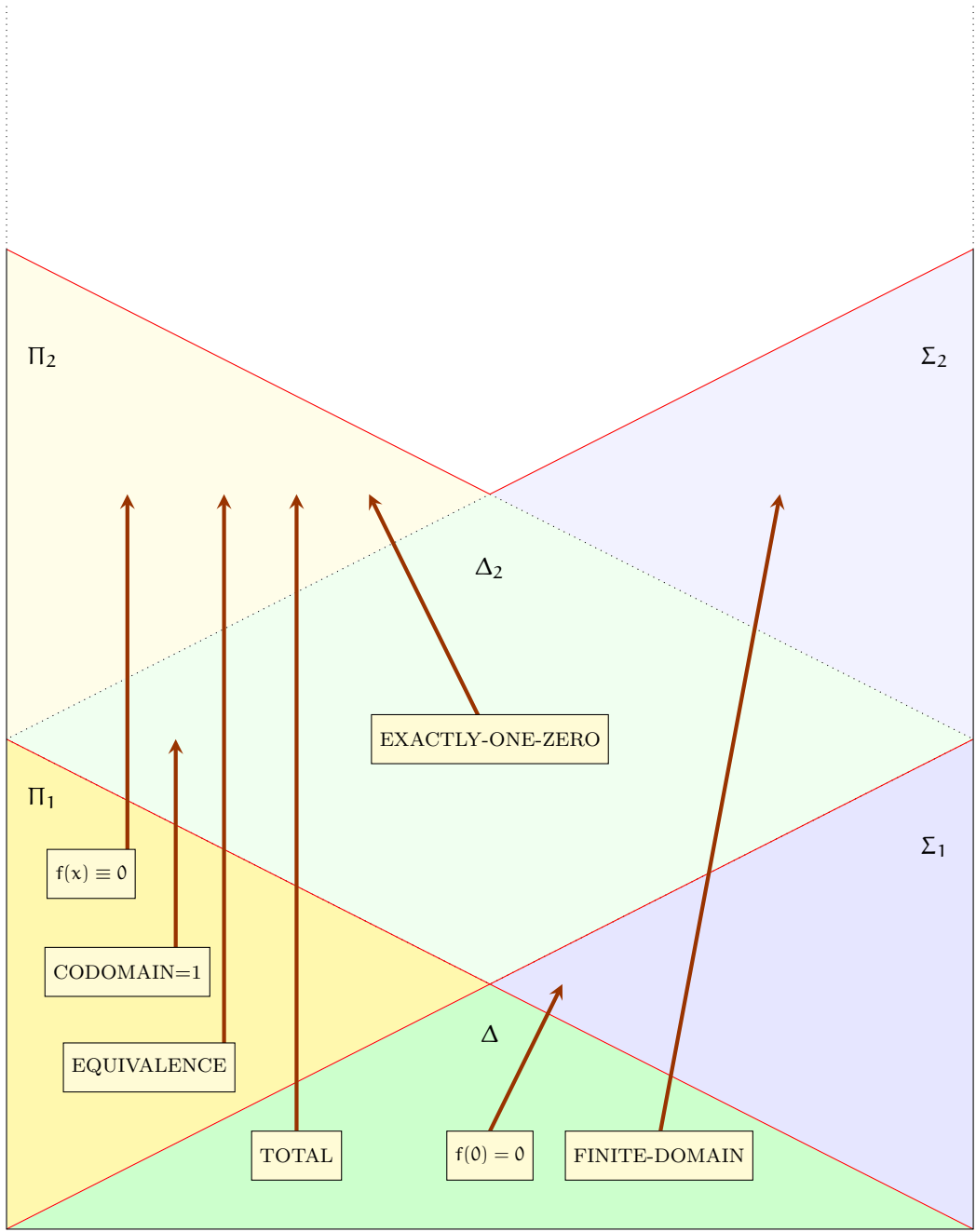
\begin{figure}
\begin{center}
  \begin{tikzpicture}[xshift=2.5cm, yshift=2.0cm, 
    xscale=6.9, yscale=3.5,auto]
  \tikzset{others/.style={rectangle, draw=black, 
           fill=yellow!20, minimum size=0.7cm}}
  \draw[gray!10, ultra thin, fill=green!20]  
         (0,0) -- (2,0) -- (1,1.0) -- (0,0);
  \draw[gray!10, ultra thin, fill=yellow!40] 
         (0,0) -- (1,1.0) -- (0,2) -- (0,0);
  \draw[gray!10, ultra thin, fill=blue!10]   
         (2,0) -- (2,2) -- (1,1.0) -- (2,0);
  \draw[gray!10, ultra thin, fill=yellow!10]   
         (0,2) -- (1,3.0) -- (0,4) -- (0,2);
  \draw[gray!10, ultra thin, fill=blue!05]   
         (2,2) -- (2,4) -- (1,3.0) -- (2,2);
  \draw[black, thin, dotted, fill=green!06]   
         (0,2) -- (1,1.0) -- (2,2) -- (1,3.0) -- (0,2);
  \draw[red,   thin]  (0,4) -- (1,3.0);
  \draw[red,   thin]  (2,4) -- (1,3.0);
  \draw[red,   thin]  (0,2) -- (2,0);
  \draw[red,   thin]  (2,2) -- (0,0);
  \draw[black, thin]  (0,4) -- (0,0) -- (2,0) -- (2,4);
  \draw[black, thin, dotted]    (0,4) -- (0,5.0);
  \draw[black, thin, dotted]    (2,4) -- (2,5.0);
    \node at (1.920,3.56)   {$\sigTwo$};
    \node at (0.075,3.56)   {$\piTwo$};
    \node at (1.920,1.56)   {$\sigOne$};
    \node at (0.075,1.76)   {$\piOne$};
    \node at (1.000,0.80)   {$\Delta$};
    \node at (1.000,2.70)   {$\deltaTwo$};
    \draw[ultra thick,->,>=stealth,dred]  (0.60,0.40) -- (0.60,3.0);  %TOTAL
    \draw[ultra thick,->,>=stealth,dred]  (1.00,0.40) -- (1.15,1.00); %f(0)=0
    \draw[ultra thick,->,>=stealth,dred]  (1.35,0.37) -- (1.6,3.00);  %FINITE-DOM
    \draw[ultra thick,->,>=stealth,dred]  (0.35,1.00) -- (0.35,2.00); %COD
    \draw[ultra thick,->,>=stealth,dred]  (0.25,1.50) -- (0.25,3.00); %f(x)=0
    \draw[ultra thick,->,>=stealth,dred]  (0.45,0.60) -- (0.45,3.0);  %EQUIV
    \draw[ultra thick,->,>=stealth,dred]  (1.00,2.00) -- (0.75,3.0);  %EQUIV
    \node[others] at (0.175,1.45)   {\footnotesize $f(x)\equiv 0$};
    \node[others] at (1.350,0.30)   {\footnotesize {\rm FINITE-DOMAIN}};
    \node[others] at (1.000,0.30)   {\footnotesize $f(0)=0$};
    \node[others] at (0.250,1.05)   {\footnotesize {\rm CODOMAIN=1}};
    \node[others] at (0.295,0.66)   {\footnotesize {\rm EQUIVALENCE}};
    \node[others] at (0.600,0.30)   {\footnotesize \rm TOTAL};
    \node[others] at (1.000,2.00)   {\footnotesize\GeozN};
  \end{tikzpicture}
  \end{center} 
  \caption{The AH undecidability classes of some decision problems, considered
    in two cases: when the instance is
    a {\em primitive recursive function} (origin of the arrow) and when it
    is a {\em partial recursive function} (tip of the arrow). 
    {\rm CODOMAIN=1}
    denotes \kCodON, 
    and $f(x)\equiv 0$ denotes \azN.
    Whenever the head or the tip of an arrow is in~$\sigOne$ or $\piOne$
    or $\piTwo$, then the corresponding problem is {\em complete} in the 
    corresponding class. See also Figure~\ref{c-summary},
    page~\pageref{c-summary}.
    The decision problem \recurN\ (not represented here)
    has the PR version in~$\Delta$, while the ParRec version is complete
    in~$\sigThree$; see Example~\ref{Precur} page~\pageref{Precur}.
}
\label{changes}
\end{figure}
\section{Frontiers of decidability: the \ozN\ problem}
\label{composition}
This section and the next one are about {\em classes} of decision problems.
In this section we consider the basic problem 
``does the PR function~$f$ have at least one zero?''
and restrict the set of inputs and the set of outputs of~$f$, 
as explained below.
%of outputs will be restricted. 
We will find necessary and sufficient conditions for decidability of
the problem ``does the {\em restricted} function have at least one
zero?''.

We restrict the possible inputs of a PR function~$f$ by
pre-applying to~$f$ some PR function\footnote{We use a sans serif
  font for {\em fixed} PR functions like~\tfxg\ and~\tfxh.
  Our results are also valid for fixed (total) recursive functions.}~$\fxg$.
The outputs of~$f$ are similarly restricted by post-applying to~$f$
some fixed PR function~$\fxh$. Both restrictions~-- the input
and the output~-- can be applied simultaneously, see Figure~\ref{proPos}
(page~\pageref{proPos}). Thus, we will study the decidability of a
restricted form of the \ozN\ problem, namely the existence of zeros of
the function~$\fxh(f(\fxg(x)))$.
A more general form of placing a PR function~$f$ in a fixed system
consisting of PR functions is studied in Section~\ref{plug},
page~\pageref{plug}.

In summary, we study the following classes of PR decision problems,
where~$\ov{x}$ denotes the sequence of arguments 
$x_1$, $x_2$\ldots$x_n$ ($n\geq 1$),
$\fxg$ is a fixed PR multifunction (Definition~\ref{multif},
page~\pageref{multif}), and~$\fxh$ is a fixed PR function:
\begin{enumerate}[(a)]
\item \label{pFG} ``$\exists\,\ov{x}: {f}(\fxg(\ov{x}))=0$?'';
\item \label{pHF} ``$\exists\,\ov{x}: \fxh({f}(\ov{x}))=0$?'';
\item \label{pHFG} ``$\exists\,\ov{x}: \fxh(f(\fxg(\ov{x})))=0$?''; 
\item \label{acycHFG} ``$\exists\,\ov{x}: \fxh(\ov{x},f(\fxg(\ov{x})))=0$?''.
\end{enumerate}
The general class~(\ref{acycHFG}) will be studied in Section~\ref{plug}, see 
Theorem~\ref{nft} (page~\pageref{nft}).

Consider for example the function
$\fxg(x)=\lamb{x}{2x^2+1}$ and the question
``Given the PR function~$f$, $\exists x:f(\fxg(x))=f(2x^2+1)=0$?''. 
This problem belongs to the Class~1 above, see Section~\ref{Pfh} 
(page~\pageref{Pfh}).

We will see that the conditions for the decidability of~(c) are simply
the conjunction of the decidability conditions for~(a) and for~(b). In
this sense, the input and the output restrictions are independent.

\subsection*{Why study these families of problems?}
\label{whyComp}
There are several reasons for studying the composition of the instance~$f$
with other PR functions. 
For instance, many atomic predicates can be written as an
equality of the form $F(\ov{x})=0$, where~$F(\ov{x})$ is obtained by
composing~$f$ with fixed PR functions. As an example, the question
``$f(x)=c$?'', where~$c$ is a constant, can be expressed as
``$(f(x)\dminus c)+(c\dminus f(x))=0$?''  which is
equivalent to ``$\fxh({f}(x))=c$?'' where 
$\fxh(z)=\lamb{z}{(z\dminus c)+(c\dminus z)}$.
However, our main interest in this section is to look to a well known
undecidable problem, namely \ozN, and restrict the inputs and outputs
of the function being studied, in order to know ``when the
problem become decidable''. This {\em decidability frontier} is given
by Theorems~\ref{ozFH-t}, \ref{ozHF-t},
and~\ref{ozHFG-t}, respectively in pages~\pageref{ozFH-t},
\pageref{ozHF-t}, and~\pageref{ozHFG-t}. 

\subsection{The meaning of restricted \ozN\ problems}
\label{meaningComp}
The class of decision problems we study in this section is

\bigskip
\begin{minipage}{140mm}\
  \noindent\spacc Problem specification: fixed PR functions $\fxg$ and $\fxh$.\\
  \noindent\spacc \inst PR function $f$.\\
  \noindent\spacc \quest $\exists \ov{x}:\fxh(f(\fxg(\ov{x})))=0$?
\end{minipage}
\bigskip

%undecidable.
Recall that the unrestricted case of the problem, namely \ozN, is 
undecidable, Theorem~\ref{t-oz} (page~\pageref{t-oz}).
If we restrict the input of~$f$ by pre-applying a PR function~\tfxg,
the problem remains undecidable if and only if the
set of possible inputs of~$f$ (the codomain of~$\fxg$) is infinite, see
Theorem~\ref{ozFH-t}, page~\pageref{ozFH-t}.  
The output of~$f$ is restricted by post-applying some other PR
function~$\fxh$.  As Theorem~\ref{ozHF-t} (page~\pageref{ozHF-t}) shows,
the problem remains undecidable if and only if the codomain of~$\fxh$
includes~0 and some nonzero integer.
The necessary and sufficient conditions
for the undecidability of this problem are illustrated the diagram of 
Figure~\ref{proPos}. We summarise the conclusions of this section:
\begin{enumerate}
\item [1.] Each of the following conditions trivially implies
  decidability: (i)~the set of possible outputs of~$\fxg$ is finite, 
  (ii)~the output of~$\fxh$ is always (that is, for every input~$\fxg(\ov{x})$ 
  of~$f$) nonzero, (iii)~the output of~$\fxh$ is always zero.
\item [2.] {\em If neither of these conditions is satisfied, the problem is
  undecidable}.
\end{enumerate}

We can use the \ozFG, \ozHF, and \ozHFG\ classes of problems
(Theorems~\ref{ozFH-t}, \ref{ozHF-t}, and \ref{ozHFG-t} respectively)
to prove the undecidability of many decision problems.

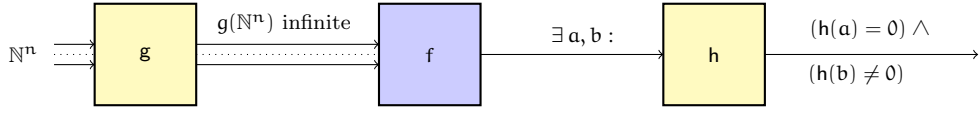
\begin{figure}
{\footnotesize
\begin{center}
%%    *+[F]   {\rule[-5mm]{0mm}{10mm}\;\fxg\;} 
%%    *+[F]   {\rule[-5mm]{0mm}{10mm}\;\fxh\;} 
%% }
%% % increase x'
%%   \begin{center}
  \begin{tikzpicture}[xshift=2.5cm, yshift=2.0cm, scale=2.7, auto,
    main node/.style={circle,fill=blue!20,draw}]
    \draw [black, thick, fill=yellow!30] (0.35,0) rectangle (0.85,0.5);
    \draw [black, thick, fill=blue!20]  (1.75,0) rectangle (2.25,0.5);
    \draw [black, thick, fill=yellow!30] (3.15,0) rectangle (3.65,0.5);
    \node at (0.0,0.25) {$\ene^n$};
    \node at (0.6,0.25) {$\fxg$};
    \node at (2.0,0.25) {$f$};
    \node at (3.4,0.25) {$\fxh$};
    \node at (1.28,0.4)   {$g(\ene^n)$ infinite};
    \node at (2.75,0.35)  {$\exists\,a,b:$};
    \node at (4.18,0.37)  {$(\fxh(a)=0)\;\wedge\;$};
    \node at (4.10,0.15)  {$(\fxh(b)\neq 0)$};
     \draw [->]     (3.65,0.25) -- (4.70,0.25);     %  from 'h' to 'y'
     \draw [->]     (2.25,0.25) -- (3.15,0.25);  %  from 'f' to 'h'
     \draw [->]     (0.85,0.20) -- (1.75,0.20);  %  tuple from
     \draw [->]     (0.85,0.30) -- (1.75,0.30);  %  'g' to 'f'
     \draw [dotted] (0.85,0.25) -- (1.75,0.25);  %  'x' to 'g'
     \draw [->]     (0.15,0.20) -- (0.35,0.20);  %  tuple from
     \draw [->]     (0.15,0.30) -- (0.35,0.30);  %  'x' to 'g'
     \draw [dotted] (0.15,0.25) -- (0.35,0.25);  %  'x' to 'g'
  \end{tikzpicture}
  \end{center}
}
\caption{Restricting the \ozN\ problem. The necessary and sufficient
  conditions for undecidability of the global problem 
  ``$\exists\ov{x}:\fxh(f(\fxg(\ov{x})))=0$?''
  are (see the arrow labels):
  ``$[\fxg(\ene^n)$ is infinite $]\;\wedge\;
    [\exists\,a,\,b:
     \fxh(a)=0 \wedge \fxh(b)\neq 0]$.''
  It is assumed that the
  output of~$\fxg$ may be a tuple of integers (Definition~\ref{multif},
  page~\pageref{multif}).}
\label{proPos}
\end{figure}

\bigskip

\noindent{\bf Note.} In the statements of Theorems~\ref{ozFH-t}
(page~\pageref{ozFH-t}), \ref{ozHF-t} (page~\pageref{ozHF-t}),
and \ref{ozHFG-t} (page~\pageref{ozHFG-t})
we can replace ``undecidable'' by ``$\sigOne$-complete''.
This follows directly from the corresponding proofs.
\close

\bigskip

\noindent{\bf Note.} It would have been possible to prove first
Theorem~\ref{ozHFG-t} (page~\pageref{ozHFG-t}), and then present
Theorems~\ref{ozFH-t} (page~\pageref{ozFH-t}) and~\ref{ozHF-t}
(page~\pageref{ozHFG-t}) as corollaries. However,
we prefer to prove the simpler results first.
\close

\subsection{The \ozFG\ class of problems}
\label{Pfh} 
Each problem in the class is specified by the PR function~$\fxg$.
\begin{problem} (\ozFG)
Let~$\fxg$ be a fixed PR function.\\
\inst PR function~$f$. \\
\quest $\exists\,\ov{x}: {f}(\fxg(\ov{x}))=0$?\closex
\end{problem}

As we will see in Theorem~\ref{ozFH-t} (page~\pageref{ozFH-t}), the
decidability of the problem depends on whether the codomain of~$\fxg$ is
infinite.

We begin by proving directly the undecidability of a
particular problem of this class.

\begin{example}
  Let $\fxg(x)=2x+1$. Reduce \ozN\ to \ozFG.  Let~$u$ be an instance of the 
  \ozN\ problem. Define the function~$f$ (instance of \ozFG) as
  $
  f(x)=u(\lfloor x/2\rfloor)
  $. 
  We have 
  $f(\fxg(x))=f(2x+1)=u(\lfloor (2x+1)/2\rfloor)=u(x)$, 
  so that~$u(x)$ has at least one zero iff~$f(x)$
  is a positive instance of the ``$f(\fxg(x))$ problem'' for $\fxg(x)=2x+1$.
\end{example}

Instead of studying more particular cases we now consider 
a class of problems,
proving an undecidability result. For this 
purpose we will have to look more closely to the Kleene Normal 
Form Theorem~\cite{Rogers}.
\begin{theorem}
  \label{ozFH-t}
  The problem \ozFG\ is undecidable if and only if the codomain of~$\fxg$
  is infinite.
\end{theorem}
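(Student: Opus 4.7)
The plan is to establish both directions by separating the decidable and undecidable cases cleanly.

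For the easy direction (finite codomain implies decidability), I would observe that because $\fxg$ is \emph{fixed} in the problem specification, if $\mathrm{cod}(\fxg)=\{c_1,\ldots,c_k\}$ is finite then this finite set of constants can be hard-coded into the decision procedure. The question ``$\exists\ov{x}:f(\fxg(\ov{x}))=0$?'' is then equivalent to ``$\exists j\in\{1,\ldots,k\}:f(c_j)=0$?'', which is decided by $k$ PR evaluations of the instance $f$.

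For the hard direction, I would reduce the $\sigOne$-complete problem \ozN\ (Theorem~\ref{t-oz}) to \ozFG, which simultaneously gives undecidability and (as the note preceding the theorem anticipates) the stronger $\sigOne$-completeness. Given an instance $u$ of \ozN, define the PR function
\[
  f(z)\defined\prod_{i=0}^{z}\mathrm{sg}(u(i))\Mcomma
\]
so that $f(z)=0$ iff $u(i)=0$ for some $i\leq z$. This $f$ is PR because it is a bounded product of PR expressions, and an index for $f$ is effectively obtained from an index for $u$ by the \Smn\ Theorem.

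The correctness of the reduction hinges on the infinitude of $\mathrm{cod}(\fxg)$. For the forward implication, if $f(\fxg(\ov{x}))=0$ then by construction some $u(i)=0$, so $u$ has a zero. For the backward implication, suppose $u(i_0)=0$; since $\mathrm{cod}(\fxg)$ is infinite, it is unbounded, so there exists $\ov{x}$ with $\fxg(\ov{x})\geq i_0$, whence $f(\fxg(\ov{x}))=0$. The main conceptual point~-- and the step I would highlight~-- is that this argument lets us bypass the delicate issue of PR-enumerating or inverting $\fxg$ (which in general cannot be done within the PR class even when $\mathrm{cod}(\fxg)$ is infinite); by making $f$ a monotone ``prefix tester'' of $u$, \emph{any} sufficiently large codomain value of $\fxg$ certifies a zero, so we only need unboundedness of $\mathrm{cod}(\fxg)$, not any structural control over how $\fxg$ enumerates it.
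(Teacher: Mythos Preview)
Your proof is correct and takes a genuinely different route from the paper's. The paper reduces the halting problem HP directly, invoking Kleene's Normal Form: the PR predicate $T_{\ov{x}}(t)$ (equal to~$0$ once the computation has halted by step~$t$, and~$1$ before) is already monotone in~$t$, so when $\cod{\fxg}$ is infinite one has $\vph_e(\ov{x})\halts$ iff $\exists y:T_{\ov{x}}(\fxg(y))=0$. You instead reduce from \ozN\ (already shown $\sigOne$-complete in Theorem~\ref{t-oz}) and manufacture the monotonicity by hand via the bounded product $f(z)=\prod_{i\le z}\mathrm{sg}(u(i))$. The underlying insight is the same in both arguments---monotonicity of the instance plus unboundedness of $\cod{\fxg}$ is all that is needed, so one never has to enumerate or invert~$\fxg$---but your version is more elementary (no Normal Form Theorem) and stays entirely inside the PR world, whereas the paper's version ties in the Normal Form machinery it reuses in Theorems~\ref{ozHF-t} and~\ref{ozHFG-t}.
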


\begin{proof}
  The following is a decision procedure when the codomain of~$\fxg$ is
  finite, say $\{y_1,\ldots,y_n\}$; compute
  $f(y_1)$,\ldots, $f(y_n)$;  answer {\sc yes} iff any of these
  integers is~0, answer {\sc no} otherwise.

  Suppose now that the codomain of~$\fxg$ is infinite. It is a
  consequence of the Kleene Normal Form Theorem (see Theorem~\ref{Knft},
  page~\pageref{Knft}) that any ParRec function~$f(\ov{x})$ can be written as
  $f(\ov{x})=U(\ov{x},\mu_t:(T(\ov{x},t)))$ where
  \begin{itemize}
  \item [--] The PR function~$T$ ``evaluates~$y$
    steps'' of the computation~$f(\ov{x})$.
    At the end, it returns~0
    if the computation has already finished, and~1 otherwise.
    We assume that if $T(\ov{x},t)$ is~0 for some~$t$, it is also~0 for 
    every~$t'\geq t$.
  \item [--] The PR function~$U$ is similar to~$T$, but when it is
    ``called''~-- if it is ``called''~-- the value of the last
    argument, $\mu_t:T(\ov{x},t)$, guarantees that the computation has
    halted; it then returns the result of the
    computation~$f(\ov{x})$.
  \end{itemize}
  As~$\fxg$ has an infinite codomain, the sequence~$\fxg(0)$,
  $\fxg(1)$\ldots\ contains an increasing infinite sub-sequence
  $\fxg(y_0)<\fxg(y_1)<\ldots$ (with $y_0<y_1<\ldots$), so that
  \begin{equation}
  \label{inproof}
  f(\ov{x})
       = U(\ov{x},\,\mu_t:T(\ov{x},t))
       = U(\ov{x},\,\mu_t:T(\ov{x},\fxg(t)))
  \end{equation}
  This holds because there are~$\fxg(y)$'s arbitrarily
  large. The arguments~$\ov{x}$ can be
  ``included'' in the function~$T$; let~$T_{\ov{x}}(\fxg(y))$ be the resulting
  (PR) function.

  The HP (unary halting problem) can now be reduced to the \ozFG\
  problem.  Let $\langle f,\,\ov{x}\rangle$ be the instance of HP. The
  corresponding instance of
  \ozFG\ is the~$T_{\ov{x}}$ function, constructed from~$f$ and~$\ov{x}$
  as described above. The computation~$f(\ov{x})$ halts iff there is 
  some~$y$ such that $T_{\ov{x}}(\fxg(y))=0$.
\end{proof}

\subsection{The \ozHF\ class of problems}
\label{Phf}
Each problem is specified by the PR function~$\fxh$.

\begin{problem} (\ozHF)  
Let~$\fxh$ be a fixed PR function.\\
\inst  PR function~$f$.   \\
\quest\ $\exists\,\ov{x}: \fxh({f}(\ov{x}))=0$?\closex
\end{problem}

\begin{theorem}
  \label{ozHF-t}
  The \ozHF\ problem is undecidable if and only if~$\fxh^{-1}(0)$ is 
  neither~$\emptyset$ nor~$\ene$.
\end{theorem}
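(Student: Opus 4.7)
The plan is to handle the two trivial directions first and then reduce the basic \ozN\ problem to the remaining case.

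For the decidable direction, suppose $\fxh^{-1}(0) = \emptyset$. Then $\fxh(y) \neq 0$ for every $y$, so in particular $\fxh(f(\ov{x})) \neq 0$ for every PR function $f$ and every $\ov{x}$; the answer is uniformly ``no.'' Similarly, if $\fxh^{-1}(0) = \ene$, then $\fxh(y) = 0$ for every $y$, so $\fxh(f(\ov{x})) = 0$ for any $f$ and any $\ov{x}$, and the answer is uniformly ``yes.'' In either case the problem is (trivially) decidable.

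For the undecidable direction, assume $\fxh^{-1}(0)$ is neither $\emptyset$ nor $\ene$. Since $\fxh$ is a \emph{fixed} PR function, we may fix two constants $a, b \in \ene$ with $\fxh(a) = 0$ and $\fxh(b) \neq 0$ once and for all; these constants do not depend on the instance of \ozHF. The plan is then to reduce the $\sigOne$-complete problem \ozN\ (Theorem~\ref{t-oz}, page~\pageref{t-oz}) to \ozHF. Given an instance $u(\ov{x})$ of \ozN, define the PR function
$$
f(\ov{x}) =
\left\{
\begin{array}{ll}
a & \text{if $u(\ov{x}) = 0$}\\
b & \text{if $u(\ov{x}) \neq 0$.}
\end{array}
\right.
$$
This $f$ is primitive recursive (it is a case distinction on the PR predicate ``$u(\ov{x})=0$'' producing one of two constants), and the mapping $u \mapsto f$ is effective. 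By the choice of $a$ and $b$, we have $\fxh(f(\ov{x})) = 0$ exactly when $u(\ov{x}) = 0$; hence $u$ has a zero iff $f$ is a positive instance of \ozHF. This reduction $\ozN \reduces \ozHF$ proves undecidability (and, since the reduction is many-one from a $\sigOne$-complete problem, also $\sigOne$-completeness, as the note following Figure~\ref{proPos} asserts).

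The main (and only) subtlety is the observation that $a$ and $b$ are determined solely by the problem specification (the fixed $\fxh$), not by the instance $f$; therefore no search for them is required inside the reduction. Once this is granted, the construction of $f$ is immediate and the correctness of the reduction is a direct check on the two possible values of $u(\ov{x})$.
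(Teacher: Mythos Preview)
Your proof is correct and follows essentially the same approach as the paper: the two trivial cases are dispatched identically, and in the nontrivial case you perform the same many-one reduction from \ozN\ by replacing each value of the instance with one of two fixed constants~$a$ and~$b$ witnessing $\fxh(a)=0$ and $\fxh(b)\neq 0$. Your added remark that~$a$ and~$b$ depend only on the fixed~$\fxh$ (and not on the instance) makes the effectiveness of the reduction explicit, which the paper leaves implicit.
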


\begin{proof}
  If~$\fxh^{-1}(0)$ is the empty set, $\fxh(f(\ov{x}))\neq 0$ for
  every~$f$ and for every~$\ov{x}$, so that, for every instance~$f$,
  the answer to ``$\exists\,\ov{x}: \fxh({f}(\ov{x}))=0$?'' is 
  {\sc no}.
  Similarly, if $\fxh^{-1}(0)=\ene$, the answer is
  always~{\sc yes}.  In both cases the problem is decidable.

  Suppose now that there are two integers~$a$ 
  and~$b$ with~$\fxh(a)=0$
  and~$\fxh(b)\neq 0$.  We reduce \ozN\ to \ozHF. 
  Let~$f$ be an instance of \ozN, and define the function~$f'(\ov{x})$
  $$
    f'(\ov{x})=
    \left\{
      \begin{array}{lll}
        a & \text{\spac if $f(\ov{x})=0$}         & \spac
                 (\Rightarrow\; \fxh(f'(\ov{x}))=0)   \\
        b & \text{\spac if $f(\ov{x})\neq 0$}     & \spac
                 (\Rightarrow\; \fxh(f'(\ov{x}))\neq0)
      \end{array}
    \right.
  $$
  The function~$f'$ is also PR, as it may be obtained from~$f$ by the
  additional instruction
  ``{\tt if} $f(\ov{x})=0$ {\tt then} $f'(x)=a$ {\tt else} $f'(x)=b$''.
  Clearly, $f(\ov{x})$ has at least one zero iff $\fxh(f'(\ov{x}))$ 
  has at least one zero.
\end{proof}

\subsection{The \ozHFG\ class of problems}
\label{Pgfh}
Each problem is specified by the PR functions~$\fxg$ and~$\fxh$.
\begin{problem} (\ozHFG)  
Let~$\fxg$ and~$\fxh$ be fixed PR functions.\\
\inst  PR function~$f$.  \\
\quest\ $\exists\,\ov{x}: \fxh({f}(\fxg(\ov{x})))=0$?\closex
\end{problem}

\begin{theorem}
  \label{ozHFG-t}
  Let~$\codg$ be the codomain of~$\fxg$.
  The \ozHFG\ problem is undecidable if and only if the following two 
  conditions hold:
  (i)~$\codg$ is infinite, 
  (ii)~$\fxh^{-1}(0)$ is neither the empty set nor~$\ene^n$.
\end{theorem}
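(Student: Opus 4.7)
The plan is to combine the arguments of Theorems~\ref{ozFH-t} and~\ref{ozHF-t}, treating the input restriction (via~$\fxg$) and the output restriction (via~$\fxh$) as essentially independent contributions to undecidability. Writing the global condition as (i)~$\codg$ infinite and (ii)~$\fxh^{-1}(0)\notin\{\emptyset,\ene\}$, the theorem asserts that undecidability of \ozHFG\ is equivalent to the conjunction of the individual conditions that made \ozFG\ and \ozHF\ undecidable.

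For the ``only if'' direction I would handle each failure of a condition separately, exactly as in the earlier theorems. If~$\codg=\{y_1,\ldots,y_k\}$ is finite, the problem reduces to checking whether any of the finitely many values $\fxh(f(y_1)),\ldots,\fxh(f(y_k))$ equals~$0$. If $\fxh^{-1}(0)=\emptyset$ the answer is always {\sc no}; if $\fxh^{-1}(0)=\ene$ the answer is always {\sc yes}. In all three cases the procedure is effective and does not depend on the instance~$f$ beyond evaluating it at finitely many arguments.

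For the ``if'' direction I plan to reduce HP to \ozHFG, fusing the two earlier reductions. Let $\pair{e,\ov{x}}$ be an instance of HP. By the Kleene Normal Form (Theorem~\ref{Knft}) and the argument of Theorem~\ref{ozFH-t} (equation~\eqnref{inproof}), since~$\codg$ is infinite we have
$$
\vph_e(\ov{x})\halts \;\lra\; \exists\,\ov{y}:T(e,\ov{x},\fxg(\ov{y}))=0.
$$
Using condition~(ii), choose $a,b$ with $\fxh(a)=0$ and $\fxh(b)\neq 0$, and mimic the construction of Theorem~\ref{ozHF-t} by defining the PR function
$$
f'(z)=
\begin{cases}
 a & \text{if } T(e,\ov{x},z)=0\\
 b & \text{otherwise.}
\end{cases}
$$
Then $\fxh(f'(\fxg(\ov{y})))=0$ iff $T(e,\ov{x},\fxg(\ov{y}))=0$, so $\vph_e(\ov{x})\halts$ iff~$f'$ is a positive instance of \ozHFG. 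The function~$f'$ is primitive recursive because~$T$ is PR and the branching on its value is a PR case distinction; this yields $\HP\reduces\ozHFG$.

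The only genuinely delicate point is that we need a \emph{single} construction of~$f'$ that simultaneously exploits both $\fxg$ and~$\fxh$: the Kleene normal form route handles the~$\fxg$ side (forcing the witnessing~$\ov{y}$ to live in $\codg$ by the infinitude of the codomain), while the $\{a,b\}$-valued branching handles the~$\fxh$ side (translating the PR predicate ``halting'' into the PR predicate ``output in $\fxh^{-1}(0)$''). Once this combination is in place, no new ideas beyond those in the previous two theorems are needed, and as noted in the remark following the statement the same proof in fact gives $\sigOne$-completeness.
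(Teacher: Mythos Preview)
Your proposal is correct and follows essentially the same approach as the paper: dispose of the failure of each condition separately for decidability, and for undecidability fuse the Kleene-normal-form argument of Theorem~\ref{ozFH-t} with the $\{a,b\}$-recoding trick of Theorem~\ref{ozHF-t} into a single reduction $\HP\reduces\ozHFG$. Your orientation of the cases in~$f'$ is in fact the right one; the paper's printed definition of~$T'_x$ has the roles of~$a$ and~$b$ swapped relative to its own summary chain of equivalences.
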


\begin{proof}
  $\fxg$ and~$\fxh$ are fixed PR functions.
  If the codomain of~$\fxg$ is finite, or $\fxh^{-1}(0)=\emptyset$, or
  $\fxh^{-1}(0)=\ene^n$, the problem is clearly decidable.

  Suppose that~(i) and~(ii) hold. 
  Reduce the halting problem to \ozHFG. Let~$u(x)$ be the instance of the
  halting problem and let~$T(x,t)=T_{x}(t)$ be the Turing machine that corresponds 
  to the computation of~$u(x)$, see the proof of Theorem~\ref{ozFH-t}, 
  page~\pageref{ozFH-t}.

  Assuming~(i), the computation $u(x)$ halts iff there is some~$t$ such that 
  $T_{x}(\fxg(t))=0$.
  Assuming~(ii), let~$\fxh(a)=0$ and~$\fxh(b)\neq 0$. Recall that~$T_{x}(t)$ 
  is either~1 (computation not yet halted at step~$t$) or~0
  (computation already finished). 
  Define~$T'_{x}(t)$ as
  $$
  T'_{x}(t) =
  \left\{
    \begin{array}{ll}
      a  &  \text{\spac if $T_{x}(t)=1$}  \\
      b  &  \text{\spac if $T_{x}(t)=0$}
    \end{array}
  \right.
  $$
  Then, $\fxh(T'_{x}(t))=0$     if $T_{x}(t)=1$ and
        $\fxh(T'_{x}(t))\neq 0$ if $T_{x}(t)=0$.
  Thus~$u(x)$ halts iff $v_{x}(\fxg(t))$ 
  has at least one zero, where $v_{x}(t)=\fxh(T'_{x}(t))$.
  The transformation $u(x)\to T'_{x}(t)$ defines the reduction 
  \HP$\reduces$\ozHFG.
  In summary,
  \begin{center}
    \begin{tabular}{clclcl}
             & $u(x)$ halts                        &
      $\lra$ & $\exists t: T(x,t)=0$         &
      $\lra$ & $\exists t: T_{x}(t)=0$         \\
      $\lra$ & $\exists t: T_{x}(\fxg(t))=0$           &
      $\lra$ & $\exists t: T'_{x}(\fxg(t))=a$     &
      $\lra$ & $\exists t: \fxh(T'_{x}(\fxg(t)))=0$
    \end{tabular}
  \end{center}
\end{proof}

%% As explained in the ``preliminaries'' section, when we say that the
%% In this case, the function~$f$ is assumed to have~$m$ input arguments.

%% $$
%%    \rangle
%% $$
%% is infinite.

%% (Theorems~\ref{ozFH-t}, \ref{ozHF-t}, and \ref{ozHFG-t} respectively)

%% \begin{equation}
%% First consider the negation of the question~\eqnref{exampOne},
%% \end{equation}
%% We have $\fxh^{-1}(0)=\{0\}$. As this set is neither empty nor~$\ene$, we

\subsection{Classifying a problem~-- 
testing the properties of~$\fxg$ and~$\fxh$}
\label{testGH}
Suppose that we are given a problem in one of the classes \ozFG, 
\ozHF\ or \ozHFG, and that we want to know if the problem
is decidable or not. Perhaps not surprisingly, this ``classifying problem''
is itself undecidable.

--~Consider first the class of problems \ozFG. 
% either {\sc yes} or {\sc no}.  However,
In order to classify an arbitrary problem of this class, we have to
solve the meta-problem, 
``is the codomain of a given PR function~$\fxg$ infinite?''.
However, as we have seen in Theorem~\ref{fCodT} (page~\pageref{fCodT}) this
problem is undecidable, belonging to the class
$\piTwo$-complete.

--~Consider now the class \ozHF\ of decision problems. 
The corresponding question associated with the post-function~$\fxh$ 
can be expressed as the following problem

\begin{problem}
\zeroMoreN   \\
\inst  PR function~$\fxh$.  \\
\quest\ $\exists\, a,\,b: 
  \fxh(a)=0$ and $\fxh(b)\neq 0$?\closex
\end{problem}

\begin{theorem}
  The problem \zeroMoreN\ is complete in the class~$\sigOne$.
\end{theorem}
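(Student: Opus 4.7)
The plan is to show the two parts of m-completeness separately: first, membership in $\sigOne$ is immediate, and second, hardness follows from a direct reduction from the $\sigOne$-complete problem \ozN\ (Theorem~\ref{t-oz}).

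For membership, I would observe that since $\fxh$ is PR (hence total computable), the inner conjunction $\fxh(a)=0 \wedge \fxh(b)\neq 0$ is a decidable predicate in $(a,b)$. The statement of \zeroMoreN\ is therefore of the form $\exists a\,\exists b: R(a,b)$ with $R$ recursive, which places it in $\sigOne$.

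For hardness, I would reduce \ozN\ to \zeroMoreN. Given an instance $f$ of \ozN, define the PR function
$$
  \fxh(x) = \left\{
    \begin{array}{ll}
      1       & \text{if $x=0$} \\
      f(x-1)  & \text{if $x\geq 1$.}
    \end{array}
  \right.
$$
This $\fxh$ is clearly PR (obtained from $f$ by a definition by cases on $x=0$). The point is that $\fxh(0)=1\neq 0$ provides an automatic witness for the ``nonzero'' requirement, so the \zeroMoreN\ question about $\fxh$ collapses to the question ``does $\fxh$ have some zero?''. Since $\fxh(x+1)=f(x)$ for all $x\geq 0$, this holds iff $f$ has at least one zero. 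Hence $f$ is a positive instance of \ozN\ iff $\fxh$ is a positive instance of \zeroMoreN, giving $\oz\reduces\zeroMore$ and, combined with membership, $\sigOne$-completeness.

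There is no real obstacle here: the only mild subtlety is ensuring the constructed $\fxh$ always supplies the ``nonzero'' witness regardless of $f$, which is precisely what the constant value at $0$ achieves; without that constant anchor, a reduction would have to juggle both existential witnesses from $f$ alone, which is awkward.
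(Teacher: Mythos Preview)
Your proof is correct and follows essentially the same approach as the paper: both show membership in~$\sigOne$ directly from the form of the predicate, and both reduce \ozN\ by shifting~$f$ one step and anchoring a nonzero value at input~$0$ to supply the ``$\fxh(b)\neq 0$'' witness for free. The only cosmetic difference is that the paper normalises the shifted values to~$\{0,1\}$ (returning~$0$ if $f(x-1)=0$ and~$1$ otherwise) whereas you pass through $f(x-1)$ unchanged; this has no effect on the argument.
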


\begin{proof}
  Membership in~$\sigOne$ is obvious. Reduce the $\sigOne$-complete
  problem \ozN\ to \zeroMoreN. Let~$f$ be the instance of \ozN, and let
  $$
  \fxg(x) = 
  \left\{
    \begin{array}{ll}
      1   & \text{\spac if $x=0$}                      \\
      0   & \text{\spac if $x\geq 1$ and $f(x-1)=0$}     \\
      1   & \text{\spac if $x\geq 1$ and $f(x-1)\neq 0$}
    \end{array}
  \right.
  $$ 
  Clearly, $f$ has at least one zero iff there are integers~$a$
  and~$b$ such that $\fxg(a)=0$ and $\fxg(b)\neq 0$.
\end{proof}

Thus if the PR function~$\fxh$ for the class \ozHF\ is
given by a recursive definition or by Loop program, it is not decidable
whether the condition expressed in Theorem~\ref{ozHF-t}
(page~\pageref{ozHF-t}) is satisfied.

--~For the class of problems \ozHFG\ we get a similar 
``undecidability of classification''.

In conclusion, given a problem~$P$ belonging to one of the classes
\ozFG, \\
\ozHF, or \ozHFG\ (the corresponding instance is~$\fxg$,
$\fxh$ or $\ang{\fxg,\fxh}$),
the classification of~$P$ as decidable or undecidable is itself an
undecidable problem.

\section{Generalisations of the PR decision problems}
\label{geni}
We now study two generalisations of the $\fxh(f(\fxg(x)))$ problem:
\begin{itemize}
\item [--] The PR function~$f$ (the instance of the decision problem) 
  is included in an arbitrary acyclic PR system~$S$.
  See Section~\ref{plug} (page~\pageref{plug}).
\item [--] More than one occurrence of~$f$ is possible. In particular,
  there is the possibility of composing~$f$ with itself.
  See Section~\ref{itself} (page~\pageref{itself}).
\end{itemize}
As previously, the instance of each problem is always be a PR
function~$f$, while~\tfxg, $\fxg'$, \tfxgi{1}, \tfxgi{2}\ldots, \tfxh\
denote fixed PR functions.

\subsection{A primitive recursive function~$f$ in an acyclic PR structure: 
normal form}
\label{plug}
In Section~\ref{Pgfh} (page~\pageref{Pgfh}) we fully characterised the
existence of zeros of a function with the form~$\fxh(f(\fxg(\ov{x})))$,
where~$f$ is the instance of the problem and~$\fxg$ and~$\fxh$ are
fixed PR functions. We consider now a more general situation in which
there is a unique occurrence of~$f$ ``inside'' an arbitrary acyclic
primitive recursive structure\footnote{Or, more generally, inside an
arbitrary acyclic (total) recursive structure.}, see Section~\ref{plug} (see also
Figure~\ref{inside}, page~\pageref{inside}). For any such structure
there is an equivalent normal form, see Theorem~\ref{nft}
(page~\pageref{nft}). 

We begin by defining acyclic PR expressions.
\begin{definition}
  \label{agd}
  An {\em acyclic PR expression} {\rm (acyclic-PR-exp)} is an acyclic
  directed graph characterised by:
  \vspace{-\topsep}
  \begin{enumerate}[--]
  \item nodes: input variables~$x_i$, inputs of the PR functions,
    outputs of the PR functions, and the output variable~$y$;
  \item edges: $(a,b)$ where~$a$ is either 
    an input variable or the output of a PR function, and~$b$ is 
    either the input of a PR function or~$y$;
  \item the inputs of PR functions have indegree~1,
    the output of PR functions have positive outdegree,
    $y$ has indegree~1 and outdegree~0,
    the input variables have indegree~0.
  \end{enumerate}
  \vspace{-\topsep}
  One PR function is~$f$. The other PR functions are fixed.
  There are no loops in the graph.
\close
\end{definition}
A particular acyclic-PR-exp is illustrated in Figure~\ref{ag}, page~\pageref{ag}.

An acyclic-PR-exp can be expanded into a single formula. In the example of
Figure~\ref{ag}, we get
$$
   m(f(x_1,q(x_1,x_2)), q(x_1,x_2), p(q(x_1,x_2),x_2))\Mdot
$$
Sub-expressions may be repeated; in this example,
``$q(x_1,x_2)$'' occurs~3 times. Another way of representing an acyclic-PR-exp
is to use {\em definitions} in a system of equations. This method avoids
the repetition of sub-expressions. In our example
(Figure~\ref{ag}, page~\pageref{ag}) a possible system of equations is
$$
\left\{
  \begin{array}{lcl}
    z &=& q(x_1,x_2)   \\
   y' &=& f(x_1,z)     \\
    v &=& p(z,x_2)     \\
    y &=& m(y',z,v)\Mdot
  \end{array}
\right.
$$
A system of equations equivalent to an acyclic-PR-exp must satisfy the following
conditions
\vspace{-\topsep}
\begin{enumerate}
\item [--] Every left hand side is the definition of a {\em new}
  variable, that is, a variable that does not occur in previous
  equations nor on the right hand side of this equation.
\item [--] The variable defined by the last expression is the output
  variable.
\item [--] Each right hand side has the form $s(w_1,\ldots,w_n)$
  where~$s$ is a function and
  each~$w_i$ is either an input variable or a previously
  defined variable.
\end{enumerate}
\vspace{-\topsep}
\subsubsection{The normal form of a acyclic PR expressions}
\label{normalf}
Suppose that a given PR function~$f$ is placed 
somewhere inside a fixed acyclic PR structure (or acyclic-PR-exp)~$S$. 
Consider the decision problem
\begin{problem} \ \\
\inst  A PR function~$f$\\
\quest Does the function~$S(f,\ov{x})$ have at least a zero?\closex
\end{problem}

In this general case, and contrarily to the
problem~$\fxh(f(\fxg(\ov{x})))=0$ (Theorem~\ref{ozHFG-t},
page~\pageref{ozHFG-t}) it does not seem possible to get a closed
condition for the decidability of the problem.

As a step towards the analysis of this problem, we will show
that such an arbitrary acyclic structure can be reduced to a normal form
illustrated in Figure~\ref{nf}, page~\pageref{nf}. This
normalisation has nothing to do with primitive recursive
functions and can be applied to an arbitrary acyclic graph of
functions that includes a distinguished function~$f$.

\label{Sproblem}
Consider Figure~\ref{relat}, page~\pageref{relat},
which represents the condition ``$\exists x:S(f,\ov{x})=0$?'', {\em
  where~$S$ has been reduced to the normal form} (Theorem~\ref{nft}
below). We can see the origin of the difficulty of finding solutions
of~$S(f,\ov{x})=0$, even when~$S$ is in the normal form: there must be a
pair $\langle \ov{x},y'\rangle$ in the set~$\fxh^{-1}(0)$ such that,
if we apply~$\fxg$ and~$f$ in succession to~$\ov{x}$, we get that same
value~$y'$. That is,
$$
  \exists \overbrace{\langle \ov{x},y'\rangle \in \fxh^{-1}(0)}^{(1)} :
  \overbrace{f(\fxg(\ov{x}))=y'}^{(2)}\Mdot
$$
We now have two interacting conditions ((1) and~(2) above) for the
existence of a zero, and this is the reason of the difficulty in
finding a general explicit condition for the existence of zeros
of~$S(f,\ov{x})$.

\begin{figure}[t]  % input / fixed / ghk / nomes aos nos
  \begin{center}
  \begin{tikzpicture}[xshift=2.0cm, yshift=2.5cm, scale=1.8, rotate=90,auto,
    main  node/.style={rectangle,fill=blue!20,draw},
    fixed node/.style={rectangle,fill=yellow!20,draw},
    input node/.style={}
   ]
    \node[input node] (x2) at (0,4) {$x_2$};
    \node[input node] (x1) at (2,4) {$x_1$};
    \node[input node, inner sep=0, outer sep=0] (y1) at (1,3) {};
    \node[fixed node] (m)  at (1,1) {$m$\rule[-2mm]{0.0mm}{5mm}};
    \node[input node] (y)  at (1,0) {$y$};
    \node[fixed node] (q)  at (1,3) {$q$\rule[-2mm]{0.0mm}{5mm}};
    \node[fixed node] (p)  at (0,2) {$p$\rule[-2mm]{0.0mm}{5mm}};
    \node[main  node] (f)  at (2,2) {$f$\rule[-2mm]{0.0mm}{5mm}};
    \draw [->, transform canvas={yshift=1.5mm}] (x1) to node {} (f);
    \draw [->, transform canvas={yshift=-1.5mm}] (x2) to node {} (p);
    \draw [->] (x1) to node {} (q);
    \draw [->] (x2) to node {} (q) ;
    \draw [->] (q)  to node [above=-0.75mm] 
         [pos=0.15]{$\vertSymb{$z$}{-20mm}$} (m);
    \draw [->] (28.5pt,82.0pt)  to node {} (p.150);
    \draw [->] (28.5pt,82.0pt)  to node {} (f.220);
    \draw [->] (p) +(down:1.1mm)   to node [swap] {$v$} (m.210);
    \draw [->] (f) +(down:1.1mm) to node {$y'$} (m.150);
    \draw [->] (m) to node {} (y);
  \end{tikzpicture}
  \end{center}
  \caption{An example of an {\em acyclic PR expression}, or ``acyclic-PR-exp''.
  In general, the corresponding closed output expression contains 
  multiples occurrences of the same sub-expression. In this case
  the closed output expression is
  $y=m(f(x_1,q(x_1,x_2)), q(x_1,x_2), p(q(x_1,x_2),x_2))$
  in which the sub-expression $q(x_1,x_2)$ occurs three times.}
\label{ag}
\end{figure}
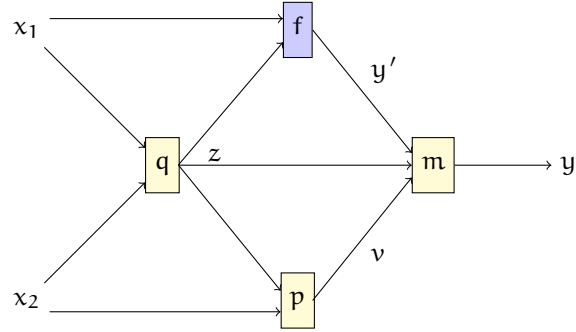

The main result of this section is the following theorem.
\begin{theorem}[Normal Form Theorem]
\label{nft}
An arbitrary acyclic PR expression $S(f,\ov{x})=y$ containing an
occurrence of~$f$ can be reduced to the form
$\fxh(\ov{x},f(\fxg(\ov{x})))$, where~$\fxg$ is a fixed (not depending
on~$f$) PR multifunction (Definition~\ref{multif},
page~\pageref{multif}) and~$\fxh$ is a fixed PR function.
The number of outputs of~$\fxg$ is equal to the arity of~$f$.
\end{theorem}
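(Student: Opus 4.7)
My plan is to exploit the acyclicity of $S$ together with the uniqueness of the $f$-node. If I delete the single node labelled $f$ from the acyclic-PR-exp, the remaining graph splits naturally into an ``upstream'' part (nodes whose values are needed to compute the inputs of $f$) and a ``downstream'' part (nodes whose values depend, possibly via~$f$, on the outputs of the upstream part together with~$\ov{x}$). The function $\fxg$ will arise from the upstream part and $\fxh$ from the downstream part, with~$\ov{x}$ threaded directly into~$\fxh$.

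First I would construct $\fxg$. Let $k$ be the arity of~$f$, and let $u_1,\ldots,u_k$ be the $k$ input ports of the unique $f$-node. Because the graph is acyclic and $f$ occurs only once, no directed path from any~$u_i$ can traverse the output port of~$f$; hence the sub-acyclic-PR-exp whose output ports are $u_1,\ldots,u_k$ contains only fixed PR function nodes, with its input variables among $\ov{x}$. Taking that sub-expression as the definition yields a fixed PR multifunction $\fxg(\ov{x})$ whose output is the tuple $\langle u_1,\ldots,u_k\rangle$ of values supplied to~$f$; it has exactly $k$ outputs, matching the arity of~$f$.

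Next I would construct $\fxh$. Replace the $f$-node in~$S$ by a fresh input port carrying a new variable~$y'$. The resulting graph is still acyclic, all of its internal nodes are fixed PR functions, and its input variables are~$\ov{x}$ together with~$y'$. Reading the graph off as a system of equations (as in Section~\ref{normalf}) and collapsing it into a closed expression defines a fixed PR function $\fxh(\ov{x},y')$. By construction, substituting $y' = f(\fxg(\ov{x}))$ forces every wire of the new graph to carry the same value as the corresponding wire of the original graph~$S$, so $S(f,\ov{x}) = \fxh(\ov{x},f(\fxg(\ov{x})))$.

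The one step that needs care is verifying that the upstream/downstream split is truly well defined, i.e.\ that no fixed PR node must simultaneously appear in the definition of $\fxg$ and be evaluated after~$f$ in~$\fxh$. This is where both acyclicity and the uniqueness of the $f$-occurrence are essential: a shared node would close a directed cycle through~$f$, contradicting Definition~\ref{agd}. Formally, I would fix a topological ordering of the DAG, process nodes in that order, and prove by induction that each fixed node can be assigned unambiguously to $\fxg$ (if it lies on some directed path to an input port of~$f$) or to~$\fxh$ (otherwise), with the two resulting equation systems depending respectively on $\ov{x}$ alone and on $\ov{x}$ together with the placeholder~$y'$. Nodes whose value is used both upstream and downstream can safely be recomputed in both $\fxg$ and $\fxh$, so no information is lost; this yields the claimed normal form.
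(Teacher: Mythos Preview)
Your proof is correct and follows essentially the same decomposition as the paper: both arguments separate the acyclic-PR-exp into the portion feeding~$f$ (yielding~$\fxg$) and the portion consuming~$f$'s output together with~$\ov{x}$ (yielding~$\fxh$), using acyclicity and the uniqueness of the $f$-node to ensure no node needs~$f$'s output in order to compute~$f$'s input. Your construction of~$\fxh$ by replacing the $f$-node with a fresh input variable~$y'$ is a clean one-shot alternative to the paper's three-step iterative node-removal algorithm, but the underlying idea---and in particular the observation that nodes feeding both~$f$ and the downstream part may simply be recomputed in both~$\fxg$ and~$\fxh$---is the same.
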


The structure of the normal form is illustrated in Figure~\ref{nf},
page~\pageref{nf}.  Before proving this result we illustrate it with
an example.
\begin{example}
  Consider the function of Figure~\ref{ag} (page~\pageref{ag}).
  Recall that~$y$ and~$y'$ denote respectively the output of
  the entire system~$S$ and the output of~$f$.
  We get
  $$
  \left\{
  \begin{array}{lcl}
    \fxh(x_1,x_2,y') &=& m(y',q(x_1,x_2),p(q(x_1,x_2),x_2))   \\
    \fxg(x_1,x_2)   &=& \langle x_1,q(x_1,x_2)\rangle
  \end{array}
  \right.
  $$
  and, as stated in Theorem~\ref{nft} 
  we have $y=\fxh(\ov{x},f(\fxg(\ov{x})))$.
\end{example}

\begin{proof}
  Consider an acyclic graph that corresponds to an acyclic-PR-exp.  We describe
  an iterative algorithm that removes nodes of the
  graph until an acyclic-PR-exp with the form $\fxh(\ov{x},f(\fxg(\ov{x})))$ is
  obtained, see Figure~\ref{nf} (page~\pageref{nf}).  The output
  variable~$y$ of~$S$ must occur in a single equation (the last one)
  whose initial form is, say $y=H'(\ldots)$; the final form
  of~$H'$ will be denoted by~$\fxh$. Let the function~$f$ be
  represented by the equation~$y'=f(\ov{x'})$. This form of~$f$ will
  not change during the execution of the algorithm.

  With the exception of the input nodes~$\ov{x}$, of~$H'$, and
  of~$f$, the nodes of the graph are primitive recursive functions of
  the form~$u(\ov{w})$ that can be classified in~3 types\\
  \spac 
    -- \fINP: there is at least one path from the output 
    of~$u(\ov{w})$ to an input of~$f$.\\
  \spac 
    -- \fOUT: there is at least one path from the output~$y'$ of~$f$ 
    to one of the inputs~$\ov{w}$.\\
  \spac
    -- \NEITHER.\\
  In Figure~\ref{ag} (page~\pageref{ag}) $q$ is an \fINP\ node,
  $m$ is an \fOUT\ node, and~$p$ is a \NEITHER\ node.

  We emphasise that~$\ov{x}$ (the input nodes), $H'$, and~$f$ belong
  to neither of these classes and will never be removed ($H'$ may
  be modified during the normalisation procedure). Notice also that as
  the graph is acyclic, no node can be simultaneously of the \fINP\
  and \fOUT\ types.
  \begin{enumerate}
  \item [Step 1.]
  Remove all nodes~$u$ of the \fINP\ type that are not
  immediate predecessors of~$f$:\\
  The node~$u$ is removed and ``included'' in all the its
  successors (which may be of the \fINP, \fOUT or \NEITHER\ types) 
  (with the exception of~$f$); no loop is created in the graph. 
  This process may change the
  \NEITHER\ nodes and the nodes in the path~$y'$ to~$y$. In both cases
  the reason is that the output of~$u$ may connect to inputs of
  those nodes. \\
  In the end of the removal process, there is a single set of
  functions~$\fxg$ between the inputs~$\ov{x}$ of the system and the
  inputs of~$f$. If the output of a function~$g'$ of~$\fxg$ is also
  connected to an \fOUT\ or to a \NEITHER\ node, $g'$ is included in
  those nodes.
  \item [Step 2.]
  Remove all nodes~$u$ of the \fOUT\ type:\\
  In a similar way, these nodes are removed; \NEITHER\ nodes
  (but not \fINP nodes, now all included in~$\fxg$) can
  change during the process. Also, outputs of \NEITHER\ nodes
  (that were inputs of the \fOUT\ node~$u$) can now be inputs of~$H'$.
  \item [Step 3.]
  Remove all nodes~$u$ of the \NEITHER\ type:
  include them in the~$H'$ node. 
  \end{enumerate}

  In the end of the process, the final form of~$H'$ (that is, $\fxh$) 
  has~$\ov{x}$ and~$y$ as its only inputs.
\end{proof}

\subsubsection*{How to obtain the normal form from the 
  system of equations: an example}
The normal form algorithm can be rephrased in terms of the system
of equations that describes the system~$S$.
Instead of giving the algorithm in detail, we illustrate it
with the example of Figure~\ref{ag} (page~\pageref{ag}).  
The function~$f$, instance of the decision problem, is special;
it plays the role of an argument of the system.
$$
\begin{array}{lcl}
  \left\{
  \begin{array}{lcl}
    z &=& q(x_1,x_2)   \\
    y'&=& f(x_1,z)     \\
    v &=& p(z,x_2)     \\
    y &=& m(y',z,v)
  \end{array}
  \right.
&
\stackrel{(1)}{\longrightarrow}
&
  \left\{
  \begin{array}{lcl}
    z &=& q(x_1,x_2)            \\
    y'&=& f(x_1,z)              \\
    v &=& p(q(x_1,x_2),x_2)     \\
    y &=& m(y',q(x_1,x_2),v)
  \end{array}
  \right.
\\ \\
  \ 
&
\stackrel{(2)}{\longrightarrow}
&
  \left\{
  \begin{array}{lcl}
    z &=& q(x_1,x_2)   \\
    y'&=& f(x_1,z)     \\
    y &=& m(y',q(x_1,x_2),p(q(x_1,x_2),x_2))
  \end{array}
\right.
\end{array}
$$
The first system of equations corresponds directly to
Figure~\ref{ag}. Although not done here, the first step would be
to include all the inputs in the functions that have at least one
input as argument (with exception of~$f$). 

\begin{enumerate}
\item [Step 1.] The definition of~$z$ depends only on the
  inputs~$x_1$ and~$x_2$ (and not on ``intermediate''
  variables). Replace~$z$ by $q(x_1,x_2)$ in every line after its
  definition, except in the line that defines~$f$; for instance,
  $v=p(z,x_2)$ is replaced by $v=p(q(x_1,x_2),x_2)$.
  The definition of~$z$, namely $z=g(x_1,x_2)$ is not deleted
  because its output is an input of~$f$.

\item [Step 2.] The definition of~$v$ depends only on the
  inputs~$x_1$ and~$x_2$. Replace~$v$ by $p(q(x_1,x_2),x_2)$ in every
  line after its definition; in this case
  \begin{enumerate}[--]
  \item The definition of~$y$ becomes
    $y=m(y',q(x_1,x_2),p(q(x_1,x_2),x_2))$.
  \item  The definition of~$v$ is deleted.
  \end{enumerate}
\end{enumerate}

\subsubsection*{General case: a condition for the existence of a solution}
Using the normal form of~$S$, we see that
\begin{observation*}
  If~$S$ is a fixed acyclic PR structure (see Section~\ref{normalf}, 
  page~\pageref{normalf}), a primitive recursive function~$f$ is a 
  solution of the problem ``$\exists \ov{x}:S(f,\ov{x})=0$?''   iff
  the following condition holds
  $$
   \exists\ov{x},y': [\fxh(\ov{x},y')=0]\wedge [f(\fxg(\ov{x}))=y']\Mdot
  $$
\end{observation*}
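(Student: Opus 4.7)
The plan is to apply the Normal Form Theorem (Theorem~\ref{nft}) essentially verbatim, with the only extra work being the introduction of the auxiliary variable~$y'$. First I would invoke Theorem~\ref{nft} to replace the arbitrary acyclic PR expression $S(f,\ov{x})$ with the equivalent closed form $\fxh(\ov{x},f(\fxg(\ov{x})))$, where $\fxg$ is a fixed PR multifunction (of arity matching that of~$f$) and $\fxh$ is a fixed PR function. Consequently the question ``$\exists\ov{x}:S(f,\ov{x})=0$?'' becomes, unconditionally, ``$\exists\ov{x}:\fxh(\ov{x},f(\fxg(\ov{x})))=0$?''.

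Next I would prove the two directions of the ``iff'' by naming $y':=f(\fxg(\ov{x}))$. For the forward direction, suppose $\ov{x}$ satisfies $\fxh(\ov{x},f(\fxg(\ov{x})))=0$; since $f$ is primitive recursive (hence total) and $\fxg$ is total, the value $y'\defined f(\fxg(\ov{x}))$ is a well-defined integer, and the pair $\langle\ov{x},y'\rangle$ witnesses $[\fxh(\ov{x},y')=0]\wedge[f(\fxg(\ov{x}))=y']$. For the backward direction, suppose $\ov{x}$ and $y'$ satisfy those two conjuncts; substituting $y'=f(\fxg(\ov{x}))$ into $\fxh(\ov{x},y')=0$ gives $\fxh(\ov{x},f(\fxg(\ov{x})))=0$, so by the normal form $S(f,\ov{x})=0$.

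There is no genuine obstacle: the observation is essentially a rephrasing of Theorem~\ref{nft} that exposes the two ``interacting'' conditions highlighted in the discussion preceding it (the membership of $\langle\ov{x},y'\rangle$ in $\fxh^{-1}(0)$, and the equation $f(\fxg(\ov{x}))=y'$). The only bookkeeping point worth a line of comment is that $\fxg(\ov{x})$ is a tuple of the correct arity for~$f$, so that $y'$ ranges over $\ene$ and the substitution step is unambiguous.
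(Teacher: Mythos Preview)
Your proposal is correct and matches the paper's approach exactly: the paper does not give an explicit proof of this Observation, merely prefacing it with ``Using the normal form of~$S$, we see that'', so the intended argument is precisely the one you give --- invoke Theorem~\ref{nft} to replace $S(f,\ov{x})$ by $\fxh(\ov{x},f(\fxg(\ov{x})))$ and then introduce the name~$y'$ for $f(\fxg(\ov{x}))$.
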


Note however that, as already stated in page~\pageref{Sproblem}
(see also Figure~\ref{relat} in page~\pageref{relat}),
this observation {\em is not a ``closed form'' condition} 
(on the functions~$\fxg$ and~$\fxh$) for the existence of a solution
of ``$\exists \ov{x}:S(f,\ov{x})=0$''. Thus, it is not a
satisfactory solution of the problem.

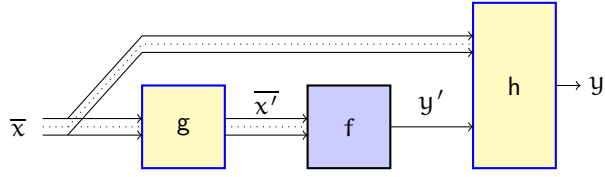
\begin{figure}[t]
  \begin{center}
  \begin{tikzpicture}[xshift=2.5cm, yshift=2.0cm, scale=2.2, auto,
    main node/.style={circle,fill=blue!20,draw}]
    \draw [blue, thick, fill=yellow!30] (0.75,0) rectangle (1.25,0.5);
    \draw [black, thick, fill=blue!20]  (1.75,0) rectangle (2.25,0.5);
    \draw [blue, thick, fill=yellow!30] (2.75,0) rectangle (3.25,1);
    \node at (0.0,0.25) {$\ov{x}$};
    \node at (3.5,0.5)  {$y$};
    \node at (1.0,0.25) {$\fxg$};
    \node at (2.0,0.25) {$f$};
    \node at (3.0,0.50) {$\fxh$};
    \node at (2.50,0.4) {$y'$};
    \node at (1.50,0.4) {$\ov{x'}$};
     \draw [->]     (3.25,0.5) -- (3.4,0.5);     %  from 'h' to 'y'
     \draw [->]     (2.25,0.25) -- (2.75,0.25);  %  from 'f' to 'h'
     \draw [->]     (1.25,0.20) -- (1.75,0.20);  %  tuple from
     \draw [->]     (1.25,0.30) -- (1.75,0.30);  %  'g' to 'f'
     \draw [dotted] (1.25,0.25) -- (1.75,0.25);  %  'x' to 'g'
     \draw [->]     (0.15,0.20) -- (0.75,0.20);  %  tuple from
     \draw [->]     (0.15,0.30) -- (0.75,0.30);  %  'x' to 'g'
     \draw [dotted] (0.15,0.25) -- (0.75,0.25);  %  'x' to 'g'
     \draw [->]     (0.30,0.20) -- (0.75,0.70) -- (2.75,0.70);  %  'x' to 'h'
     \draw [->]     (0.30,0.30) -- (0.75,0.80) -- (2.75,0.80);  %  'x' to 'h'
     \draw [dotted] (0.30,0.25) -- (0.75,0.75) -- (2.75,0.75);  %  'x' to 'h'
  \end{tikzpicture}
  \end{center}
  \caption{An arbitrary acyclic composition of one occurrence of~$f$ with
    fixed PR functions can be reduced to this ``normal form''. The
    corresponding expression has the form
    $y=\fxh(\ov{x},f(\fxg(\ov{x})))$
    where~$\fxg$ has multiple outputs.}
\label{nf}
\end{figure}

\begin{figure}[t]
  \begin{center}
  \begin{tikzpicture}[xscale=3.8, yscale=1.5, auto,
    main node/.style={circle,fill=blue!20,draw}]
    \draw [red!20, thin, fill=yellow!10] (1,1) ellipse (0.12 and 1.2);
    \node [circle,fill=yellow!20,draw=red!20,text width=11mm] (Z)  
          at (0.2,1.0)  {$\;y=0$};
    \node (y1) at (1.0,0.0)  {$y'$};
    \node (x)  at (1.0,2.0)  {$\ov{x}$};
    \node [circle,fill=yellow!20,draw=red!20,text width=9mm] (x1)  
          at (2.0,1.0)  {$\;\;\;\ov{x'}$};
    \node at (0.60,1.0)  {\textcolor{dbrown}{$\fxh^{-1}$}};
    \draw [red!90]    (0.40,1.00) -- (0.50,1.00);
    \draw [->,red!90] (0.70,1.00) -- (0.99,1.00);
    \draw [->,green!60!black] (x)  to node [swap]{
          \textcolor{black}{$\fxh$}} (Z); 
    \draw [->,brown!30!black] (y1) to node {$\fxh$} (Z); 
    \draw [->] (x)  to node {$\fxg$} (x1); 
    \draw [->, blue] (x1) to node {$f$} (y1);
    \draw [<->,dashed,brown!10!black] (x) to node {} (y1);
  \end{tikzpicture}
  \end{center}
  \caption{(Refer to Figure~\ref{nf}, page~\pageref{nf}.) 
    The global system~$S$, $y=S(f,\ov{x})$, has a~zero iff
    we have $f(\fxg(\ov{x}))=y'$ for some
    $(\ov{x},y')\in \fxh^{-1}(0)$. 
    For the particular case $S(f,\ov{x})\equiv \fxh(f(\fxg(\ov{x})))$,
    that is, when there is no \textcolor{dgreen}{green} arrow in the
    diagram,
    there is an explicit (``acyclic'') condition for the existence of a 
    zero of~$S$, see Theorem~\ref{ozHFG-t} (page~\pageref{ozHFG-t}).
    But no such condition is known for the general case.
   }
\label{relat}
\end{figure}
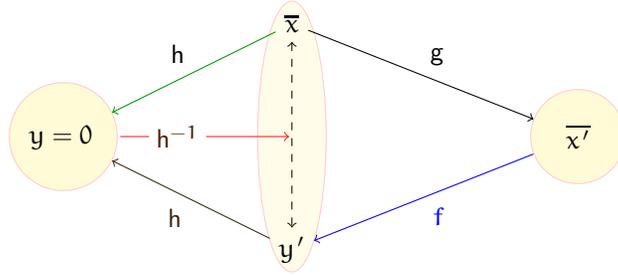

\subsection{Composition of~$f$ with itself}
\label{itself}
\subsubsection{Introduction}
Until now we have only studied PR decision problems in which
only one occurrence of~$f$ is allowed.  This is because we
view~$f$, the instance of the decision problem, as a part of a
larger system~$S$~-- an acyclic graph that containing a
node~$f$.

However, it is interesting to study systems containing 
more than one occurrence of~$f$, and in particular, systems in which the
composition of~$f$ with itself is allowed. 
%for any fixed function~$\fxg$ with an infinite domain (see
%  \mu_y(f(\ov{x},y))) = \mu_y(f(\ov{x},\fxg(y)))
%If this case, that is, when~$\fxg=f$, the statement of 

{\em In this section we will only study a few problems of this kind.}
In particular we study the decidability of the problem
\begin{equation}
  \label{probFF}
  \text{``Given~$f$, does the function~$f(f(x))$ have a zero?''}
\end{equation}
%is infinite''.

\subsubsection{Solution of some problems in which
the primitive recursive function~$f$ is composed with itself}
\label{fff}
For the class of \ozN\ problems in which~$f$ may be composed with
itself, we do not know a general undecidability criterion. However,
we now study a few such problems.  Each of them turns out to be 
$\sigOne$-complete.

The problems that we will consider are the following.
\begin{enumerate}
\item [\ ] \FFN: given~$f$, does the function~$f(f(x))$ have at least
  one zero?''.
\item [\ ] \FFnN: given~$f$, does the function
  $\overbrace{f(f(\cdots f(x)))}^{\text{$n$ $f$'s}}$ have at least one zero?
  (\FFRN{2} is the problem \FFN.)
\item [\ ] \FFtZN: given~$f$, does the function~$f(f(x))$ have two or
  more zeros? 
\end{enumerate}

We will use the concept of ``graph of a function''.
\begin{definition}
  The graph of a function~$f$ is the directed graph
  $$
  (V,E)\;\;\; 
  \text{where $V=\ene$ and $E=\{(i,j):i\in\ene,\,f(i)=j\}$}
  $$
\close
\end{definition}

\begin{theorem}
  The problem \FFtZN\ is $\sigOne$-complete.
\end{theorem}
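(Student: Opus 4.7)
The plan is to show membership in $\sigOne$ and then reduce the $\sigOne$-complete problem \ozN\ (Theorem~\ref{t-oz}) to \FFtZN.

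For membership, note that since $f$ is PR, the function $f\circ f$ is also PR, and the question ``does $f(f(x))$ have at least two zeros?'' can be written as
$$
   \exists x_1, x_2 : (x_1 \neq x_2) \wedge (f(f(x_1)) = 0) \wedge (f(f(x_2)) = 0)\Mcomma
$$
which is a $\sigOne$ statement (the inner predicate is PR, hence recursive).

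For hardness, given an instance $f$ of \ozN\ I would define a PR function $g$ as follows:
$$
  g(x) = \left\{
    \begin{array}{ll}
      0 & \text{if $x=0$} \\
      0 & \text{if $x\geq 1$ and $f(x-1)=0$} \\
      x & \text{if $x\geq 1$ and $f(x-1)\neq 0$.}
    \end{array}
  \right.
$$
The idea is to build a ``free'' zero of $g\circ g$ at $x=0$ (since $g(g(0))=g(0)=0$), and to create a second zero precisely when $f$ has a zero: if $f(k)=0$, then $g(k+1)=0$ and therefore $g(g(k+1))=g(0)=0$. Conversely, if $f$ has no zeros, then $g(x)=x$ for all $x\geq 1$, so $g(g(x))=g(x)=x\neq 0$ for $x\geq 1$, leaving $x=0$ as the unique zero of $g\circ g$. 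Thus $f$ has at least one zero iff $g\circ g$ has at least two zeros, giving the reduction $\oz\reduces\FFtZ$.

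There is no real obstacle here; the only point to watch is that the construction guarantees a unique ``baseline'' zero at $0$ in the no-zero case, so that the existence of a second zero of $g\circ g$ is equivalent to the existence of a zero of $f$ rather than merely implied by it. Combining membership with the reduction from the $\sigOne$-complete problem \ozN\ yields $\sigOne$-completeness of \FFtZN.
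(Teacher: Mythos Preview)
Your proof is correct and follows essentially the same approach as the paper: establish a baseline zero of $g\circ g$ at $0$ via $g(0)=0$, and arrange that a zero of $f$ produces an additional zero of $g\circ g$ at a shifted index. The paper defines $g(x)=f(x-1)$ for $x\geq 1$ (so the no-zero case requires checking $g(g(x))=f(f(x-1)-1)\geq 1$), whereas you make $g$ the identity on non-zero positions, which makes the no-zero verification slightly more transparent; but the reduction and its structure are the same.
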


\begin{proof}
  Membership in~$\sigOne$ is obvious.
  To prove completeness reduce \ozN\ to \FFtZN. 
  Given an instance~$f$ of \ozN,
  define the function~$g$, instance of \FFtZN, as
  $$
  \begin{array}{lcll}
  g(0)    &=& 0    &                             \\
  g(1)    &=& f(0)    &                          \\
  g(2)    &=& f(1)    &                          \\
  g(3)    &=& f(2)    &                          \\
  \ldots&\ldots&\ldots
  \end{array}
  $$
  As $g(g(0))=g(0)=0$ the function~$g(g(x))$ has at least one
  zero. Also, if~$f(x)=0$ for some~$x$, $g(g(x))$ has two or more
  zeros, because in this case we have $g(g(x+1))=g(f(x))=g(0)=0$
  (and~$x+1\geq 1$).

  On the other hand, if~$f(x)$ has no zeros and~$x\geq 1$, we get
  $$
  g(g(x))=g(f(x-1))=g(y)=f(y-1)\geq 1
  $$
  where~$y\defined f(x-1)\geq 1$.
\end{proof}

\begin{theorem}
  The problem \FFN\ (page~\pageref{probFF}) is $\sigOne$-complete.
\end{theorem}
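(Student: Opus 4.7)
The plan is to mirror the strategy used for \FFtZN: membership in $\sigOne$ is immediate from the fact that $f$ is primitive recursive, so $\exists x: f(f(x))=0$ is semi-decidable (enumerate $x=0,1,2,\ldots$ and check). For completeness I would reduce the $\sigOne$-complete problem \ozN\ (Theorem~\ref{t-oz}) to \FFN.

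The key idea is to use an even/odd encoding so that a single application of~$g$ is a ``harmless'' step and only two applications in a row can land on a value that depends on~$f$. Given an instance~$f$ of \ozN, define the PR function
$$
  g(x) \;=\;
  \left\{
    \begin{array}{ll}
      x+1          & \text{if $x$ is even,} \\
      2\,f((x-1)/2) & \text{if $x$ is odd.}
    \end{array}
  \right.
$$
Then for every $y\in\ene$ we have $g(g(2y)) = g(2y+1) = 2f(y)$, so $f(y)=0$ implies $g(g(2y))=0$. Conversely, if $x$ is odd, then $g(x)=2f((x-1)/2)$ is even, hence $g(g(x)) = 2f((x-1)/2)+1\geq 1$; and if $x=2y$, then $g(g(x))=2f(y)$, which vanishes only when $f(y)=0$. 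Thus $\exists x: g(g(x))=0$ iff $\exists y: f(y)=0$, giving the desired reduction $\oz \reduces \FF$.

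I do not anticipate a genuine obstacle here: the definition of~$g$ is primitive recursive (it is a case distinction on parity plus one application of~$f$), and the two directions of the equivalence are straightforward once the parity invariant is tracked. The only thing to be slightly careful about is the ``no spurious zeros'' direction: one must check that iterating $g$ starting from an odd input cannot accidentally produce a zero, which is why the odd branch multiplies $f$ by~$2$ (keeping its output even) and the even branch adds~$1$ (producing an odd, nonzero value). Combined with the $\sigOne$-completeness of \ozN, this yields the $\sigOne$-completeness of \FFN.
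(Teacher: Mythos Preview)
Your proposal is correct and coincides with the paper's proof: the paper defines the reduction via the graph of~$g$ by setting $g(2i)=2i+1$ and $g(2i+1)=2f(i)$, which is exactly your even/odd formula, and the verification of both directions is the same parity argument you give.
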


\begin{proof}
  Notice
  that~$\exists x:g(g(x))=0$
  iff the graph of~$g$ contains a path with length~2 ending in~0.

  We reduce \ozN\ to \FFN. Given an instance~$f$ of \ozN, define
  the graph of the function~$g$, the instance of \FFN, as:
  \begin{enumerate}
  \item [--] The set of nodes is~$\ene$.
  \item [--] For every pair $(i,j)$ in the graph of~$f$, that is, for
    every~$f(i)=j$ with~$i\in\ene$, there are two edges of the graph
    of~$g$: $(2i,2i+1)$ and $(2i+1,2j)$. 
    Notice that in the graph of~$g$ every odd numbered node has 
    indegree = outdegree = 1.
  \end{enumerate}
  Every node~$i$ of the graph of~$f$ corresponds to two
  nodes of the graph of~$g$: $2i$ and~$2i+1$. See Figure~\ref{FFtwo}
  (page~\pageref{FFtwo}) where the transformation is illustrated.

  Suppose that there is a solution of \ozN: in the graph of~$g$ the
  solution~$f(x)=0$ corresponds a path with length~2 ending in~0,
  namely $2x\to 2x+1 \to 0$. So, $g(g(2x))=0$.
  For example $f(2)=0$ and $g(g(4))=g(5)=0$.

  Suppose that there is a solution of \FFN: the graph of~$g$ has a path of
  length~2 ending in~0, say~$g(g(x))=0$; that path must have the
  form $2i\to 2i+1 \to 0$, because every even node~$2i$ only
  connects to the successor node~$2i+1$ (and never to~0). Thus, by
  the definition of~$g$, we must have~$f(i)=0$.
\end{proof}

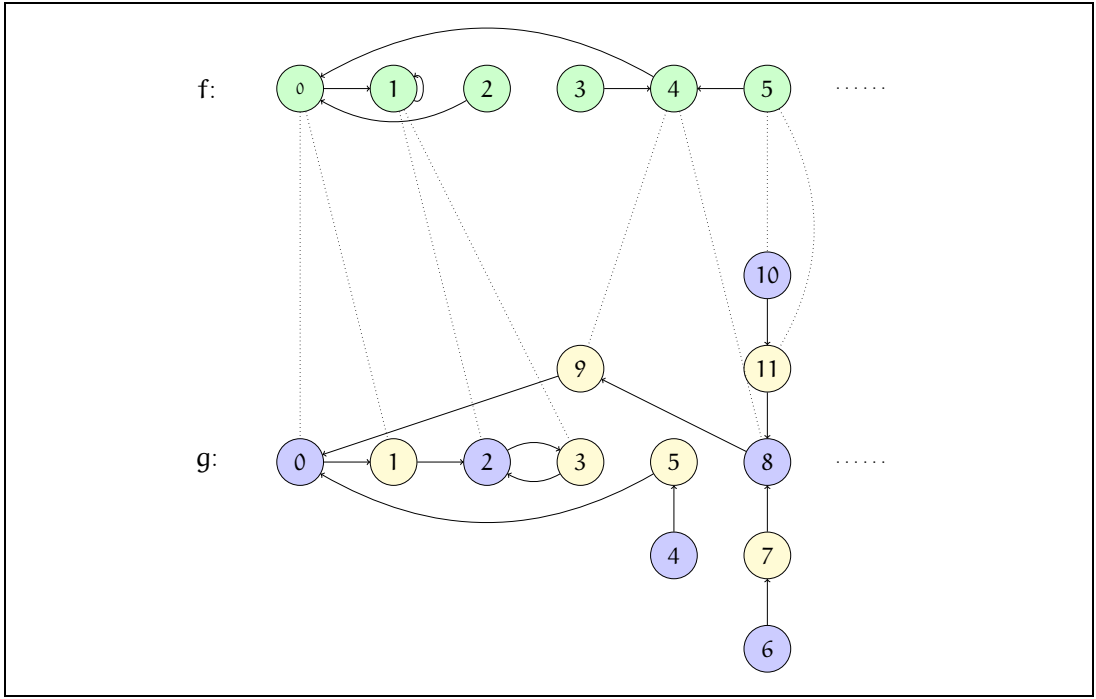
\begin{figure}
  \begin{center}
  \fbox{
  \begin{minipage}[c]{14 cm}
  \hfill
  \resizebox{95mm}{!}{%
  \begin{tikzpicture}[node distance=2.0cm,
     xshift=100mm, %auto,
     main node/.style={circle,fill=blue!20,draw,minimum size=1.0cm},
     green/.style={circle,fill=green!20,draw,minimum size=1.0cm},
     others/.style={circle, draw=black, 
           fill=yellow!20, minimum size=1.0cm},
     empty/.style={circle, draw=white, 
           fill=white, minimum size=1.0cm},
     every loop/.style={max distance=3mm,in=30,out=-30}]
    \node[green] (0)              {$0$};
    \node[empty] (FF)[left  of=0 ]{\LARGE $f$:};
    \node[green] (1) [right of=0] {\Large $1$};
    \node[green] (2) [right of=1] {\Large $2$};
    \node[green] (3) [right of=2] {\Large $3$};
    \node[green] (E) [right of=3] {\Large $4$};
    \node[green] (F) [right of=E] {\Large $5$};
    \node[empty] (DOTS1) [right  of=F]{.\;.\;.\;.\;.\;.};

    \draw[->] (0) to node {$\ $} (1);
    \path[->] (1) edge [loop below] node {\ } (1);
    \draw[->] (3) to node {$\ $} (E);
    \draw[->] (F) to node {$\ $} (E);
    \draw[->, bend left] (2) to node {$\ $} (0);
    \draw[->, bend right] (E) to node {$\ $} (0);
    \node[empty]     (41)  [below of=0 ] {$\ $};
    \node[empty]     (42)  [below of=41] {$\ $};
    \node[empty]     (43)  [below of=42] {$\ $};
    \node[main node] (50)  [below of=43] {\Large $0$};
    \node[empty]     (GG)  [left  of=50] {\LARGE $g$:};
    \node[others]    (51)  [right of=50] {\Large $1$};
    \node[main node] (52)  [right of=51] {\Large $2$};
    \node[others]    (53)  [right of=52] {\Large $3$};
    \node[others]    (55)  [right of=53] {\Large $5$};
    \node[main node] (58)  [right of=55] {\Large $8$};
    \node[empty] (DOTS2)   [right of=58] {.\;.\;.\;.\;.\;.};
    \node[main node] (54)  [below of=55] {\Large $4$};
    \node[others]    (57)  [below of=58] {\Large $7$};
    \node[main node] (56)  [below of=57] {\Large $6$};
    \node[others]    (59)  [above of=53] {\Large $9$};
    \node[others]    (511) [above of=58] {\Large $11$};
    \node[main node] (510) [above of=511]{\Large $10$};
    \draw[dotted]  (0)  to node {$\ $} (50);
    \draw[dotted]  (0)  to node {$\ $} (51);
    \draw[dotted]  (1)  to node {$\ $} (52);
    \draw[dotted]  (1)  to node {$\ $} (53);
    \draw[dotted]  (E)  to node {$\ $} (58);
    \draw[dotted]  (E)  to node {$\ $} (59);
    \draw[dotted]  (F)  to node {$\ $} (510);
    \draw[dotted, bend left] (F) to node {$\ $} (511);
    \draw[->]  (50) to node {$\ $} (51);
    \draw[->]  (51) to node {$\ $} (52);
    \draw[->]  (56) to node {$\ $} (57);
    \draw[->]  (57) to node {$\ $} (58);
    \draw[->]  (54) to node {$\ $} (55);
    \draw[->] (510) to node {$\ $} (511);
    \draw[->] (511) to node {$\ $} (58);
    \draw[->] (58) to node {$\ $} (59);
    \draw[->] (59) to node {$\ $} (50);
    \draw[->, bend left] (52) to node {$\ $} (53);
    \draw[->, bend left] (53) to node {$\ $} (52);
    \draw[->, bend left] (55) to node {$\ $} (50);
  \end{tikzpicture}
  }
  \hfill\ 
  \rule[-2mm]{0.0mm}{5mm}
  \end{minipage}
  }
  \end{center}
  \caption{Example of the reduction from \ozN\ (function~$f$) to \FFN\ 
    (function~$g$).
    The upper diagram (green nodes) represents part of the
    graph of~$f$ with~$f(0)=1$, $f(1)=1$, $f(2)=0$,
    $f(3)=4$, $f(4)=0$, and~$f(5)=4$.
    The bottom diagram represents the transformed graph, or graph of the
    function~$g$. Each node~$i$ of the top diagram is ``transformed'' in
    two nodes, $2i$ (blue) and $2i+1$ (yellow). 
    Dotted lines connect the upper nodes~0, 1, 4, and~5 to the 
    corresponding pairs of nodes in the bottom diagram (connections from
    upper nodes~2 and~3 are not represented).
    Each edge $(i,j)$ of~$f$ (top diagram) is mapped in two edges 
    of~$g$ (bottom diagram):
    $(2i,2i+1)$ and $(2i+1,2j)$.} 
\label{FFtwo}
\end{figure}

The previous proof can be easily generalised.
\begin{theorem}
  For every~$n\geq 1$ the problem \FFnN\ is $\sigOne$-complete.
\end{theorem}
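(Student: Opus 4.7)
The plan is to generalise the graph construction used for $n=2$ by replacing each node $i$ in the graph of $f$ by a chain of $n$ consecutive nodes. Given an instance $f$ of \ozN, define the PR function $g$ (instance of \FFnN) by
$$
g(ni+k) =
\left\{
  \begin{array}{ll}
    ni+k+1 & \text{if $0\leq k\leq n-2$}  \\
    nf(i)  & \text{if $k=n-1$.}
  \end{array}
\right.
$$
Thus the graph of $g$ consists of length-$n$ chains $ni\to ni+1\to\cdots\to ni+(n-1)\to nf(i)$; each single edge of the graph of $f$ is ``stretched'' into exactly $n$ edges of the graph of $g$.

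Membership of \FFnN\ in $\sigOne$ is immediate: just enumerate $x$ and check whether the $n$-fold composition evaluated at $x$ equals~$0$. For completeness I reduce the $\sigOne$-complete problem \ozN\ (Theorem~\ref{t-oz}) to \FFnN\ via the construction above. The key computation is that, for any $i$ and any $0\leq k\leq n-1$, the $n$-fold iteration satisfies $\underbrace{g(g(\cdots g(ni+k)))}_{n\text{ times}} = nf(i)+k$: starting at $ni+k$, the first $n-1-k$ applications of $g$ walk to the end $ni+(n-1)$ of the current chain, the $(n-k)$-th application jumps to $nf(i)$, and the remaining $k$ applications walk $k$ steps into the next chain. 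From this formula, $g^{(n)}(x)=0$ iff $x=ni$ for some $i$ with $f(i)=0$, so $f$ has a zero iff the $n$-fold composition of $g$ has a zero. This yields the desired reduction $\ozN \reduces \FFnN$, proving $\sigOne$-completeness.

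The only delicate point is verifying the iteration formula $g^{(n)}(ni+k)=nf(i)+k$; but because each chain has length exactly $n$ and the chain indices $0,\ldots,n-1$ behave uniformly, the bookkeeping is routine. No other obstacle arises: the function $g$ is primitive recursive (it is definable by a simple case analysis on the remainder of $x$ modulo $n$ and a single evaluation of $f$), and $n$ is a fixed constant so the $n$-fold composition is also primitive recursive.
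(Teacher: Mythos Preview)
Your proof is correct and is exactly the natural generalisation the paper has in mind: the paper proves the case $n=2$ via the map $g(2i)=2i+1$, $g(2i+1)=2f(i)$ and then simply states that ``the previous proof can be easily generalised'', which is precisely your chain-of-length-$n$ construction. Your explicit verification of the formula $g^{(n)}(ni+k)=nf(i)+k$ and the ensuing equivalence is more detailed than what the paper records, but the underlying idea is identical.
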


\section{Comparing the degree of undecidability: conjectures}
\label{PR-ParRec}
We compare in the general case the degree of 
undecidability of a ParRec problem and of the corresponding PR problem.
ParRec indices will be denoted by~$e$ and PR indices by~$p$.
The set of all~$p$, that is, of the indices 
that represent Loop programs, will be denoted by~$\IPR$
(the set $\IPR\subset\ene$ is recursive).

We first characterise the PR problem that corresponds to a given ParRec 
problem.
\begin{definition}[Problem correspondence]
  \label{correspondence}
  Let a decision problem about ParRec functions be
  ``given~$e\in\ene$, is $\stat(\vph_e)$?'',
  where $\stat(f)$ denotes some statement about the function~$f$.
  The {\em corresponding} PR decision problem is
  ``given~$p\in\IPR$, is $\stat(\vph_p)$?''. \close
\end{definition}

Thus, the correspondence between the two kinds of problems is simply 
the restriction of the index set, $\ene\to\IPR$.

\begin{definition}
  If~$n$ is a positive integer, the decision problem~$P$
  is {\em located} at level~$n$ of the AH if $P\in\sigN\cup\piN$
  but $P\not\in\deltaN$.
  If $P\in\sigN\setminus\piN$ we say that~$P$ is {\em located} in~$\sigN$
  and similarly, if $P\in\piN\setminus\sigN$ we say that~$P$ is 
  {\em located} in~$\piN$.
  If a decision problem~$P$ is $\sigN$-complete or $\piN$-complete, then~$P$
  is located at level~$n$.
  For $n\geq 2$ we say that a set in 
  $\deltaN\setminus(\sigNO\cup\piNO)$ is {\em located}
  at level $n-\sfrac{1}{2}$ of the AH.
  If a PR decision problem is located at level~$m$ of the AH and
  the corresponding ParRec problem is located at level~$n$ of the AH
  we say that the {\em undecidability jump} (or JUMP) is $[m\!\to\! n]$.
  The {\em undecidability discrepancy} (or DISCREP) between those 
  problems is $\abs{n-m}$.
\close
\end{definition}
It is a consequence of the Arithmetic Hierarchy Theorem (see for instance
\cite[Theorem IV.1.13]{odi}) that for every positive integer~$n$ there exist
decision problems located at level~$n$ and decision problems located at level 
$n+\sfrac{1}{2}$.\\
A {\em partition} of the arithmetical predicates~(\cite{hermes}) is
$$
   \deltaOne,\, \pc{\sigOne},\, \pc{\piOne},\,
   [\deltaTwo\setminus(\sigOne\cup\piOne)],\, \pc{\sigTwo},\, \pc{\piTwo},\,
   [\deltaThree\setminus(\sigTwo\cup\piTwo)],\ldots
$$
where $\pc{X}$ means the class of problems complete in~$X$.
\bigskip

Theorem~\ref{alsoPART} (page~\pageref{alsoPART}) is a simple
consequence of the fact that every PR index is also a ParRec index,
that is, $\IPR\subset\ene$. We now show that
the index restriction $\ene\to\IPR$ {\em does not increase} the AH level in
which a decision problem is located.

\begin{theorem}
  \label{reduce}
  Let~$P$ be a ParRec decision problem and let~$\PRversion{P}$ be the 
  corresponding PR version. If~$P$ and~$\PRversion{P}$ are located 
  respectively at the levels~$n$ and~$m$ (possibly a half-integers) of the 
  arithmetic hierarchy, then $n\geq m$.
\end{theorem}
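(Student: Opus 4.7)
The plan is to exhibit a many-one reduction $\PRversion{P}\reduces P$ via a total recursive function that converts PR indices into equivalent ParRec indices, and then invoke Property~\ref{reds} to transfer membership in every $\sigN$ or $\piN$ class from $P$ to $\PRversion{P}$.

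First I would define a recursive function $h:\ene\to\ene$ such that for every $p\in\IPR$, $\vph_{h(p)}$ equals (as a partial recursive function) the PR function coded by~$p$; this is immediate since a Loop program can be effectively translated into a Turing machine (or ParRec index) computing the same total function, and for $p\notin\IPR$ we may let $h(p)$ be any fixed ``dummy'' index mapped outside $P$ (or equivalently, restrict the reduction to $\IPR$ and note that $\IPR$ is recursive, so membership can always be checked before applying~$h$). More concretely, define a recursive~$r:\ene\to\ene$ by
\begin{equation*}
  r(p)=\begin{cases} h(p) & \text{if $p\in\IPR$ and $\stat$ is to hold,}\\
                     e_0  & \text{otherwise,} \end{cases}
\end{equation*}
where $e_0$ is a fixed ParRec index chosen so that $\stat(\vph_{e_0})$ has a predetermined truth value; this yields a many-one reduction $\PRversion{P}\reduces P$, because by Definition~\ref{correspondence} the two problems ask the same question $\stat(\cdot)$ about the function, and $\vph_{h(p)}=\vph_p$ as a partial function.

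Next I would apply Property~\ref{reds}: for every $n\geq 1$, if $P\in\sigN$ then $\PRversion{P}\in\sigN$, and if $P\in\piN$ then $\PRversion{P}\in\piN$. From this the level inequality $m\leq n$ follows by a small case analysis. If $P$ is located at an integer level~$n$, then $P\in\sigN\cup\piN$, so $\PRversion{P}\in\sigN\cup\piN$, whence $m\leq n$. If $P$ is located at the half-integer level $n-\sfrac{1}{2}$, then $P\in\deltaN=\sigN\cap\piN$, so $\PRversion{P}\in\deltaN$, and therefore $\PRversion{P}$ is located at some level~$\leq n-\sfrac{1}{2}$ (it may collapse further into $\sigNO\cup\piNO$ or even into~$\deltaNO$, but never exceeds $n-\sfrac{1}{2}$).

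I expect no substantial obstacle: the essential content is that $\IPR$ is recursive and that a Loop program can be effectively converted to an equivalent ParRec index, so the identity-like map on the corresponding functions is a legitimate many-one reduction. The only mildly delicate point is the bookkeeping for the half-integer levels, which is handled simply by using $\deltaN=\sigN\cap\piN$ together with the two parts of Property~\ref{reds}.
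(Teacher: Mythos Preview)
Your approach is correct in spirit and arguably cleaner than the paper's, but the displayed definition of $r$ contains a slip: the clause ``if $p\in\IPR$ and $\stat$ is to hold'' cannot appear in the definition of a recursive function, since deciding $\stat(\vph_p)$ is exactly the (generally undecidable) problem $\PRversion{P}$ itself. What you want is simply
\[
  r(p)=\begin{cases} h(p) & \text{if } p\in\IPR,\\ e_0 & \text{otherwise,}\end{cases}
\]
with $e_0$ a fixed index \emph{not} in $P$ (such an index exists whenever $P\neq\ene$; if $P=\ene$ then both $P$ and $\PRversion{P}=\IPR$ are recursive, so $m=n=0$ trivially). Your surrounding prose makes clear that you understand this, so it is a writing glitch rather than a conceptual gap.

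The paper takes a slightly different route: rather than constructing an explicit many-one reduction and invoking Property~\ref{reds}, it observes that $P$ and $\PRversion{P}$ are defined by the \emph{same} first-order statement $\stat(\vph_\cdot)$ (only the range of the index changes), hence share the same Tarski--Kuratowski normal form, so membership of $P$ in any $\sigN$ or $\piN$ transfers directly to $\PRversion{P}$. Your reduction-based argument and the paper's syntactic argument are equivalent here; yours has the advantage of being fully explicit and of plugging straight into Property~\ref{reds}, while the paper's sidesteps the bookkeeping for indices outside~$\IPR$ and the edge case $P=\ene$. Both then finish with the same case split on integer versus half-integer levels.
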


\begin{proof}
  Suppose first that~$P$ is located in~$\sigN$ or in~$\piN$.
  Let~$P$ and~$\PRversion{P}$ be expressed as $\stat(\vph_e)$ and
  $\stat(\vph_p)$, respectively; see Definition~\ref{correspondence}, 
  page~\pageref{correspondence}. The two statements are identical and
  can be written in the same Tarski-Kuratowski normal form
  (see for instance \cite[Chapter 14]{Rogers}).
  It follows that~$\PRversion{P}$ is also at level~$n$ (but possibly
  not located at level~$n$). The level~$m$
  at which~$\PRversion{P}$ is located must then satisfy $m\leq n$.

  If~$P$ is located in~$\deltaM$, the corresponding statement has
  the form $\stat(\vph_e)\wedge\stat'(\vph_e)$, where one conjunct
  corresponds to~$\sigM$ and the other to~$\piM$. The
  problem~$\PRversion{P}$ also satisfies
  $\stat(\vph_p)\wedge\stat'(\vph_p)$,
  and we may reason as above to show that $m\leq n$.
%  By contradiction assume that~$P$ is in~$\piN$. Then~$\neg P$ is in~$\sigN$.
%  Part~(iii): the proof is similar.
\end{proof}

\label{conjecture}

The restriction $\IPR\to \ene$ may reduce the degree of undecidability of a 
problem, see Figure~\ref{c-summary} (page~\pageref{c-summary}) and 
Figure~\ref{changes} (page~\pageref{changes}).
After analysing the JUMP of a few decision problems we will conjecture that 
all non-negative JUMPs and every non-negative DISCREPs are possible.

%\bigskip

\begin{landscape}
\begin{figure}
$$
\arraycolsep=2.5mm
\renewcommand{\arraystretch}{1.7}
\setlength{\arraycolsep}{10pt}
\begin{array}{p{2.0mm}|ccccccc|}   
   \multicolumn{1}{c}{} &
   \multicolumn{1}{c}{\text{problem}}   &   
   \text{function}   & \text{instance} &
   \text{statement} &\text{AH class} &
   \text{JUMP}      & \multicolumn{1}{c}{\text{DISCREP}}    \\ \cline{2-8}
   \multirow{2}{10mm}{1}&
   \multirow{2}{*}{$\exists x\!:\!f(x)\!=\!1$} &
   \text{PR}         & p & \exists x:\vph_p(x)=1 & 
                       \text{$\sigOne$-complete}    &
                   \multirow{2}{*}{$[1\!\to\!1]$} & 
                   \multirow{2}{*}{0}  \\
   &&
   \text{ParRec}     & e & \exists x:\vph_e(x)=1
                     & \text{$\sigOne$-complete}   &&
   \\ \cline{2-8}
   \multirow{2}{10mm}{2}&
   \multirow{2}{*}{{\GkCodL = 1}} &
   \text{PR}         & p & \lvert\cod{\vph_p}\rvert=1 & 
                       \text{$\piOne$-complete}     &
                   \multirow{2}{*}{$[1\!\to\!\sfrac{3}{2}]$} & 
                   \multirow{2}{*}{$\sfrac{1}{2}$}  \\
   &&
   \text{ParRec}     & e &  \lvert\cod{\vph_e}\rvert=1 
                     & \text{$\deltaTwo\!\setminus\!(\sigOne\cup\piOne)$} &&
   \\ \cline{2-8}
   \multirow{2}{10mm}{3}&
   \multirow{2}{*}{\GeozN} &
   \text{PR}         & p & \exists! x:\vph_p(x)=0 & 
                       \text{$\deltaTwo\!\setminus\!(\sigOne\cup\piOne)$}    &
                   \multirow{2}{*}{$[\sfrac{3}{2}\!\to\!2]$} & 
                   \multirow{2}{*}{$\sfrac{1}{2}$}  \\
   &&
   \text{ParRec}     & e & \exists! x:\vph_e(x)=0
                     & \text{$\piTwo$-complete}   &&
   \\ \cline{2-8}
   \multirow{2}{10mm}{4}&
   \multirow{2}{*}{HP} &
   \text{PR}         & \ang{p,x} & \exists t:T(p,x,t)=0 & 
                   \text{recursive}               &
                   \multirow{2}{*}{$[0\!\to\!1]$} & 
                   \multirow{2}{*}{1}  \\
   &&
   \text{ParRec}     & \ang{e,x} & \exists t:T(e,x,t)=0
                     & \text{$\sigOne$-complete}    &&
   \\ \cline{2-8}
   \multirow{2}{10mm}{5}&
   \multirow{2}{*}{TOTAL} &
   \text{PR}         & p & \forall x\:\exists t:T(p,x,t)=0
                     & \text{recursive}           &
                   \multirow{2}{*}{$[0\!\to\!2]$} & 
                   \multirow{2}{*}{2}  \\
   &&
   \text{ParRec}     & e & \forall x\:\exists t:T(e,x,t)=0
                     & \text{$\piTwo$-complete}    &&
   \\ \cline{2-8}
   \multirow{2}{10mm}{6}&
   \multirow{2}{*}{{\recurN}} &
   \text{PR}         & p & \text{$W_p$ is recursive}
                     & \text{recursive}           &
                   \multirow{2}{*}{$[0\!\to\!3]$} & 
                   \multirow{2}{*}{3}  \\
   &&
   \text{ParRec}     & e & \text{$W_e$ is recursive}
                     & \text{$\sigThree$-complete}    &&
   \\ \cline{2-8}
\end{array}
$$
\caption{The degree of undecidability of six decision problems is compared.
  EOZ means \GeozN.
  For each problem two cases are considered: the instance is a partial 
  recursive (ParRec) function and the instance is a primitive recursive (PR) 
  function.}
\label{four}
\end{figure}
\end{landscape}

\subsection*{Some decision problems: discrepancy}

\begin{problem}
\label{PfEQone}
\noindent ``$\exists x:f(x)=1$?''. 
Line~1 of Figure~\ref{four}, page~\pageref{four}.\\
In detail the ParRec statement is
$
   \exists x\:\exists t:(T(e,x,t)=0) \wedge (U(e,x,t_0)=1)
$,
where~$t_0$ is a value of~$t$ satisfying $T(e,x,t_0)=0$ and
the notation of Theorem~\ref{Knft} (page~\pageref{Knft}) is used.\closex
\end{problem}

\begin{problem}
\label{PcodONE}
\noindent ``\GkCodON?''. 
Line~2 of Figure~\ref{four}, page~\pageref{four}.
See item~\ref{proof-COD-one-ParR}, page~\pageref{proof-COD-one-ParR}.\closex
\end{problem}

\begin{problem}
\noindent \Geoz. 
Line~3 of Figure~\ref{four}, page~\pageref{four}.
See item~\ref{eozIT} (page~\pageref{eozIT}) and Theorem~\ref{t-eoz} 
(page~\pageref{t-eoz}).\closex
\end{problem}

\begin{problem}
\noindent{HP, the halting problem.}
Line~4 of Figure~\ref{four}, page~\pageref{four}.
We use the notation explained in Definition~\ref{interval} 
(page~\pageref{interval}).
$\ang{e,x}$ and $\ang{p,x}$ denote respectively bijections 
$\ene\times\ene\to\ene$ and $\IPR\times\ene\to\ene$.\closex
\end{problem}

\begin{problem}
\noindent{TOTAL problem.}
Line~5 of Figure~\ref{four}, page~\pageref{four}.\closex
\end{problem}

\begin{problem}
\label{Precur}
\noindent{\recurN\ problem.} 
Line~6 of Figure~\ref{four}, page~\pageref{four}.\\
Let \recurN\ be the set defined in \cite[Definition~4.14 (page~21)]{soare},
$
\recur= \{e \mid \text{$W_e$ is recursive}\}
$.
This set is $\sigThree$-complete, see for instance
\cite[\S 14.8, Theorem~XVI (page 327)]{Rogers} and
\cite[Corollary~3.5 (page 66)]{soare}.
The PR version of this problem is
$
\recurPR = \{p \mid \text{$W_p$ is recursive}\}
$.
Once $\forall p\in\IPR:W_p=\ene$, we have \recurPRN~=~$\IPR$, a recursive 
set.\closex
\end{problem}

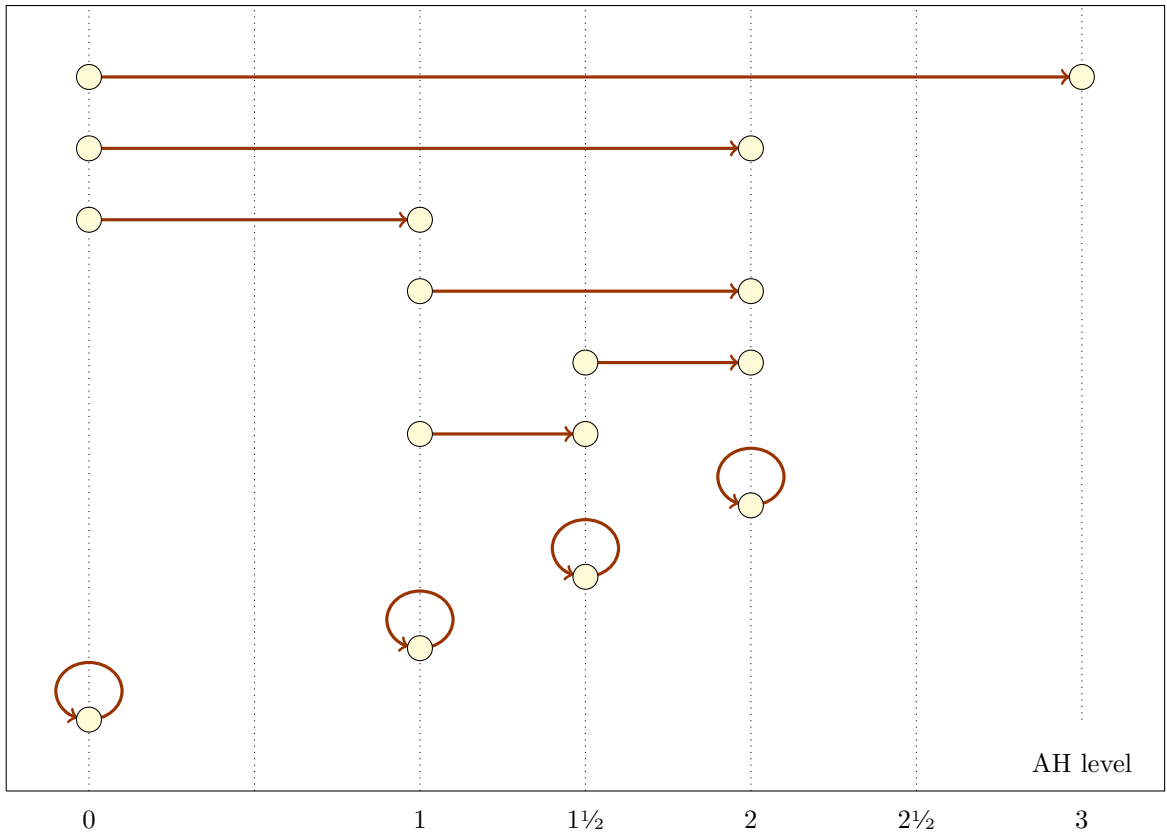
\begin{figure}
\begin{center}
  \begin{tikzpicture}[xshift=0.5cm, yshift=2.0cm, 
    xscale=2.2, yscale=1.9,auto]
  \tikzset{others/.style={circle, draw=black, 
           fill=yellow!20, minimum size=2mm},
           double/.style={circle, draw=black, very thick, 
           fill=orange!30, minimum size=2mm}
   }
    \node at (0.0,-0.2)     {0};
    \node at (2.0,-0.2)     {1};
    \node at (3.0,-0.2)     {1\sfrac{1}{2}};
    \node at (4.0,-0.2)     {2};
    \node at (5.0,-0.2)     {2\sfrac{1}{2}};
    \node at (6.0,-0.2)     {3};
    \node at (6.0, 0.2)     {\text{AH level}};
  \draw[black, thin]  (-0.5,0)--(6.5,0)--(6.5,5.5)--(-0.5,5.5)--(-0.5,0);
  \draw[black, thin, dotted]    (0,0) -- (0,5.5);
  \draw[black, thin, dotted]    (1,0) -- (1,5.5);
  \draw[black, thin, dotted]    (2,0) -- (2,5.5);
  \draw[black, thin, dotted]    (3,0) -- (3,5.5);
  \draw[black, thin, dotted]    (4,0) -- (4,5.5);
  \draw[black, thin, dotted]    (5,0) -- (5,5.5);
  \draw[black, thin, dotted]    (6,0.5)--(6,5.5);
    \draw[very thick,->,dred] (0.00,0.50) arc (-90:250:2.00mm);
    \draw[very thick,->,dred] (2.00,1.00) arc (-90:250:2.00mm);
    \draw[very thick,->,dred] (3.00,1.50) arc (-90:250:2.00mm);
    \draw[very thick,->,dred] (4.00,2.00) arc (-90:250:2.00mm);
    \node[others] at (0.00,0.50)   {};
    \node[others] at (2.00,1.00)   {};
    \node[others] at (3.00,1.50)   {};
    \node[others] at (4.00,2.00)   {};
    \draw[very thick,->>,bend right,dred] (2.0,2.5)--(3.0,2.5);
    \node[others] at (2.00,2.50)   {};
    \node[others] at (3.00,2.50)   {};
    \draw[very thick,->>,dred] (3.0,3.0)--(4.0,3.0);
    \node[others] at (3.00,3.00)   {};
    \node[others] at (4.00,3.00)   {};
    \draw[very thick,->>,dred] (2.0,3.5)--(4.0,3.5);
    \node[others] at (2.00,3.50)   {};
    \node[others] at (4.00,3.50)   {};
    \draw[very thick,->>,dred] (0.0,4.0)--(2.0,4.0);
    \node[others] at (0.00,4.00)   {};
    \node[others] at (2.00,4.00)   {};
    \draw[very thick,->>,dred] (0.0,4.5)--(4.0,4.5);
    \node[others] at (0.00,4.50)   {};
    \node[others] at (4.00,4.50)   {};
    \draw[very thick,->>,dred] (0.0,5.0)--(6.0,5.0);
    \node[others] at (0.00,5.00)   {};
    \node[others] at (6.00,5.00)   {};
  \end{tikzpicture}
  \end{center} 
  \caption{Levels in the arithmetic hierarchy of the PR (left) and ParRec (right)
    versions of 10 decision problems, including  Examples~\ref{PfEQone} 
    (page~\pageref{PfEQone}) to~\ref{Precur} (page~\pageref{Precur})
    and other problems studied in this work.
    In the four lower decision problems the PR and ParRec levels are identical. 
    All decision problems (in both versions) located in integer AH levels are 
    complete in the corresponding class.
    (The level~$\sfrac{1}{2}$ does not exist because 
     $\sigZero=\piZero=\deltaZero=\deltaOne$.)
}
\label{discreps}
\end{figure}

\subsection*{A conjecture}
Looking to Figure~\ref{discreps} (page~\pageref{discreps}), to which
other examples could easily be added, the following conjecture seems natural.
\begin{conjecture}
  \label{distance}
  For every pair $n\geq m\geq 0$ with 
  $n,\,m\in\{0,\,1,\,\sfrac{3}{2},\,2,\,\sfrac{5}{2},\,3,\,\sfrac{7}{2}\ldots\}$
  there is a
  ParRec decision problem located at level~$n$ of the arithmetic hierarchy
  such that the corresponding PR decision problem is located at level~$m$,
  that is, with {\rm JUMP}~=~$[m\!\to\!n]$
  (the case $m>n$ is impossible by Theorem~\ref{reduce}, page~\pageref{reduce}).
\end{conjecture}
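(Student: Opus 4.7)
The strategy is to exhibit, for each admissible pair $(m,n)$ with $0\leq m\leq n$, an explicit witnessing decision problem. Rather than attempting a single uniform construction, I would build a library of ``atomic'' examples at every level and then combine them by a pairing trick.

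\emph{Stage 1: a library of PR-trivial problems at every ParRec level.} I would construct, for every admissible $n$, a ParRec problem $R_n$ located at level $n$ whose PR restriction is recursive (JUMP~$[0\!\to\!n]$). The paper already supplies such witnesses for $n=1$ (HP), $n=2$ (\totalN\ or \fdomN) and $n=3$ (\recurN); each becomes trivial on $\IPR$ because $\vph_p$ is total and the convergence quantifiers collapse. For larger integer $n$ I would iterate: the index sets $\{e : W_e \in \mathcal{C}\}$ for $\mathcal{C}\in\{\Sigma_{n-1},\Pi_{n-1},\ldots\}$ yield $\sigN$- or $\piN$-complete ParRec problems whose PR version is all of $\IPR$ (recursive). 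For half-integer levels $n=k+\sfrac{1}{2}$ I would take the conjunction of a $\Sigma_k$-hard and an independent $\Pi_k$-hard problem from this family, mimicking the pattern of Theorem~\ref{t-eoz}: the result lies in $\deltaNPO\setminus(\sigN\cup\piN)$ and, since each conjunct trivialises on $\IPR$, so does their conjunction.

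\emph{Stage 2: same-level witnesses.} For every $m$ the paper already furnishes a witness with JUMP~$[m\!\to\!m]$: \ozN\ for $m=1$, \GeozN\ for $m=\sfrac{3}{2}$, \GinfzN\ for $m=2$, the injectivity/surjectivity problems, and so on. Let $Q_m$ denote one such problem, located at level $m$ in both settings.

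\emph{Stage 3: combination.} Given the target JUMP~$[m\!\to\!n]$ with $m<n$, pair $R_n$ and $Q_m$ via the standard bijection $e=\langle e_1,e_2\rangle$ and the two-trace coding $\vph_e(2k)=\vph_{e_1}(k)$, $\vph_e(2k+1)=\vph_{e_2}(k)$, made admissible by the \Smn\ theorem. The new problem $P(e)$ asks whether $R_n$ holds of the even trace of $\vph_e$ and $Q_m$ holds of its odd trace. The upper bound for the ParRec version is $\max(n,m)=n$; the upper bound for the PR version is $\max(0,m)=m$. Hardness follows in both directions: freezing $e_2$ at a fixed ``trivially satisfying'' $Q_m$-instance gives $R_n\reduces P$ (ParRec level at least $n$), while freezing $e_1$ at a PR index on which $R_n$ is arranged to hold gives $Q_m\reduces P$ in the PR restriction (PR level at least $m$). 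The case $m=n$ is already covered by Stage 2, and the case $m=0$ is Stage 1 itself.

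\emph{Expected main obstacle.} The delicate step is Stage 1 at half-integer levels $n\geq\sfrac{5}{2}$. One must simultaneously (i) witness membership in $\Delta_{\lceil n\rceil}\setminus(\Sigma_{\lceil n\rceil-1}\cup\Pi_{\lceil n\rceil-1})$ for the ParRec version, (ii) force the PR restriction to remain recursive, and (iii) verify that the two hardness reductions used to exclude $\Sigma_{\lceil n\rceil-1}$ and $\Pi_{\lceil n\rceil-1}$ only feed genuine ParRec indices into the problem, so that the collapse on $\IPR$ is not spoiled. A secondary subtlety in Stage 3 is ensuring that the combined problem does not accidentally drop below level $n$ in the ParRec setting; this is why $R_n$ is combined with a strictly easier $Q_m$ using the even/odd index split rather than a pointwise boolean combination of values, which could otherwise destroy the hardness witness.
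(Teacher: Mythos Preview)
This statement is presented in the paper as a \emph{conjecture}, not a theorem; the paper gives no proof and lists it explicitly among the open problems in the conclusion. So there is no paper proof to compare against, and your proposal must be read as an attempt to settle an open question.

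Your plan is a sensible line of attack, but Stage~2 contains a real gap that you do not flag. You assert that ``for every $m$ the paper already furnishes a witness with JUMP $[m\!\to\!m]$,'' but this is false: the paper's same-level examples (\ozN, \GkCodN\ for $k\ge 2$, \GinfzN, \GprsurjN, \GprbijN) reach only $m\le 2$, and there is no $Q_m$ in the paper for any $m\ge\sfrac{5}{2}$. Producing such $Q_m$ is not routine. A ParRec statement with prefix $\forall\exists\forall\cdots$ acquires an extra existential quantifier at each occurrence of a convergence test, and whether the resulting prefix collapses back to the PR level depends delicately on the shape of the matrix; this is precisely why \GeozN\ drops from $\piTwo$ to level $\sfrac{3}{2}$ under the PR restriction while \GinfzN\ stays put. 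Supplying a uniform family of same-level witnesses for all $m$ is essentially the content of the conjecture, and your plan defers it to a citation that does not exist.

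A secondary issue in Stage~3: you need $R_n$ to be \emph{always true} on $\IPR$, not merely recursive there; otherwise the conjunction $R_n\wedge Q_m$ collapses to the empty set on $\IPR$ and $P^{\rm PR}$ lands at level $0$ instead of $m$. Problems like \fdomN\ are always false on $\IPR$, so you would have to negate them first (which shifts $\sigN$ to $\piN$ and must be tracked). And for half-integer targets neither $R_n$ nor the combined $P$ is complete in any class, so your hardness step must separately exhibit reductions witnessing $P\notin\Sigma_{\lceil n\rceil-1}$ and $P\notin\Pi_{\lceil n\rceil-1}$ on the ParRec side, and the analogous pair at level $m$ on the PR side; this is more than the single ``freeze one coordinate'' move you describe.
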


If this conjecture is true, the undecidability discrepancy between a 
{\rm ParRec} problem and the corresponding {\rm PR} can be any
non-negative integer and any half-integer greater than~1. 

\subsection{Positive discrepancy: why?}
\label{reasons}
We have seen that some decision problems have DISCREP$>$0. There are
several reasons for this. For instance
\begin{enumerate}[(1)]
\item \label{Rtotal}
  {\em PR functions are total.}\\
  That is, $\forall p\forall x\exists t:T(p,x,t)$, using the notation of
  Definition~\ref{interval}, page~\ref{interval}.
  This the reason that applies to all problems with DISCREP$>0$
  that we studied so far.
\item  \label{Rbound}
  {\em There are recursive functions that grow ``too fast''.}\\
  There are recursive functions that are asymptotic upper bounds of 
  every PR function. As a consequence, there are decision problems that
  are trivial for PR functions but undecidable for ParRec functions.
  One example is $g(n)\defined A(n,n)$ where~$A$ is the 
  Ackermann function as defined in~\cite{calude} 
  or~\cite{hermes}\footnote{A similar function $2\!\stackrel{n}{\uparrow}\!n$ 
    where the Knuth superpower notation is used, see for instance 
    \cite{knuthCoping,MP80}.}; 
  The decision problem\footnote{Where ``$f(n)<g(n)$'' means that $f(n)$ is 
    defined and its value is less than~$g(n)$.} 
  ``given~$f$, $\exists n_0\forall n\geq n_0:f(n)<g(n)$?'' is trivial for 
  PR functions but undecidable for ParRec functions.
\item  \label{RzeroOne}
  {\em There are also ``small'' recursive functions that are not PR.}\\
  There are recursive functions bounded by a constant that are not PR.
  An example is the function:
  $$
  h(p)=
  \left\{
  \begin{array}{ll}
    0 & \spac\text{if $p\in\IPR$ and $\vph_p(p)\neq 0$}\\
    1 & \spac\text{otherwise}\Mdot
  \end{array}
  \right.
  $$  
  The function~$h(p)$ has codomain $\{0,1\}$, is recursive, and not PR. 
  However, the ParRec decision
  problem ``given~$e$, is $\vph_e=h$?'' is undecidable.
\item  \label{Rindex}
  {\em The PR index gives information about the function.}\\
  From the index~$p$ of a PR function~$\vph_p$ we can obtain some
  information about the function~$\vph_p$, for instance the
  PR-complexity of~$\vph_p$, see~\cite{hoyrup,hoyrup-rojas}.
  But no information about a ParRec function~$f$ can be obtained from
  a corresponding index (Rice Theorem, see for instance~\cite{Rogers,BBJ,davis}).
\end{enumerate}

We see that there are several very different reasons for a positive discrepancy.
Any definition of a class of total recursive functions has as a consequence
that some properties hold for all members of the class; for instance (1)-(4) 
hold for PR functions. These properties may be undecidable for ParRec functions
but trivial for PR functions. This fact leads us to conjecture the following.
\begin{conjecture}
  \label{discrep}
  There is no algorithm that, given a decision problem about a
  function, computes the discrepancy between the
  ParRec and the PR versions of that problem.
\end{conjecture}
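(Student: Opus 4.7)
The plan is to reduce the self-halting problem \SHP\ to the hypothetical algorithm and so obtain a contradiction. Suppose that there exists an algorithm $A$ which, given any finite description of a decision problem $P$ about functions, outputs the undecidability discrepancy ${\rm DISCREP}$ between the ParRec version of $P$ and its PR version.

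Given an arbitrary index $e$, I would define the parameterised decision problem $P_e$ as follows: for a function $\vph$ (given by its index $i$), $P_e(\vph)$ holds iff $\vph_e(e)\halts$ \emph{and} $\vph(0)=0$. Formally, the set of function indices satisfying $P_e$ is
\[
\{i : (\exists t:T(e,e,t)=0) \wedge (\exists t':T(i,0,t')=0 \wedge U(i,0,t')=0)\},
\]
a $\sigOne$ description uniformly computable from $e$ (in the notation of Theorem~\ref{Knft}, page~\pageref{Knft}). Hence the map sending $e$ to the description of $P_e$ is itself computable.

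Next I would analyse the discrepancy of $P_e$ in each case. If $\vph_e(e)\loops$ then $P_e$ is identically false, so both the PR and the ParRec versions of $P_e$ are the empty recursive set, located at level $0$; this gives ${\rm DISCREP}(P_e)=0$. If $\vph_e(e)\halts$ then $P_e(\vph)$ reduces to the question "$\vph(0)=0$", which is decidable for PR functions (level $0$) and $\sigOne$-complete for ParRec functions by line~1 of Figure~\ref{four} (page~\pageref{four}); this gives ${\rm DISCREP}(P_e)=1$. Applying $A$ to the description of $P_e$ would therefore decide whether $\vph_e(e)\halts$, contradicting the undecidability of \SHP.

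The step I expect to require most care is fixing a notion of "decision problem about a function" for which the algorithm $A$ is meaningful and in which the map $e\mapsto P_e$ is visibly computable. Any reasonable syntactic encoding (for instance, an arithmetic formula with one free function-index variable) works, since the construction of $P_e$ is a purely syntactic substitution of $e$ into a fixed template. The same remark makes it evident that both versions of every $P_e$ are located at a definite AH level (in fact both lie in $\{0,1\}$), so ${\rm DISCREP}(P_e)$ is always well defined; no subtlety involving the $\deltaN\setminus(\sigNO\cup\piNO)$ strata arises in the reduction.
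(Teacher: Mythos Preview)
The paper does not prove this statement: it is explicitly labelled a \emph{conjecture} and no argument is offered beyond the informal motivation in Section~\ref{reasons}. So there is no ``paper's own proof'' to compare against; your attempt is a proposed resolution of an open question in the paper.

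Your argument is essentially correct under the natural reading of the conjecture (an algorithm that, from a finite syntactic description of a problem, outputs the semantic quantity $\text{DISCREP}$). The parametrised family $P_e$ does the job: extensionally, $P_e$ is either the empty problem (when $\vph_e(e)\loops$) or the problem ``$\vph(0)=0$'' (when $\vph_e(e)\halts$), and in the two cases the discrepancy is $0$ and $1$ respectively, so reading off $\text{DISCREP}(P_e)$ decides SHP. The uniformity of $e\mapsto P_e$ is clear, and you are right that both versions of every $P_e$ are located at a definite level, so no half-integer subtleties arise.

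Two small points. First, the reference you want is not line~1 of Figure~\ref{four} (which concerns ``$\exists x:f(x)=1$'') but the row ``$f(0)=0$'' in Figure~\ref{c-summary}, which is exactly the problem appearing when $\vph_e(e)\halts$. Second, the paper never formally assigns level~$0$ to recursive problems in its definition of ``located'', though it clearly uses that convention in Figure~\ref{four}; your argument relies on this convention, which is harmless but worth making explicit. With those adjustments your reduction settles the conjecture as stated; the paper's hesitation seems to stem from not having fixed a precise input format for the hypothetical algorithm rather than from any genuine obstacle.
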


%% At first sight the answer seems obvious. Until now we had no

%% $[\recur\:\lra\:\recurPR]$

\section{Conclusions and open problems}
\label{concs}
The primitive recursive (PR) functions are an important subclass
of the recursive (total computable) functions.
Many interesting decision problems related to PR
functions are undecidable. In this work we studied the degree of
undecidability of a relatively large number of these problems.  
All these problems are either m-complete in the
corresponding class of the arithmetic hierarchy or belong to
the class $\deltaTwo\setminus(\sigOne\cup\piOne)$. Their degree of
undecidability was compared with the corresponding 
decision problems associated with partial recursive (ParRec) functions,
as exemplified in Figure~\ref{changes}, page~\pageref{changes}.

The exact conditions for the decidability of the general decision
problem ``$\exists \ov{x}:\;\fxh(f(\fxg(\ov{x})))=0?$, where~$\fxg$
and~$\fxh$ are fixed PR functions ($\fxg$ may have 
multiple outputs) and~$f$ is the instance of the problem (a PR function), 
have been established.

A more general situation was also studied:
the instance of the problem, a PR function~$f$, is
part of an arbitrary PR acyclic graph~$S$ (having~$f$ and fixed primitive 
recursive functions as nodes).  Let
the corresponding function be~$S(f,\ov{x})$; the question is
``$\exists x:\;S(f,\ov{x})=0?$''. A normal form for these acyclic
function graphs was obtained, namely 
$\fxh(\ov{x},f(\fxg(\ov{x}))$; in contrast with the
$\fxh(f(\fxg(\ov{x}))$ class mentioned above, we conjecture that there
is no {\em closed} necessary and sufficient condition for decidability.

Another generalisation was were briefly studied: problems in which 
the instance~$f$ can be compose with itself.

\bigskip

Of course, many problems remain open. Some of the more interesting are:
\begin{enumerate}[--]
\item Clarify the relation between PR and ParRec functions, namely
\begin{enumerate}[(i)]
 \item Given any decision problem about a function, relate
   the degree of undecidability in two cases:
   when the function is PR and when the function is ParRec.
   In Section~\ref{PR-ParRec} (Conjecture~\ref{discrep}, 
   page~\pageref{discrep}) we conjecture that this may be not possible.
 \item
   Is every undecidability discrepancy possible? Proof or disproof the
   Conjecture~\ref{distance} (page~\pageref{distance}).
   We saw decision problems for which the undecidability discrepancy is~0, 
   $\sfrac{1}{2}$, 1, 2, and~3,
   see Section~\ref{PR-ParRec} (page~\pageref{PR-ParRec}) and
   Figure~\ref{discreps} (page~\pageref{discreps}).
 \item Study decision problems whose positive undecidability discrepancy
   is due to other reasons (besides the fact that every PR function is total); 
   see the discussion in Section~\ref{PR-ParRec}, page~\pageref{PR-ParRec}.
\end{enumerate}
\item Find closed decidability conditions for the general
  system~$S(f,\ov{x})$ described above. 
\item [--] Study in general the decidability problems in which several
  occurrences of the instance~$f$ may occur; this includes expressions in 
  which~$f$ composes with itself.
\end{enumerate}

\section*{Acknowledgements}
{\footnotesize
The author wants to thank Noson Yanofsky of CUNY University
for his helpful comments and for encouraging the pursuit of the research 
on primitive recursive functions.  
}

\nocite{shoenfieldBOOK}

\bibliographystyle{plain}
\bibliography{/Users/acm/RESEARCH/reversible/pr/bib.bib}

\end{document}